\newcommand{\E}{\ensuremath{\mathbb{E}}}
\newcommand{\BWT}{\mathsf{BWT}}
\newcommand{\MTF}{\mathsf{MTF}}
\newcommand{\RLE}{\mathsf{RLE}}
\newcommand{\HRLX}{\texttt{RLX}}
\newcommand{\cA}{\mathcal{A}}
\newcommand{\cD}{\mathcal{D}}
\newcommand{\cE}{\mathcal{E}}
\newcommand{\cI}{\mathcal{I}}
\newcommand{\cL}{\mathcal{L}}
\newcommand{\cM}{\mathcal{M}}
\newcommand{\cN}{\mathcal{N}}
\newcommand{\cP}{\mathcal{P}}
\newcommand{\cR}{\mathcal{R}}
\newcommand{\cS}{\mathcal{S}}
\newcommand{\cT}{\mathcal{T}}
\newcommand{\cU}{\mathcal{U}}
\newcommand{\Pat}{P\v{a}tra\c{s}cu}
\newcommand{\Var}{\operatorname{{\bf Var}}}
\renewcommand{\Pr}{\operatorname{{\bf Pr}}}
\newcommand{\poly}{\mathrm{poly}}
\newcommand{\N}{\mathbbm N}
\newcommand{\eps}{\varepsilon}
\renewcommand{\epsilon}{\eps}
\newcommand{\wh}[1]{\widehat{#1}}
\newcommand{\zo}{\{0,1\}}
\renewcommand{\hat}{\wh}
\newcommand{\kap}{\kappa}
\newcommand{\mnote}[1]{ \marginpar{\tiny\bf
		\begin{minipage}[t]{0.5in}
			\raggedright #1
\end{minipage}}}
\newtheorem{definition}{Definition}
\newtheorem{theorem}{Theorem}
\newtheorem{lemma}{Lemma}
\newtheorem{proposition}{Proposition}
\newtheorem{claim}{Claim}
\newtheorem{fact}{Fact}
\newtheoremstyle{restate}{}{}{\itshape}{}{\bfseries}{~(restate).}{.5em}{\thmnote{#3}}
\theoremstyle{restate}
\def\colorful{1}
\newcommand{\blue}[1]{{{\color{blue}#1}}}
\newcommand{\blue}[1]{{{#1}}}
\title{Local Decodability of the Burrows-Wheeler Transform}
\date{}
\author{
	Sandip Sinha\thanks{Supported by NSF awards CCF-1563155, CCF-1420349, CCF-1617955, 
		CCF-1740833, CCF-1421161, CCF-1714818 and Simons Foundation (\#491119).}\\
	Columbia University\\
	sandip@cs.columbia.edu
	\and
	Omri Weinstein\\
	Columbia University\\
	omri@cs.columbia.edu
}
\begin{document}

\maketitle


\thispagestyle{empty}

\begin{abstract}
The Burrows-Wheeler Transform (BWT) is among the 
most influential discoveries in 
text compression and DNA storage. 
It is a reversible preprocessing step that rearranges 
an $n$-letter string into 
runs of identical characters (by exploiting context regularities), 
resulting in highly compressible strings, and is the basis of the 
\texttt{bzip}  compression program. 
Alas, the decoding process of BWT is inherently sequential and requires $\Omega(n)$ 
time even to retrieve a \emph{single} character. 

We study the succinct data structure problem of locally decoding short substrings of a given text under its 
\emph{compressed} BWT, 
i.e., with small additive redundancy $r$ over the \emph{Move-To-Front}  (\texttt{bzip}) compression.  
The celebrated BWT-based FM-index  
(FOCS '00), 
as well as other related literature, yield 
a trade-off of $r=\tilde{O}(n/\sqrt{t})$ bits, when a single 
character is to be decoded in $O(t)$ time. 
We give a near-quadratic improvement $r=\tilde{O}(n\lg(t)/t)$. 
As a by-product, we obtain an \emph{exponential} (in $t$) improvement on the redundancy of the FM-index for 
counting pattern-matches on compressed text. 
In the 
interesting regime where the text compresses to $n^{1-o(1)}$ bits, these results provide an 
$\exp(t)$ \emph{overall} space reduction.  
For the local decoding problem of BWT, we also prove an $\Omega(n/t^2)$ cell-probe lower bound 
for ``symmetric" data structures.    

We achieve our main result by designing a compressed partial-sums (Rank) data structure over BWT.     
The key component is a \emph{locally-decodable} Move-to-Front (MTF) code: 
with only $O(1)$ extra bits per block of length $n^{\Omega(1)}$, 
the decoding time of a single character
can be decreased from $\Omega(n)$ to $O(\lg n)$. 
This result is of independent interest in algorithmic information theory. 
\end{abstract}

\newpage

\setcounter{page}{1}

\section{Introduction} \label{sec_intro}

Exploiting  text regularities for data compression is an 
enterprise that received a tremendous amount of attention in the 
past several decades, driven by large-scale digital storage. 
Motivated by this question, in 1994 Burrows and Wheeler  \cite{BW} proposed a 
preprocessing step 
that `tends' to rearrange strings  into more `compressible form': 
Given $x\in \Sigma^n$, generate the $n\times n$ matrix whose rows are all \emph{cyclic shifts} of $x$, sort these rows 
lexicographically, and output the last \emph{column} $L$  
of the matrix. The string $L$ is called the Burrows-Wheeler Transform (BWT) of $x$.  
Note that $L$ is a \emph{permutation} 
of the original string $x$, as each row of the sorted matrix is still a unique cyclic shift. 
The main observation is that, in the permuted string $L$, 
characters with identical \emph{context}\footnote{The $k$-\emph{context} of a character $x_i$ in $x$ 
is the $k$ consecutive characters that precede it.} 
appear consecutively, hence if 
individual characters in the original text $x$ tend to be predicted by a reasonably small context 
(e.g., as in English texts or bioinformatics \cite{Adjeroh08book}), 
then the string $L := \BWT(x)$ 
will exhibit local similarity. That is, identical symbols will tend to recur at \emph{close vicinity}. 
This property suggests a natural way to compress $L$, using a relative 
\emph{recency coding} method, whereby each symbol in $L$ is replaced by the \emph{number of distinct symbols that appeared since its last occurrence}. Indeed, since in $L$ symbols have local similarity, i.e.,  
tend to recur at close vicinity, we expect the output string to consist mainly of \emph{small integers} (and in particular, $0$-runs) and hence be much cheaper to describe. 
This relative encoding, known as the \emph{Move-to-Front} transform  \cite{Bentley}, 
followed by 
run-length coding of $0$-runs and an arithmetic coding stage  (henceforth denoted $\HRLX(L)$),  
forms the basis for the widespread \texttt{bzip2} program \cite{bzip2}. 
The $\HRLX$ (``bzip") compression benchmark was justified both theoretically and empirically 
\cite{Manzini99,Kaplan, Effros02}, where among other properties, 
it was shown to approximately converge to \emph{any finite-order} empirical entropy $H_k$  
(see Section \ref{sec_RLX_comparison} for the 
formal definitions and Appendix \ref{sec_app_RLX} for comparative analysis against other compressors).  


Remarkably, 
the Burrows-Wheeler transform of a string 
is \emph{invertible}. 
The crux of the decoding process is the fact that the transform preserves the \emph{order} of occurrences (a.k.a \emph{rank}) of 
identical symbols in both the first column and last column ($L$) of the BWT matrix. 
This crucial fact 
facilitates an iterative decoding process, whereby, given the decoded position of $x_{i+1}$ in $L$, 
one can decode the 
previous character $x_{i}$ using $O(|\Sigma|)$ \textsc{Rank}\footnote{The \textsc{Rank} of a character 
$x_i \in x$ is the number of occurrences of identical symbols in $x$ that precede it : $\textsc{Rank}(x,i) := |\{ j \leq i \; :  x_j = x_i\}|$.  
See also Section \ref{sec_prelims}.}
queries \emph{to $L$} (see Section \ref{sec_decoding_BWT} below for the formal decoding algorithm). 
Alas, this decoding process is inherently \emph{sequential}, and therefore requires $\Omega(n)$ time to decode 
even a single coordinate of $x$ \cite{Kaplan_lower_bound,FM_lightweight_suffix_array}. 
In fact, no sub-linear decoding algorithm for $x_i$ is known even if $L$ is stored in uncompressed form. 

This is an obvious drawback of the Burrows-Wheeler transform, as many storage applications, such as 
genetic sequencing and alignment, need local searching capabilities inside the compressed database \cite{NavarroSurvey07}. 
For example, if $x^{1}, x^{2}, \ldots, x^{m} \in \Sigma^n $ is a collection of $m$  
separate files with very similar contexts, e.g., DNA sequences, 
then we expect  $|\HRLX(L_{x^{1}\circ x^{2} \circ\ldots \circ x^{m}})| \ll \sum_{j=1}^m |\HRLX(L_{x^{j}})|$, 
but jointly compressing the files would have a major drawback -- When the application needs to retrieve only a single file $x^{j}$, 
it will need to spend $\Omega(n \cdot m)$ I/Os (instead of $O(n)$) to invert $\BWT(x^{1}\circ x^{2} \circ\ldots \circ x^{m})$. 
The main question we address in this paper is, whether small \emph{additive} space redundancy (over the compressed 
BWT  string 
($|\HRLX(\BWT(x))|$) can be used to speed up the decoding time of a single character (or a short substring) of the original text, in the word-RAM model: 
\begin{quote} \label{problem_1}
\bf Problem 1.\rm
\emph{ What is the least amount of space redundancy $r=r(t)$ 
needed beyond the compressed $\BWT(x)$ 
string, so that each coordinate $x_i$ can be decoded in time $t$ ? } 
\end{quote}

Since $|\HRLX(\BWT(x))|$ approaches \emph{any} finite-order empirical entropy $H_k(x)$ (see Section \ref{sec_RLX_comparison}), 
this data structure problem can be viewed as the \emph{succinct dictionary}\footnote{For a string $x \in \Sigma^n$ and an index $i \in [n]$, \textsc{Dictionary}$(x, i)$ returns $x_i$, the $i^{th}$ character in $x$.}  
problem under \emph{infinite-order} entropy 
space benchmark. A long line of work has been devoted to succinct dictionaries in the \emph{context-free} regime, i.e., when 
the information theoretic space benchmark is the \emph{zeroth-order} empirical entropy $H_0(x) 
 := \sum_{c\in \Sigma} n_c \lg \frac{n}{n_c}$ of marginal frequencies (e.g., \cite{Pat,Pagh,GGGRR} to mention a few). 
However, as discussed below, 
much less is known about the best possible trade-off under higher-order entropy benchmarks which take context 
(i.e., correlation between $x_i$'s) into account.  

A related but incomparable data structure problem to Problem $1$, 
is that of compressed pattern-matching in strings, 
a.k.a \emph{full-text indexing}, where the goal is to succinctly represent a text in \emph{compressed} form as before, 
so that all $occ(p)$ occurrences of a 
pattern $p\in \Sigma^\ell$ in $x$ can be reported, or more modestly, counted, 
in near-optimal time $O(\ell + occ(p))$. The celebrated BWT-based 
compressed text index of Ferragina and Manzini  \cite{FM}, commonly known as the \emph{FM-index}, 
achieves the latter task using $|\HRLX(L)| + O(n/\lg n)$ 
bits of space\footnote{For \emph{reporting} queries, \cite{FM} requires significantly larger space 
$|\HRLX(L)|\cdot \lg^{\eps}n + O(n/\lg^{1 - \eps} n)$ for some $\epsilon \in (0,1)$.}, 
and is one of the most prominent tools in pattern matching for bioinformatics and text applications 
(see \cite{NavarroSurvey07} and references therein). 
The core of the FM-index is a compressed data structure that computes \textsc{Rank} queries over the (compressed) 
BWT string $L$ in constant time and 
redundancy $r=O(n/\lg n)$ bits, and more generally, in time $t'$ and redundancy $r = \Theta(n/(t'  \lg n))$. 
Their data structure, combined with simple ``marking index"\footnote{I.e., recording shortcut pointers to the 
location of $x_i$ in $L$ for every block $i \in [j\cdot t], \; j\in [n/t]$, see Section \ref{sec_technical_overview}.} 
of the BWT permutation on blocks of length $t$, 
yields a solution to  Problem 1  with overall redundancy $\tilde{O}\left(n/t + n/t'\right)$ and time $O(t \cdot t')$ 
(as simulating each sequential step of the BWT decoder requires $O(1)$ rank queries, 
and there are  $t$ coordinates per block). In other words, if the desired decoding time of a 
coordinate is $t$,  \cite{FM} gives redundancy $r=\tilde{O}(n/\sqrt{t})$ over $\HRLX$. 
In fact, even for \emph{randomized} succinct dictionaries approaching $H_k(x)$ bits of space, the best 
known trade-off is $r=\tilde{\Theta}(n/\sqrt{t})$: 
Dutta et al. \cite{Dutta+} gave a randomized data structure with space $(1+\eps)\tilde{H_k}(x)$\footnote{Here, $\tilde{H_k}$ 
denotes the Lempel-Ziv \cite{LZ78} codeword length $|\mathsf{LZ78}(x)|$, i.e., $\tilde{H_k}(x) \approx H_k(x) + \Theta(n/\lg n)$.}, 
and expected decoding time $\Theta(1/\eps^2)$ to retrieve each $x_i$. 
Our first main result is a near-quadratic improvement over this trade-off for Problem $1$: 

\begin{theorem}[Local Decoding of BWT, Informal]  \label{thm_local_bwt}    
For any $t$ and any string $x\in \Sigma^n$, 
 there is a succinct data structure  
that stores 
$ |\HRLX(\BWT(x))| + \tilde{O}\left(\frac{n \lg t}{t}\right) + n^{0.9}$ 
bits of space, so that each coordinate $x_i$ can be retrieved 
in $O(t)$ time, in the word-RAM model with word size $w=\Theta(\lg n)$. 
\end{theorem}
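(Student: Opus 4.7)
The plan is to combine three ingredients that together address the three sources of sequentiality in $\BWT$ decoding. The first is a classical \emph{marking index}: partition $[n]$ into blocks of length $\Theta(t)$ and, for each block boundary $i = j \cdot t$, store a pointer of $O(\lg n)$ bits to the corresponding row in the BWT matrix (i.e., the position in $L=\BWT(x)$ that unwinds to $x_i$). This uses $\tilde{O}(n/t)$ bits of redundancy and reduces decoding $x_i$ to jumping to the nearest sampled row and then running the inverse-BWT iterator (the LF-mapping) backward for at most $t$ steps.

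Each backward step of the LF-mapping is one Rank query on $L$, so the second ingredient is a \emph{compressed partial-sums data structure over $L$} with redundancy $\tilde{O}(n \lg t / t)$ over $|\HRLX(L)|$ and $O(\lg t)$ amortized time per query. I would implement this by chunking $L$ into superblocks of length $\Theta(t / \lg t)$, storing for each superblock a vector of cumulative symbol counts (using $\tilde{O}(n \lg t / t)$ total bits), and answering a query by adding the stored prefix count to the count within a single superblock. The within-superblock count is computed by actually decoding the relevant $O(t/\lg t)$ symbols of $L$, and this is where the third ingredient is needed.

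The crucial third ingredient is a \emph{locally-decodable MTF code}, which the paper highlights as its main technical contribution: with only $O(1)$ bits of side information per block of length $n^{\Omega(1)}$ (total redundancy $n^{0.9}$), any single symbol of the $\MTF$ output can be inverted to the corresponding symbol of $L$ in $O(\lg n)$ time. Plugging this access primitive into the superblock scan yields a compressed Rank query in $O(\lg t)$ time, and batching the $t$ consecutive LF-mapping steps across a single sweep over the relevant $O(t)$ positions of $L$ keeps the overall decoding time at $O(t)$ in the word-RAM model with $w = \Theta(\lg n)$. Summing redundancies gives $|\HRLX(\BWT(x))| + \tilde{O}(n/t) + \tilde{O}(n \lg t / t) + n^{0.9}$, matching the bound.

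The main obstacle is clearly the locally-decodable MTF itself: a priori, the MTF state after processing a prefix is an element of $S_\Sigma$ that requires $\Theta(|\Sigma|\lg|\Sigma|)$ bits to describe at each checkpoint, which would blow up the redundancy far beyond the target $n^{0.9}$. The trick must be to exploit the fact that whenever $L$ exhibits local similarity (the regime in which $\HRLX$ compresses well to begin with), the MTF state at a checkpoint is essentially determined --- up to an $O(1)$-bit correction --- by the identities of the few most-recently-seen distinct symbols, which can themselves be recovered by scanning a short window backward from the preceding checkpoint. Making this quantitative and robust to pathological blocks, where the MTF state can jump unpredictably, is where the bulk of the technical work will lie; I would expect the construction to combine a synchronization scheme with periodic resets and a "heavy-hitter" style accounting that bounds the total number of residual correction bits by $O(n / n^{\Omega(1)})$.
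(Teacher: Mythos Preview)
Your high-level decomposition (marking index $+$ compressed \textsc{Rank} over $L$ $+$ locally-decodable MTF) matches the paper's, but the two substantive ingredients are off in ways that break the argument.

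\textbf{The time analysis does not close.} Your \textsc{Rank} structure answers a query by scanning a superblock of $\Theta(t/\lg t)$ symbols of $L$, decoding each one via the MTF primitive in $O(\lg n)$ time; that is already $\Theta(t\lg n/\lg t)$ per \textsc{Rank} query, not $O(\lg t)$. You then appeal to ``batching the $t$ consecutive LF-mapping steps across a single sweep'', but this cannot work: the $t$ positions of $L$ visited by the LF-mapping are \emph{not} contiguous and, crucially, each one is only determined after the previous \textsc{Rank} query returns. There is no sweep to amortize over, so your total time is at least quadratic in $t$. The paper avoids this by making \textsc{Rank} itself fast: it builds a \textsc{Rank} data structure with an \emph{exponential} tradeoff (time $O(t')$, redundancy $\approx n\lg n/2^{t'}$ over $|\HRLX(L)|$), then sets $t'=\Theta(\lg t)$ and the marking-block length to $T=\Theta(t/t')=\Theta(t/\lg t)$, so that $T$ sequential LF steps $\times\,O(t')$ per \textsc{Rank} $=O(t)$, with both redundancy terms $n\lg n/2^{t'}$ and $n\lg n/T$ landing at $\tilde{O}(n\lg t/t)$.

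\textbf{The MTF mechanism is different from what you conjecture.} The $O(1)$ bits per block do not come from a synchronization/heavy-hitter scheme exploiting local similarity of $L$; the construction is oblivious to whether $L$ compresses well. The key observation is purely algebraic: the permutation $\pi_{i,j}\in\cS_{|\Sigma|}$ that carries the MTF stack $S_i$ to $S_j$ depends only on the encoded substring $\MTF(L)[i{+}1:j]$, \emph{not} on $S_i$. Hence $\pi_{i,j}$ is a legitimate augmented value for a $B$-ary tree over $\MTF(L)$, and the values compose along children. Traversing root-to-leaf while composing these permutations recovers the exact stack state $S_{i-1}$ in $O(\log_B n)$ time; an analogous augmented vector gives \textsc{Rank} directly, without any per-symbol scan. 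The $O(1)$ extra bits per block arise from compressing each such aB-tree to $H_0$ of its leaves plus $2$ bits via \Pat's theorem; the $n^{0.9}$ term is just the shared lookup tables. This is the step you should replace your third ingredient with, and once you have it, the superblock-scan layer becomes unnecessary.
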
 

Our data structure directly implies that a contiguous substring of size $\ell$ of $x$ can be decoded in time $O(t + \ell\cdot \lg t)$ 
without increasing the space redundancy. It is noteworthy that achieving a linear trade-off as above between time and 
redundancy with respect to \emph{zeroth order} entropy ($H_0(x)$) 
is trivial, by dividing $x$ into $n/t$ blocks of length $t$, and compressing each block using Huffman (or arithmetic) 
codes, as this solution would lose at most $1$ bit per block. 
This solution completely fails with 
respect to higher-order entropy benchmarks $H_k$ (in fact, even against $H_1$), since in the 
presence of contexts, the loss in compressing each block \emph{separately}  can be arbitrarily large  
(e.g., $H_0\left((ab)^{n/2}\right) = n$ but $H_1\left((ab)^{n/2}\right) = \lg n$).  
This example illustrates the qualitative difference between the \textsc{Dictionary} problem in the independent 
vs. correlated setting. In fact, we prove a complementary cell-probe lower bound of $r \geq \Omega(n/t^2)$
on Problem $1$ for ``symmetric" data structures, which  
further decode the dispositions of any $x_i$ in $L=\BWT(x)$ and vice versa, in time $O(t)$ 
(see Theorem \ref{thm_LB} below). 
While removing this (natural) restriction remains an interesting open question, 
this result provides a significant 
first step in understanding the cell-probe complexity of Problem $1$; more on this below.  

Our second main result, which is a by-product of our central data structure, is an \emph{exponential} 
improvement (in $t$) on the redundancy of the FM-index for compressed pattern-matching counting: 

\begin{theorem}
\label{thm_exponential_FM} 
There is a small constant $\delta > 0$ such that for any  
$x\in \Sigma^n$  and any $t \leq \delta \lg n$, 
there is a compressed index using 
$|\HRLX(\BWT(x))| + n \lg n/2^t + n^{1 - \Omega_\delta(1)}$ 
bits of space, counting the number of occurrences of any pattern $p\in \Sigma^{\ell}$
in time $O(t\ell)$.
\end{theorem}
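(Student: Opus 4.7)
The plan is to combine the classical FM-index backward-search framework of \cite{FM} with the new compressed \textsc{Rank} data structure over $\BWT(x)$ that forms the technical heart of this paper. First, I recall the reduction: to count the occurrences of a pattern $p \in \Sigma^\ell$, the backward-search algorithm maintains, for $i = \ell, \ell-1, \ldots, 1$, a suffix-array interval $[sp_i, ep_i]$ corresponding to the suffix $p[i..\ell]$, and updates $[sp_{i+1}, ep_{i+1}] \mapsto [sp_i, ep_i]$ using only $O(1)$ \textsc{Rank} queries on $L := \BWT(x)$ with character $c = p[i]$, together with $O(1)$ lookups into the cumulative $O(|\Sigma|\lg n)$-bit $C$-table. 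The final count is $ep_1 - sp_1 + 1$. Consequently, the total query time is $O(\ell \cdot t')$ where $t'$ is the time per \textsc{Rank} query on $L$, and the only space overhead beyond the compressed $L$ itself is the redundancy of the \textsc{Rank} structure --- no sampled suffix array or marking index is needed for counting.

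Second, I plug in the new compressed \textsc{Rank} data structure developed in the body of the paper, whose promised guarantee is to answer $\textsc{Rank}(L, j, c)$ queries in time $t' = O(t)$ using $|\HRLX(L)| + O(n\lg n / 2^t) + n^{1-\Omega_\delta(1)}$ bits. Substituting this into the backward-search framework directly yields query time $O(t\ell)$ and the claimed total space. The restriction $t \leq \delta \lg n$ simply ensures that the ``exponentially shrinking'' term $n\lg n/2^t$ remains at worst comparable to the low-order additive error $n^{1-\Omega_\delta(1)}$ that is inherent to the \textsc{Rank} construction; for larger $t$, the stated bound would be dominated by that additive term.

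The main obstacle is therefore entirely encapsulated in the \textsc{Rank} data structure rather than in its application here. Achieving redundancy $O(n\lg n/2^t)$ --- an exponential improvement over the $\tilde{O}(n/t)$ of the original FM-index --- is exactly where the locally-decodable MTF code advertised in the abstract must be exploited: one partitions $L$ into blocks of length roughly $2^{\Theta(t)}$, stores only coarse partial-sum sketches across blocks (contributing the $n\lg n / 2^t$ term), and invokes the $O(\lg n)$-time local MTF decoder to compute the within-block correction in $O(t)$ time at essentially no further redundancy. Crucially, because compressed pattern-matching requires only \textsc{Rank} queries on $L$ (and no inverse-BWT computation), this construction bypasses the $\Omega(n/t)$ bottleneck imposed by the ``marking index'' in the local-decoding setting of Theorem \ref{thm_local_bwt} --- which is precisely why the exponential-in-$t$ improvement manifests for pattern counting and not for the dictionary problem of Theorem \ref{thm_local_bwt}.
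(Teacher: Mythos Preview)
Your proposal is correct and matches the paper's approach exactly: Theorem~\ref{thm_exponential_FM} is derived as a direct corollary of the compressed \textsc{Rank} data structure (Theorem~\ref{thm_exp_tradeoff_rk_L}) via the standard FM backward-search reduction, which needs only $O(\ell)$ \textsc{Rank} queries on $L$. Your third paragraph's sketch of the \textsc{Rank} construction is extraneous to this theorem and not quite how the paper does it (the actual mechanism is compressed augmented aB-trees \`a la \Pat, not simple block sketches), but this does not affect the correctness of your argument for Theorem~\ref{thm_exponential_FM}.
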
 

To the best of our knowledge, Theorem \ref{thm_exponential_FM} provides the first compressed text index 
for pattern-matching counting queries, that can provably go below the $\Omega(n/\lg n)$ space barrier 
while maintaining near-optimal query time. 
In particular, it implies that 
at the modest increase of query time by a $O(\lg\lg n)$ factor, 
\emph{any} $n/\poly\lg n$ redundancy is achievable. In the interesting 
setting of compressed pattern-matching,  
where the text compresses to $n^{1-o(1)}$ bits (e.g., $H_k(x)  
=n/\lg^{O(1)} n$), this 
result provides an exponential (in $t$) \emph{overall} (i.e., multiplicative)  space reduction over the FM-index. 
Compressed string-matching in the the $o(1)$ per-bit entropy regime was advocated in the 
seminal work of Farach and Thorup, see \cite{FT} and references therein. 
For \emph{reporting} queries, we obtain a 
quadratic improvement over the FM-index, 
similar to Theorem \ref{thm_local_bwt} (see Section \ref{sec_report_pattern_matching}). 

The main ingredient of both Theorem \ref{thm_local_bwt} and \ref{thm_exponential_FM} is a 
new succinct data structure for computing \textsc{Rank} queries over the compressed BWT string $L=\BWT(x)$, with 
\emph{exponentially} small redundancy $r \approx n/2^t$ with respect to $|\HRLX(L)|$ 
(see Theorem \ref{thm_exp_tradeoff_rk_L} below). 
Our data structure builds upon and is inspired by the work of \Pat  \cite{Pat}, who showed a similar
 exponential trade-off for the \textsc{Rank} problem, 
with respect to the \emph{zeroth order} entropy $H_0(L)$, i.e., in the \emph{context-free} setting. 
In that sense, our work can be viewed as a certain higher-order entropy analogue of \cite{Pat}. 

The most challenging part of our data structure is dealing with the \emph{Move-to-Front} (MTF) encoding of $L$.   
This adaptive coding method comes at a substantial price: 
decoding the $i$th character from its encoding $\MTF(x)_i$ requires the decoder to know the current 
``state" $S_i \in \cS_{|\Sigma|}$ of the encoder, namely, the precise \emph{order} of recently occurring symbols, 
which itself depends on the \emph{entire history} $x_{<i}$. 
This feature of the MTF transform, that the codebook itself is \emph{dynamic},  
is a qualitative difference from other compression schemes such as Huffman coding or (non-adaptive) arithmetic codes, 
in which the codebook is \emph{fixed}. In fact, in that sense the MTF transform is conceptually closer to  
the BWT transform itself than to Huffman or arithmetic codes, since in both transforms, decoding the $i$th character is a sequential function 
of the decoded 
values of previous characters. 
Fortunately, it turns out that MTF has a certain  local 
property that can be leveraged during preprocessing time, 
leading to the following key result: 

\begin{theorem}[Locally-decodable MTF, Informal] \label{thm_local_mtf} 
For any string $x\in \Sigma^n$, there is a succinct data structure that encodes $x$ using at most 
$H_0(\MTF(x)) + \frac{n}{\left(\lg(n)/t \right)^{t}} + n^{0.9}$
bits of space,  
such that $x_i$ can be decoded in time $O(t)$. 
Moreover, 
it supports 
\textsc{Rank} queries with the same parameters. 
\end{theorem}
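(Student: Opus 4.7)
The plan is to build the data structure in three interlocking layers. First, I would encode $y := \MTF(x)$ via a \Pat-style ``Succincter'' representation, extended to the alphabet $\Sigma$, so that any $y_i$ can be read in $O(t)$ time using $H_0(\MTF(x)) + n/(\lg(n)/t)^{t}$ bits of space, and Rank queries over $y$ can be answered with the same parameters. This reduces the main task to reconstructing, for each query position $i$, the MTF state (a permutation of $\Sigma$) at time $i$, since then $x_i = \mathrm{state}_i[y_i]$, and a Rank query on $x$ reduces to a few Rank queries on $y$ together with knowledge of the state.

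Second, I would partition $[n]$ into blocks of size $B = (\lg(n)/t)^{t}$ and store at each block boundary the explicit MTF state. Assuming $|\Sigma|$ is constant (or polylogarithmic), this costs $O(|\Sigma|\lg|\Sigma| \cdot n/B) = \tilde{O}(n/(\lg(n)/t)^{t})$ bits, fitting the redundancy budget. A query then locates the nearest preceding block boundary and must evolve the MTF state up to $B$ positions forward within $O(t)$ time total.

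The technical crux is this last step: naive sequential evolution would cost $\Omega(B \cdot t)$ time, which is much too slow. I would resolve it by pre-storing a global transition table indexed by (state, micro-chunk of $y$), where micro-chunks have length $\Theta(\lg n / \lg|\Sigma|)$. The table has $n^{O(1)}$ entries and packs into $n^{0.9}$ bits of shared storage, absorbed by the additive $n^{0.9}$ term. Inside each block I then insert a finer layer of \emph{sub-snapshots}, exploiting the fact that the MTF state transition induced by any sub-block is fully parameterized by the (at most $|\Sigma|$) most recently occurring distinct symbols in it; these sub-snapshots are packed using \Pat's spillover encoding with near-zero amortized cost. Tuning the block and sub-block sizes so that each query traverses only $O(1)$ micro-chunks, each fetched in $O(t)$ time via the Succincter on $y$, yields the claimed $O(t)$ query time.

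Finally, to obtain Rank queries on $x$, I would augment each block-boundary snapshot with a length-$|\Sigma|$ prefix-count vector of characters of $x$, at $O(|\Sigma|\lg n)$ extra bits per boundary; a Rank query then reduces to this stored count plus an in-block count obtained by augmenting the transition table with per-chunk occurrence vectors. The main obstacle throughout will be verifying that \Pat's spillover can be integrated compatibly with the MTF state data so that the sub-snapshot overhead stays below $n/(\lg(n)/t)^{t}$. This requires careful use of the structural property that the MTF state at any position is fully determined by a short ``recency list'' of distinct symbols encountered just before it, rather than by the entire prefix $x_{<i}$.
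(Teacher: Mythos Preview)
Your proposal has a genuine gap in the time analysis. You need to evolve the MTF state across a block of size $B = (\lg(n)/t)^t$ in $O(t)$ time, and your two-level scheme (block boundaries plus ``sub-snapshots'') cannot do this within the space budget. If sub-snapshots are stored at every micro-chunk boundary---there are $\Theta(n/\lg n)$ of them---you spend $\Omega(n/\lg n)$ bits on states alone, which already exceeds the $n/(\lg(n)/t)^t$ redundancy for any $t \geq 2$. Your claim that these snapshots ``pack via spillover with near-zero cost'' is unjustified: each MTF state is a permutation of $\Sigma$ carrying $\lg(|\Sigma|!)$ bits, and there is no reason these are redundant given the already-encoded $y$. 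Conversely, storing fewer snapshots forces the query to traverse $\omega(1)$ micro-chunks, each costing $\Theta(t)$ time to fetch from the Succincter, blowing the time bound. The paper resolves this with a $t$-level structure rather than a two-level one: an augmented $B$-ary tree of depth $t$ and branching factor $\approx \lg(n)/t$ over each block of $m = \MTF(x)$. The key idea---which you gesture at but do not exploit structurally---is that the \emph{permutation} $\pi_{i,j}$ taking the stack before a sub-array $m[i{+}1:j]$ to the stack after it is a \emph{local} function of $m[i{+}1:j]$ alone, independent of the true starting stack (Proposition~\ref{perm independent of stack}). This makes $\pi_{\cT_v}$ a legitimate aB-tree label, and composable: the query composes $O(t)$ permutations along the root-to-leaf path to reconstruct $S_{i-1}$. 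Because the labels are local, the tree compresses via Theorem~\ref{thm_pat_ab_compress} to the entropy constraint plus $2$ bits per block, with the $|\Sigma|!$-sized permutation alphabet absorbed into the shared lookup tables.

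There is also a secondary error in your Rank reduction. The early claim that ``a Rank query on $x$ reduces to a few Rank queries on $y$ together with knowledge of the state'' is false: a fixed character $c \in \Sigma$ maps to \emph{different} MTF symbols at different positions, so $rk_x(c,i)$ is not recoverable from any bounded number of $rk_y(\cdot,i)$ values plus the single state at $i$. You later switch to per-block prefix counts plus in-block counting via transition tables, which has the right shape but inherits the same time problem above. The paper's fix is to augment each aB-tree node with a rank vector $\varphi_{rk}(v)$ computed \emph{relative to the identity stack at the start of $v$'s sub-array}; the query then uses the composed permutations to re-index into the correct component of each child's rank vector as it descends (Equation~\eqref{eqn_rank_update_rule}). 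This is again local and composable, so it compresses for free alongside $\varphi_\pi$ and the entropy constraint $\varphi_0$.
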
 
The additive $n^{0.9}$ term stems from the storage space of certain look-up tables, which are \emph{shared} across 
$n / r$ blocks of size $r = \left(\lg(n)/t \right)^{t}$. 
Hence these tables occupy $o(1)$ bits per block in an amortized sense. Theorem \ref{thm_local_mtf} therefore has the surprising corollary that,
with only $O(1)$ bits of redundancy per block (even with respect to the $\MTF$ code \emph{followed} by arithmetic coding), 
decoding time can be reduced from $\Omega(n)$ to $O(\lg n)$.

\smallskip 

\paragraph{Techniques.}
Our techniques are quite different from those used in previous compressed text indexing literature, 
and in particular from the work of \cite{FM}.  At a high-level, our main result uses a combination of succinct 
Predecessor search (under some appropriate encoding) along with 
ad-hoc design of two ``labeled" B-tree data structures, known as \emph{augmented aB-trees} (see Section \ref{sec_prelims_aBtrees} below). We then use a theorem of \Pat  \cite{Pat} to \emph{compress} these aB-trees, resulting in succinct representations while preserving the query time. 
Our main tree structure relies on a new local preprocessing step of the $\MTF$ codeword, 
which in turn facilitates a
\emph{binary-search} algorithm for the dynamic $\MTF$ codebook (stack state). We show that 
this algorithm can be `embedded' as a (compressed) augmented aB-tree.  
Our cell-probe lower bound (Theorem \ref{thm_LB}) relies on a new ``entropy polarization"  
lemma for BWT permutations, combined with a `nonuniform' adaptation of Golynski's cell-elimination 
technique for the succinct permutations problem \cite{Golynski}.



\subsection{Related Work} \label{subsec_related_work}
A large body of work  has been devoted to succinct data structures efficiently supporting \textsc{Rank/Select} 
and (easier) \textsc{Dictionary} queries under \emph{zeroth-order} empirical entropy,  i.e., using $H_0(x) + o(n)$ bits of space. 
In this ``\emph{context-free}" regime, early results showed a near-linear
$r=\tilde{O}(n/t)$ trade-off between time and redundancy 
(e.g., \cite{Pagh, Munro, GGGRR}), and this was shown by Miltersen \cite{Miltersen} to be optimal  
for \emph{systematic}\footnote{Systematic data structures are forced to store the \emph{raw} input database $x$,  followed by  
an $r$-bit additional \emph{index}.} 
data structures. 
This line of work culminated with a surprising result of \Pat \cite{Pat}, 
who showed that an \emph{exponential} trade-off between time and redundancy can be achieved 
using \emph{non-systematic} data structures in the word-RAM model, 
supporting all the aforementioned operations in query time $O(t)$ and $s \approx H_0(x) + O\left(n/(\lg(n)/t)^t\right)$ bits of space. 
For \textsc{Rank/Select}, this trade-off was shown to be optimal in the cell-probe model with word-size $w=\Theta(\lg n)$  \cite{PV09}, 
while for \textsc{Dictionary} (and \textsc{Membership}) queries, the problem is still open  \cite{Vio08,Vio09,DodisPatrascuThorup}. 
There are known string-matching data structures, based on context-free grammar compression 
(e.g., LZ or SLPs \cite{SLP}), that achieve logarithmic query time for \textsc{Dictionary} queries,  
at the price of linear (but \emph{not} succinct) space in the compressed codeword \cite{Dutta+,SLP}. 
However, these data structures have an $O(n/\lg n)$ additive space term, regardless of the compressed 
codeword, which becomes dominant in the $o(1)$ per-bit entropy regime, 
the interesting setting for this paper (see \cite{FT} for elaboration).

The problem of compressed pattern matching 
has been an active field of research  
for over four decades, since the works of McCreight  \cite{McCreight1976} and Manber and Myers 
 \cite{ManberMyers1993}, who introduced suffix trees and suffix arrays.
Ferragina and Manzini \cite{FM} were the first to achieve a \emph{compressed} 
text index (with respect to higher-order 
entropy $H_k$), supporting pattern-matching counting and reporting 
queries in \emph{sublinear} ($o(n)$) space and essentially optimal query time. 
Their BWT-based data structure, known as the \emph{FM-index},
is still widely used in both theory and practice, and its applications in genomics go well beyond 
the scope of this paper \cite{SD10}. 
Subsequent works, e.g. \cite{Grossi, FT, GRR},  
designed compressed text indices under other entropy-coding space benchmarks, such as Lempel-Ziv compression, 
but to the best of our knowledge, all of them again require $\Omega(n/\lg n)$ bits of space, 
even when the text itself compresses to $o(n/\lg n)$ bits.   
We remark that for \emph{systematic} data structures, linear trade-off ($r=\tilde{\Theta}(n/t)$) is the best possible 
for counting pattern-matches \cite{GM03,Golynski}, hence Theorem \ref{thm_exponential_FM} 
provides an exponential separation between systematic and non-systematic data structures  
for this problem. For a more complete state of 
affairs on compressed text indexing, we refer the reader to \cite{FM, NavarroSurvey07}.

Another related problem to our work is the \emph{succinct permutations} problem \cite{Viola17,Viola18, Golynski, MRRR},
where the goal is to succinctly represent a permutation $\pi \in S_n$ using $\lg n! + r$ bits of space, supporting 
evaluation $(\pi(i))$ and possibly inverse ($\pi^{-1}(i)$) queries in time $t$ and $q$ respectively. 
For the latter problem, an essentially tight trade-off $r = \Theta(n\lg n/tq)$ is known in the regime 
$t,q\in \tilde{\Theta}(\lg n)$ \cite{MRRR,Golynski}.  


\paragraph{Organization.}   
We start with some necessary background and preliminaries in Section \ref{sec_prelims}. 
Section \ref{sec_technical_overview} provides a high-level technical overview of our main results. 
Sections \ref{sec_MTF},\ref{sec_rle} contain our main data structure and proofs of Theorems \ref{thm_local_bwt},
\ref{thm_exponential_FM} and \ref{thm_local_mtf}.
Section \ref{sec_report_pattern_matching} describes application to improved pattern-matching (reporting) queries.  
In Section \ref{sec_LB} we prove the cell-probe lower bound (Theorem \ref{thm_LB}).


\section{Background and Preliminaries} \label{sec_prelims}

For an $n$-letter string $x \in \Sigma^n$, let $n_c$ be the number of occurrences, i.e., the \emph{frequency}, of the symbol $c\in \Sigma$ in $x$. 
For $1 \leq i < j \leq n$, let $x[i : j]$ denote the substring $(x_i, x_{i+1}, \cdots, x_j)$. For convenience, we use the shorthand $x_{<i}$ to denote 
the prefix $(x_1, x_2, \cdots, x_{i-1})$. The \emph{$k^{th}$ context} of a character $x_i$ in $x$ is the substring of length $k$ that precedes it. 
A \emph{run} in a string $x$ is a maximal substring of repetitions of the same symbol. For a compression algorithm $\cA$, we denote by 
$|\cA(x)|$ the output size \emph{in bits}. The \emph{zeroth order empirical entropy} of the string $x$ is $H_0(x) := \sum_{c\in \Sigma} n_c \lg \frac{n}{n_c}$ 
(all logarithms throughout the paper are base-$2$, where by standard convention, $0\lg 0 = 0$). It holds that $0\leq H_0(x) \leq n \lg |\Sigma|$.
For a substring $y\in \Sigma^k$, let $y_x$ denote the concatenated string consisting of the single characters following all occurrences 
of $y$ in $x$. The \emph{$k^{th}$ order empirical entropy} of $x$ is defined as $H_k(x) := \sum_{y\in \Sigma^k} H_0(y_x)$.
This prior-free measure intuitively captures ``conditional" entropies of characters in correlated strings with bounded context, and 
is a lower bound on the compression size $|\cA(x)|$ of any $k$-local compressor $\cA$, i.e., any compressor that encodes each symbol with a code that only depends on the symbol itself and on the $k$ immediately preceding symbols;
for elaboration see \cite{Manzini99}. For all $k \geq 0$, we have $H_{k+1}(x) \leq H_k(x)$. Note that the space benchmark 
$H_k$ can be significantly smaller than $H_0$. For example, 
for $x = (ab)^{n/2}$, $H_0(x) = n$ but $H_k(x) = 0$ for any $k\geq 1$
(assuming  the length $n$ is known in advance; $H_k(x) \leq \lg n$ otherwise). 
For a random variable $X\sim \mu$, $H(X)$ denotes the Shannon entropy of $X$. 
Throughout the paper, we assume the original alphabet size is $|\Sigma| = O(1)$.

\paragraph{Succinct data structures.}{   
We work in the word-RAM model of word-length $w=\Theta(\lg n)$, in which arithmetic and shift operations on memory words require 
$O(1)$ time. A \emph{succinct data structure} for an input $x \in \Sigma^n$ is a data structure that stores a small 
\emph{additive} space overhead $r=o(n)$ beyond the ``information-theoretic minimum'' space $h(x)$ required to represent $x$, 
while supporting queries efficiently.  
In the ``prior-free" setting, 
$h(x)$ is usually defined in terms of empirical entropy $H_k(x)$.   
The space overhead  $r$ is called the \emph{redundancy}, and is measured in \emph{bits}.}


\subsection{The Burrows-Wheeler Transform}

Given a string $x \in \Sigma^n$, the Burrows-Wheeler Transform of $x$, denoted $\BWT(x)$, is defined
by the following process.
We append a unique end-of-string symbol `\$', which is lexicographically smaller than any character in $\Sigma$, to $x$ to get $x\$$ (without this technicality, invertibility is only up to cyclic shifts).
We place all $n+1$ cyclic shifts of the string $x\$$ as the rows of an $(n+1) \times (n+1)$ matrix, denoted by $\hat{\cM}$.
Then we sort the rows of $\hat{\cM}$ in lexicographic order.
The sorted matrix, denoted $\cM$, is henceforth called the ``BWT matrix" of $x$.
Finally, we output 
$L \in (\Sigma \cup \{\$\})^{n+1}$, the last column of the BWT matrix $\cM$. We henceforth use the shorthand $L := \BWT(x)$.

We observe that every column in $\cM$ is a permutation of $x\$$. Let $F$ and $L$ be the \emph{first} and \emph{last column} of $\cM$ respectively. See an example in Figure \ref{fig_bwt} below. 
For ease of notation,
we shall refer to x\$ as $x$, denote its length by $n$, and include \$ in $\Sigma$.

\begin{figure}[h]
	\centering
	\begin{tabular}{l}
		\\
		\hline
		mississippi\textbf{\$} \\
		ississippi\textbf{\textbf{\$}}m \\
		ssissippi\textbf{\$}mi \\
		sissippi\textbf{\$}mis \\
		issippi\textbf{\$}miss \\
		ssippi\textbf{\$}missi \\
		sippi\textbf{\$}missis \\
		ippi\textbf{\$}mississ \\
		ppi\textbf{\$}mississi \\
		pi\textbf{\$}mississip \\
		i\textbf{\$}mississipp \\
		\textbf{\$}mississippi
	\end{tabular}
	\hspace{0.7cm} $\Longrightarrow$ \hspace{0.7cm}
	\begin{tabular}{c c c}
		F & & L \\
		\hline
		\textbf{\$} & mississipp & i \\
		i & \textbf{\$}mississip & p \\
		i & ppi\textbf{\$}missis & s \\
		i & ssippi\textbf{\$}mis & s \\
		i & ssissippi\textbf{\$} & m \\
		m & ississippi & \textbf{\$} \\
		p & i\textbf{\$}mississi & p \\
		p & pi\textbf{\$}mississ & i \\
		s & ippi\textbf{\$}missi & s \\
		s & issippi\textbf{\$}mi & s \\
		s & sippi\textbf{\$}miss & i \\
		s & sissippi\textbf{\$}m & i
	\end{tabular}
	\caption{Burrows-Wheeler Transform for the string $x = $``mississippi'', with the unsorted matrix $\hat{\cM}$ on the left and the sorted matrix $\cM$ on the right. The output is $L = \BWT(x) = $``ipssm\$pissii''.}
	\label{fig_bwt}
\end{figure}

\subsubsection{Decoding BWT and the ``LF Mapping"} \label{sec_decoding_BWT}
While not obvious at first glance, BWT is an \emph{invertible} transformation. 
An important first observation for this fact is that the \emph{first} column $F$ of the BWT matrix $\cM$  
is actually known ``for free" (as long as the frequencies of each symbol are stored, using negligible 
$O(|\Sigma|\lg n)$ additive space), since $\cM$ is sorted lexicographically (See Figure \ref{fig_bwt}).
To see why this is useful, we first introduce the following central definition: 

\begin{definition}[Rank of a character]\label{def_rank}
	Let  $y \in \Sigma^n$. For any $c \in \Sigma, i \in [n]$, $rk_y(c,i)$ denotes the number of occurrences of the symbol $c$ 
	in the $i^{th}$ prefix $y_{\leq i} = y[1 : i]$. 
\end{definition}

Note that $rk_L(c,n) = n_c$, recalling that $n_c$ is the frequency of $c$ in $x$. We define the \emph{Last-to-First (LF) column mapping} $\pi_{LF} : [n] \mapsto [n]$ by setting $\pi_{LF}(i) = j$ if the character $L_i$ is located at $F_j$, i.e., $L_i$ is the first character in the $j^{th}$ row of the BWT matrix $\cM$. 
We note that $\pi_{LF}$ is a permutation.

An indispensable feature of BWT, the \emph{LF Mapping Property}, states that for any character $c \in \Sigma$, the occurrences of $c$ in the first column $F$ and last column $L$ follow \emph{the same order}. In other words, the permutation $\pi_{LF}$ preserves the ordering among all occurrences of $c$.


\begin{fact}[LF Property] \label{lem_LF}  
	For $i \in [n], c \in \Sigma$, we have 
	$rk_L(c, i) = rk_F(c, \pi_{LF}(i)) = \pi_{LF}(i) - \sum_{c' < c} n_c$.
\end{fact}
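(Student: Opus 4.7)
The plan is to establish the two equalities in turn, both hinging on structural properties of the sorted BWT matrix $\cM$. First I would dispense with the second equality, $rk_F(c, \pi_{LF}(i)) = \pi_{LF}(i) - \sum_{c' < c} n_{c'}$. Because the rows of $\cM$ are sorted lexicographically, the first column $F$ is itself a sorted string, so all occurrences of any symbol $c \in \Sigma$ occupy a contiguous block of rows, namely $[s_c+1, s_c+n_c]$ with $s_c := \sum_{c' < c} n_{c'}$. By the definition of $\pi_{LF}$, the character $F_{\pi_{LF}(i)}$ equals $L_i$, so in the relevant case $L_i = c$ the index $\pi_{LF}(i)$ lies in this $c$-block, and the count of $c$'s in $F[1{:}\pi_{LF}(i)]$ is exactly $\pi_{LF}(i) - s_c$.

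The heart of the argument is the first equality, which amounts to showing that $\pi_{LF}$ restricted to $\{i : L_i = c\}$ is order-preserving --- equivalently, that $\pi_{LF}$ sends the $k$-th occurrence of $c$ in $L$ to the $k$-th occurrence of $c$ in $F$. I would prove this by a pairwise comparison: take indices $r_1 < r_2$ with $L_{r_1} = L_{r_2} = c$ and show $\pi_{LF}(r_1) < \pi_{LF}(r_2)$. Since each row of $\cM$ is a cyclic shift of $x\$$, rows $r_1$ and $r_2$ have the form $w_1 c$ and $w_2 c$ for some strings $w_1, w_2$ of length $n-1$. The lex-sortedness of $\cM$ together with $r_1 < r_2$ yields $w_1 c <_{\mathrm{lex}} w_2 c$; moreover, the unique $\$$ sentinel guarantees that the cyclic shifts of $x\$$ are pairwise distinct, so $w_1 \neq w_2$ and hence $w_1 <_{\mathrm{lex}} w_2$. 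The rows indexed by $\pi_{LF}(r_1)$ and $\pi_{LF}(r_2)$ are precisely the cyclic shifts immediately preceding $w_1 c$ and $w_2 c$ in $x\$$, namely $c w_1$ and $c w_2$. Since prepending a common character preserves lex order, $c w_1 <_{\mathrm{lex}} c w_2$, which places $\pi_{LF}(r_1)$ strictly above $\pi_{LF}(r_2)$ in $\cM$.

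Combining the two steps, if $L_i$ is the $k$-th occurrence of $c$ in $L$, then $F_{\pi_{LF}(i)}$ is the $k$-th occurrence of $c$ in $F$, giving $rk_L(c,i) = k = rk_F(c,\pi_{LF}(i)) = \pi_{LF}(i) - s_c$. The step I expect to require the most care is the cyclic-shift bookkeeping --- namely, justifying rigorously that the row whose last character is a given occurrence of $c$ and the row whose first character is that same occurrence of $c$ differ by exactly one rotation of $x\$$, and that this correspondence is what the definition of $\pi_{LF}$ encodes. It is here that the $\$$ sentinel plays a nontrivial role: without it, distinct occurrences could share the same rotation (for periodic strings), and the bijection $w_i c \leftrightarrow c w_i$ could collapse. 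Once this is pinned down, the remainder of the proof is a direct application of monotonicity of lexicographic order.
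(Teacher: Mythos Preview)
Your proposal is correct and follows essentially the same approach as the paper: both prove the first equality by a pairwise comparison of two occurrences of $c$, observing that the rows $w_1 c, w_2 c$ and their one-step rotations $c w_1, c w_2$ inherit the same lexicographic order, and both handle the second equality via the contiguity of $c$'s block in the sorted column $F$. Your treatment is slightly more explicit about the role of the sentinel $\$$ in guaranteeing distinctness of cyclic shifts, but the underlying argument is identical to the paper's.
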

The second equality follows directly from the fact that the first column $F$ is sorted lexicographically by construction, while the first equality also requires the fact that $L$ is sorted by its right context.
The formal argument can be found in Section \ref{sec_LF} of the Appendix. The LF Mapping Property leads to the following lemma, which is the heart of the BWT decoding algorithm:

\begin{lemma} \label{lem_lf_inverse}
	Fix a data structure $D$ that returns $rk_L(c, i)$ for given $i \in [n], c \in \Sigma$. Let $j \in [n]$. If we know the position $i$ of $x_j$ in $L$, then we can compute (even without knowing $j$) the character $x_j = L_i$, and (if $j \geq 2$) the position $i'$ of $x_{j-1}$ in $L$, with $O(|\Sigma|)$ calls to $D$.
\end{lemma}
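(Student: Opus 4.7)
The plan is to decompose the claim into two independent tasks: (i) recovering the character $x_j=L_i$ from the position $i$, and (ii) computing the position $i'$ of $x_{j-1}$ in $L$ given that $i$ is the position of $x_j$ in $L$. I will handle (i) by a brute-force scan over the constant-sized alphabet, and (ii) via a single application of the LF Property stated in Fact \ref{lem_LF}.

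For (i), note that $L_i = c$ if and only if $rk_L(c,i) = rk_L(c,i-1)+1$. Hence I iterate over all $c \in \Sigma$ and make two queries $rk_L(c,i)$ and $rk_L(c,i-1)$ to $D$; exactly one $c$ satisfies the above, and that $c$ equals $x_j$. This costs $O(|\Sigma|)$ calls. (For efficiency we may also precompute once and store, in $O(|\Sigma|\lg n)$ bits of auxiliary space, the cumulative frequencies $C(c):=\sum_{c'<c} n_{c'}$, which will be used in step (ii).)

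For (ii), I would first argue that $i' = \pi_{LF}(i)$. Here is the reasoning: the row $i$ of the BWT matrix $\cM$ is the unique cyclic shift of $x$ that ends in the occurrence $x_j$; equivalently, row $i$ begins at position $j+1$, so $F_i = x_{j+1}$. By definition, $\pi_{LF}(i)$ is the row of $\cM$ where the \emph{same} occurrence $x_j$ now appears in the first column $F$, namely the cyclic shift of $x$ that starts at position $j$. That row ends in $x_{j-1}$, so by definition $i' = \pi_{LF}(i)$. Now, having already determined $c = L_i = x_j$ in step (i), Fact \ref{lem_LF} yields
\[
    i' = \pi_{LF}(i) = rk_L(c,i) + \sum_{c'<c} n_{c'} = rk_L(c,i) + C(c),
\]
which requires one additional call to $D$ (the value $rk_L(c,i)$ was already obtained in step (i), so strictly speaking no new call is needed) plus an $O(1)$ lookup of the precomputed $C(c)$.

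There is essentially no real obstacle here: once the LF Property (Fact \ref{lem_LF}) is taken as a black box, the only subtlety to spell out carefully is the combinatorial identification $i' = \pi_{LF}(i)$, i.e., that the occurrence immediately preceding $x_j$ in $x$ corresponds under the LF map to the row whose first column holds $x_j$. The total cost is the $O(|\Sigma|)$ queries made in step (i), with step (ii) adding only constant overhead; since $|\Sigma|=O(1)$ by the global assumption, this matches the claimed $O(|\Sigma|)$ bound.
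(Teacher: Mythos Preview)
Your proposal is correct and follows essentially the same approach as the paper's proof: decode $L_i$ by computing $rk_L(c,i)-rk_L(c,i-1)$ for every $c\in\Sigma$, then use Fact~\ref{lem_LF} to obtain $i'=\pi_{LF}(i)$ and observe that $L_{i'}=x_{j-1}$ since each row of $\cM$ is a cyclic shift. Your write-up is slightly more explicit about why $i'=\pi_{LF}(i)$ is the position of $x_{j-1}$ in $L$, but the argument is otherwise identical.
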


\begin{proof}
	Given the position $i\in L$ of $x_j$, the character $L_i = x_j$ can be decoded via $2|\Sigma|$ rank queries on $L$, 
	by computing $rk_L(c, i) - rk_L(c,i-1) \; \forall \; c \in \Sigma$, which is nonzero only for $c^* := x_j$.  
	Now, given the rank $rk_L(c^*, i)$ of $x_j$ in $L$, the LF-property (Fact \ref{lem_LF}) allows us to translate it to the index 
	$i' := \pi_{LF}(i)$ of $x_j$ \emph{in F}. As such, $F_{i'} = x_j$. But this means that $L_{i'} = x_{j-1}$, as every row 
	of $\cM$ 
	is a cyclic shift of $x$ (i.e., in each row, $L$ and $F$ contain consecutive characters of $x$). 
\end{proof}

The decoding argument  asserts that a \textsc{Rank} data structure over $L$ allows us to ``move back'' \emph{one character} in 
$x$.
Note that 
the decoding algorithm implied by Lemma \ref{lem_lf_inverse} is inherently sequential: decoding a \emph{single} character $x_{n-i}$ 
of $x$ requires 
$O(|\Sigma|\cdot i)$ calls to $D$, hence $\Omega(n)$ worst-case time. 


\subsection{Compressing BWT} \label{sec_prelim_rlx}

\subsubsection{Move-to-Front encoding ($\MTF$)}
As mentioned in the introduction, when reasonably short contexts tend to predict a character in the input text $x$, the BWT string $L =\BWT(x)$ 
will exhibit local similarity, i.e, identical symbols will tend to recur at close vicinity. As such, we expect the integer string 
$\MTF(L)$ to 
contain many \emph{small integers}. This motivates  the following \emph{relative} encoding of $L$:

The \emph{Move-to-Front} transform (Bentley et al. 1986 \cite{Bentley}) replaces each character of $L$ with the \emph{number of distinct characters seen since its previous occurrence}. Formally, the encoder maintains a list, called the \emph{MTF-stack}, initialized with all characters $c\in \Sigma$ 
ordered alphabetically. To encode the $i^{th}$ character, the encoder outputs its \textsc{Rank} in the \emph{current stack} 
$S_{i-1} \in \cS_{|\Sigma|}$ (with the character at the top of $S_{i-1}$ having \textsc{Rank} $0$), and moves $c = L_i$ to the top of the stack, generating $S_i$. At any instant, the MTF-stack contains the 
characters ordered by recency of occurrence. Denote the output of this sequential algorithm by $m(L) := \MTF(L) = (m_1,m_2,\ldots, m_n) \in \{0, 1, \cdots, |\Sigma|-1\}^n$. 

A few remarks are in order: First, note that \emph{runs of identical characters} in $L$ are transformed into \emph{runs of $0$s} (except the 
first character in the run) in the 
resulting string $m(L)$. Second, at each position $i \in [n]$, the corresponding MTF-stack $S_i$ defines a unique \emph{permutation} 
$\pi_i \in \cS_{|\Sigma|}$ on $[|\Sigma|]$.

\subsection{The $\HRLX$ compression benchmark} \label{sec_RLX_comparison} 
Based on the $\MTF$ transform and following the original paper of Burrows and Wheeler \cite{BW}, 
\cite{FM,Manzini99} analyzed the following compression algorithm\footnote{Excluding the final arithmetic coding step.} 
to encode $L=\BWT(x)$, henceforth denoted $\HRLX(L)$: 
\begin{enumerate}
	\item Encode $L$ using the Move-to-Front transform to produce $\MTF(L)$.
	
	\item Denote by $L^{runs}$ the concatenation of substrings of $\MTF(L)$ corresponding to
	\emph{runs of $0$s}, 
	and by $\MTF(L^{-runs}) := [n] \setminus L^{runs}$ the remaining 
	coordinates. Encode all $0$-runs in $L^{runs}$ using 
	\emph{Run-Length encoding}, where each run is replaced
	by its length (encoded using a prefix-free code), and denote 
	the output by $\RLE\left(L^{runs}\right)$.
	
	\item Encode the remaining (non-runs) symbols in $\MTF(L^{-runs})$ using a $0$-order entropy code\footnote{A $0$-order encoder assigns a unique bit string to each symbol independent of its context, such that we can decode the concatenation of these bit strings.} (e.g., Huffman or Arithmetic coding), to obtain the 
	final bit stream $\HRLX(L)$ (suitably modified to be prefix-free over the alphabet comprising non-zero MTF symbols and run-length symbols).
\end{enumerate}	

See illustration in Figure \ref{fig_red_black_L}. For justification of the $\HRLX$ space benchmark and comparison to other compressors, we refer the reader to Section \ref{sec_app_RLX} of the Appendix.

\begin{figure}[H] 
	\centering
	\begin{tabular}{r p{0.23cm} p{0.23cm} p{0.23cm} p{0.23cm} p{0.23cm} p{0.23cm} p{0.23cm} p{0.23cm} p{0.23cm} p{0.23cm} p{0.23cm} p{0.23cm} p{0.23cm} p{0.23cm} p{0.23cm} p{0.23cm} p{0.23cm} p{0.23cm} p{0.23cm} p{0.23cm} p{0.23cm} p{0.23cm} p{0.23cm}}
		$L$ & $=$ & ``i & c & a & a & a & a & t & h & e & e & e & e & e & e & a & t & h & e & u & u & u & i".\\
		$\MTF(L)$ & = & (8,& 3,& 2,& \textbf{\blue{0}},& \textbf{\blue{0}},& \textbf{\blue{0}},& 19,& 9,& 7,& \textbf{\blue{0}},& \textbf{\blue{0}},& \textbf{\blue{0}},& \textbf{\blue{0}},& \textbf{\blue{0}},& 3,& 3,& 3,& 3,& 20,& \textbf{\blue{0}},& \textbf{\blue{0}},& 6).\\
		$\HRLX(L)$ & = & (8,& 3,& 2,& \textbf{\blue{3}},& 19,& 9,& 7, & \textbf{\blue{5}},& 3,& 3,& 3,& 3,& 20,& \textbf{\blue{2}},& 6.) & & & & & & 
	\end{tabular}
	\caption{Illustration of $\MTF$ and $\RLE$ encoding. $\RLE$ symbols in $\HRLX(L)$ are shown in bold blue. The final prefix-free code is not shown. $\MTF$ exploits local context in $L$, leading to small integers and in particular, $0$-runs. Each $0$-run is encoded by its length (a single character)
		in $\HRLX(L)$, yielding a significantly shorter string. Our \textsc{Rank} data structure is built over the compressed string $\HRLX(L)$, while it answers queries over the original string $L = \BWT(x)$.}
	\label{fig_red_black_L}
\end{figure}

By a slight abuse of notation, the output length of the algorithm\footnote{up to prefix-free coding overheads.} is 
\begin{equation}
	|\HRLX(L)| = |\RLE(L^{runs})| + \lceil H_0(\MTF(L^{-runs})) \rceil.
\end{equation}



\subsubsection{Augmented B-Trees  \cite{Pat}} \label{sec_prelims_aBtrees}

Central to our data structure is the notion of ``augmented B-trees", or \emph{aB-trees} for short. 
Let $B \geq 2$, $t \in \N$, and let $A \in \Sigma^s$ be an array of length $s := B^t$. An aB-tree $\cT$ over $A$ is a $B$-ary tree of depth $t$, with leaves corresponding to elements of $A$. Each node $v \in \cT$ is augmented with a value $\varphi_v$ from an alphabet $\Phi$. This value $\varphi_v$ must be a function of the subarray of $A$ corresponding to the leaves of the subtree $\cT_v$ rooted at $v$. In particular, the value of a leaf must be a function of its array element, and the value of an internal node must be a function of the values of its $B$ children.

The query algorithm starts at the root and traverses down the tree along a path which can be \emph{adaptive}. Whenever it visits a node, it reads all the values of its $B$ children and recurses to one of them, until it reaches a leaf node and returns the answer.	We ensure the query algorithm
spends $O(1)$ time per node, by packing all the augmented values of the children in a single word. 

For a given aB-tree $\cT$ and value $\varphi \in \Phi$, let $\cN(s, \varphi)$ be the number of possible arrays $A \in \Sigma^s$ such that the root is labeled with $\varphi$. A reasonable information-theoretic space benchmark for this data structure, conditioned on the root value 
$\varphi$, is therefore $\lg \cN(s, \varphi)$. 
\Pat \; proved the following remarkable result, which allows to \emph{compress} any aB-tree, while preserving its query time:
\begin{theorem}[Compressing aB-trees, \cite{Pat}]  \label{thm_pat_ab_compress}
	Let $B = O\left(\frac{w}{\lg
		(s + |\Phi|)}\right)$. We can store an aB-tree of size $s$ with root value $\varphi$ using $\lg
	\cN(s, \varphi) + 2$ bits. The query time is $O(\lg
	_B s) = O(t)$, assuming precomputed look-up tables of $O\left(|\Sigma| + |\Phi|^{B+1} + B \cdot |\Phi|^B \right)$ words, which only depend on $s, B$ and the aB-tree query algorithm.  
\end{theorem}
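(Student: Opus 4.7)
The plan is to encode the aB-tree as a single integer $I \in [0, \cN(s,\varphi))$; this takes $\lceil \lg \cN(s,\varphi) \rceil \le \lg \cN(s,\varphi) + 1$ bits, and one extra bit absorbs minor overhead (e.g.\ a length delimiter), yielding the claimed $\lg\cN(s,\varphi)+2$. The recursive structure of $\cN$ then lets us descend the tree in $O(1)$ time per level via a small number of precomputed look-up tables.

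First I would fix, once and for all, a canonical ordering of the ``consistent children-tuples,'' i.e.\ tuples $(\varphi_1,\dots,\varphi_B)\in\Phi^B$ such that a node augmented with $\varphi$ can have children augmented with $(\varphi_1,\dots,\varphi_B)$. By definition of aB-trees we have the identity
\[
\cN(s,\varphi) \;=\; \sum_{(\varphi_1,\dots,\varphi_B)\text{ consistent with }\varphi}\; \prod_{i=1}^{B} \cN(s/B,\varphi_i),
\]
so any $I\in[0,\cN(s,\varphi))$ admits a unique mixed-radix decomposition $(T,I_1,\dots,I_B)$, where $T$ is the index of the tuple and $I_i\in[0,\cN(s/B,\varphi_i))$ is the integer encoding the $i$-th child's subtree. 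Iterating the decomposition down the tree gives a bijection between arrays $A\in\Sigma^s$ whose root label is $\varphi$ and integers in $[0,\cN(s,\varphi))$; this establishes the space bound.

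For the query, at each visited node $v$ I maintain the pair $(\varphi_v, I_v)$. A look-up table keyed by $\varphi_v$ together with the top $O(B\lg(s+|\Phi|)) = O(w)$ bits of $I_v$ returns the whole tuple $(\varphi_{v,1},\dots,\varphi_{v,B})$ as well as the mixed-radix offsets (partial products of $\cN(s/B,\cdot)$) needed to extract each child index $I_{v,i}$ from $I_v$. Since the keys range over pairs of a parent label and a children tuple, these tables occupy $O(|\Phi|^{B+1})$ words for the ``split'' table and $O(B\cdot|\Phi|^B)$ words for the offset table, matching the claimed preprocessing. The adaptive query algorithm reads all $B$ children labels in $O(1)$ time, picks one, computes its $(\varphi_{v,i}, I_{v,i})$, and recurses, spending $O(1)$ per level for a total of $O(\log_B s)=O(t)$.

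The main obstacle is that $I_v$ can be far longer than a word (up to $\Theta(s_v)$ bits), so ``extract $I_{v,i}$ from $I_v$'' cannot be realized by a single machine division. I would handle this by laying $I$ out in contiguous memory words and maintaining, during the descent, a sliding window of the top $O(1)$ words of $I_v$. The crucial observation is that the consistent tuple and the offsets depend only on the top $O(w)$ bits of $I_v$; once the table tells us which tuple we are in, the mixed-radix arithmetic identifies a contiguous range of memory words of $I$ that encodes the chosen child's $I_{v,i}$, and we can refresh the window with that range's top words in $O(1)$ time. Combined with the per-level $O(1)$ table look-ups, this gives the $O(t)$ query time and completes the proof.
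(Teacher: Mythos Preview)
Your high-level strategy matches the proof idea the paper sketches (this theorem is quoted from \Pat's \emph{Succincter} paper, and the present paper only records the intuition: encode the array by recursively encoding the children's label-tuple ``conditioned on'' the parent label, using the identity $\cN(s,\varphi)=\sum_{(\varphi_1,\dots,\varphi_B)}\prod_i \cN(s/B,\varphi_i)$). So the space bound and the shape of the recursion are right.

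The gap is in your last paragraph. The claim that the chosen child's integer $I_{v,i}$ occupies a \emph{contiguous} range of memory words inside $I_v$ is false for a mixed-radix encoding: the children's contributions are multiplicatively interleaved, and extracting $I_{v,i}$ is a genuine long-integer division by an arbitrary $\Theta(s_v)$-bit modulus, not a bit-range selection. Likewise, the assertion that the top $O(w)$ bits of $I_v$ determine the consistent tuple is not justified: the bin boundaries are arbitrary integers, and two adjacent bins can agree on arbitrarily many leading bits. \Pat's actual argument avoids these issues via a \emph{spill-over} representation: each subtree is encoded as a pair (a block of whole memory words, plus a short ``spill'' integer in a range of size at most $2^{O(w)}$); the $B$ children's spills are combined at the parent into one word-sized object, from which the parent's own (memory, spill) pair is produced. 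Because every arithmetic step involves only $O(w)$-bit quantities, the per-level decoding is $O(1)$ using the stated tables, and the $+2$ bits of redundancy arise from the final root-level spill. Your sliding-window idea does not achieve this; you would need to replace it with the spill-over mechanism (or an equivalent device that keeps all per-level arithmetic word-sized) to close the argument.
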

The proof idea is to use recursion in order to encode the root value $\varphi_r$, followed by 
an encoding of the augmented values $\varphi_v$ of every child of the root, \emph{``conditioned" on $\varphi_r$}, and so on, 
without losing (almost) any entropy (recursive encoding is needed to achieve this, since $\cN(s, \varphi_r)$ may not be a power of 2). 
Theorem \ref{thm_pat_ab_compress} allows us to represent any aB-tree with a redundancy of merely $2$ bits 
(over the \emph{zeroth-order} empirical entropy of the leaves).
Since the extra look-up tables do not depend on the array $A$, in our application, we use a similar trick as in \cite{Pat} and 
divide the original array of length $n$ into blocks of length $s = B^t$, building an aB-tree over each block. 
We then invoke Theorem \ref{thm_pat_ab_compress} 
separately on each tree, adding a certain auxiliary data structure that aggregates query answers across blocks so as to 
answer the query on the original array (for further details, see \cite{Pat}). 
Beyond facilitating the desired query time, this application renders the extra space occupied by the look-up tables in Theorem \ref{thm_pat_ab_compress} inconsequential, as they can 
be \emph{shared} across blocks. We remark that this ``splitting" trick of \cite{Pat} 
only applies when the augmented values $\varphi$ are \emph{composable}, in the sense that 
$\varphi(A\circ B) = f\left(\varphi(A),\varphi(B)\right)$, 
where $A\circ B$ is the concatenation of the arrays $A,B$. The aB-trees we design shall use 
augmented \emph{vector} values which are (component-wise) composable. 




\section{Technical Overview} \label{sec_technical_overview}

Both Theorem \ref{thm_local_bwt} and Theorem \ref{thm_exponential_FM} follow from the next result, which is 
the centerpiece of this work: 

\begin{restatable}
{thm}{rank} \label{thm_exp_tradeoff_rk_L}
There exists a small constant $\delta > 0$ such that for any  
$x\in \Sigma^n$ and $t' \leq \delta \lg n$, 
there is a succinct data structure $\cD_{rk}$ that supports \textsc{Rank} queries on $L = \BWT(x)$ in time $O(t')$, 
using at most 
$|\HRLX(L)|+ n \lg n / 2^{t'} + n^{1 - \Omega(1)}$
bits of space, 
in the $w=\Theta(\lg n)$ word-RAM model. 
\end{restatable}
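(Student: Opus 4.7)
The plan is to build $\cD_{rk}$ by combining three ingredients: (i) a sparse ``sampling'' of Rank vectors and MTF stack states at block boundaries of $L$, (ii) per-block augmented aB-trees that encode character counts as functions of the current stack state, and (iii) the locally-decodable MTF structure of Theorem \ref{thm_local_mtf} to recover stack states on demand. I first partition $L$ into $n/s$ blocks of length $s := B^{t'}$ with $B = \Theta(\lg n/\lg\lg n)$. At each block boundary $j s$ I explicitly store the vector $\bigl(rk_L(c, j s)\bigr)_{c \in \Sigma}$ and the current MTF stack state $S_{j s} \in \cS_{|\Sigma|}$, using $O(\lg n)$ bits per boundary. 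Since $B^{t'} \gg 2^{t'}$, the aggregate auxiliary cost is $O(n\lg n/B^{t'}) = o(n\lg n/2^{t'})$ bits, comfortably within the target redundancy. A query $rk_L(c,i)$ thereby reduces to a within-block Rank starting from a known stack state.

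For the within-block structure, I would build an aB-tree of arity $B$ and depth $t'$ over the MTF symbols in the block. The augmentation at a node $v$ consists of a pair $(\pi_v, N_v)$, where $\pi_v:\cS_{|\Sigma|}\to\cS_{|\Sigma|}$ is the permutation on stack states induced by the MTF subsequence under $\cT_v$, and $N_v:\cS_{|\Sigma|}\to \N^{|\Sigma|}$ maps an initial stack state $\sigma$ to the character-count vector of $L$ restricted to the range of $\cT_v$ (computed under the assumption that the stack was $\sigma$ when $\cT_v$ began). Since $|\Sigma|=O(1)$, each augmented value occupies $O(\lg s)$ bits and fits in $O(1)$ words, and a node's augmentation is an $O(1)$-time composition of its children's. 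A query descends the tree, accumulating left-sibling contributions to the $c$-count and updating the running stack state via $\pi$, for a total of $O(\log_B s) = O(t')$ time. To read off $L_j$ when the descent bottoms out at a leaf, I would attach the locally-decodable MTF index of Theorem \ref{thm_local_mtf} to each block, which returns the stack state (and hence $L_j$) at any position in $O(t')$ time.

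Next I would apply Pătraşcu's aB-tree compression (Theorem \ref{thm_pat_ab_compress}) to each per-block tree, shrinking it to $\lg \cN(s,\varphi) + 2$ bits for the realized root augmentation $\varphi$, with the shared look-up tables absorbed into the $n^{1-\Omega(1)}$ additive term. Summed across blocks, the total tree space is $\sum_b \lg \cN_b(s,\varphi_b) + O(n/s)$. To calibrate this sum against $|\HRLX(L)|$ rather than against the larger $H_0(\MTF(L))$, I would refine the leaf granularity so that each maximal $0$-run of $\MTF(L)$ becomes a single leaf labeled by its RLE length, while every non-zero MTF symbol remains its own leaf. The augmented values must then additionally carry the number of $L$-positions covered by each subtree, so that descending the tree by offset remains well-defined in the original coordinate system.

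The main obstacle will be precisely this alignment between the aB-tree's counting function $\cN(s,\varphi)$ and the $\HRLX$ encoding: I need to verify that the number of length-$s$ leaf-configurations consistent with a given root augmentation is at most $2$ to the $\HRLX$-cost of the block, rather than its $H_0(\MTF)$-cost. This hinges on structuring the augmented alphabet $\Phi$ to record run-length and non-zero-symbol statistics simultaneously, so that Pătraşcu's counting argument factorizes across the RLE-coded and $0$-order-coded parts of $\HRLX$. Assuming this can be arranged (while preserving $O(1)$-word augmentations and the $O(t')$ query bound through the stack-state reconstruction of Theorem \ref{thm_local_mtf}), the three redundancy sources---sparse boundary sampling, per-block compression slack $O(n/B^{t'})$, and globally shared look-up tables---sum to $n\lg n/2^{t'} + n^{1-\Omega(1)}$, as claimed.
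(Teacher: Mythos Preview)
Your parameter choice $B = \Theta(\lg n/\lg\lg n)$ breaks the argument. P\u{a}tra\c{s}cu's compression (Theorem~\ref{thm_pat_ab_compress}) requires look-up tables of size $O(|\Phi|^{B+1})$, and your augmentation already forces $|\Phi| \ge s^{|\Sigma|}$ (count vectors alone), so the tables occupy at least $s^{|\Sigma|(B+1)} = B^{\Theta(t'B)}$ words. With $B \approx \lg n/\lg\lg n$ and $t'$ ranging up to $\delta\lg n$, this is $2^{\Theta(t'\lg n)}$, not $n^{1-\Omega(1)}$; in fact for such $t'$ your block size $s=B^{t'}$ is itself super-polynomial. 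The paper takes $B$ \emph{constant}, which is exactly why the redundancy is $n\lg n/2^{t'}$ rather than the $n\lg n/B^{t'}$ you were aiming for, and why the hypothesis $t'\le\delta\lg n$ appears. Even with constant $B$, a second obstruction you did not address is that if an RLE leaf may carry a run length in $[n]$, a block of $r$ leaves can span $\Theta(n)$ positions of $L$, so $|\Phi_{rk}|$ is polynomial in $n$ and the tables are again $n^{\Theta(1)}$. The paper fixes this by truncating every $0$-run into chunks of length at most $n^\epsilon$ and proving (Lemma~\ref{lem_rle_run_divide_entropy}) that this costs only an additive $\tilde O(n^{1-\epsilon})$ in zeroth-order entropy.

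Your coordinate system is also inconsistent: if leaves are RLE symbols, equal-size blocks must be cut in the RLE string $m'$, not in $L$, so different blocks cover wildly different ranges of $L$. The paper therefore partitions $m'$ into $N/r$ sub-arrays, stores for each the true stack, prefix counts, and the starting $L$-index, and adds a \emph{predecessor} structure over these starting indices so that a query $i\in[n]$ can locate its sub-array in $O(t')$ time; inside the tree a partial-sum counter tracks the current $L$-offset. Finally, your separate appeal to Theorem~\ref{thm_local_mtf} is unnecessary: the permutation label $\varphi_\pi(v)$ already lets the descent maintain the running stack state, and the paper folds it, the count vector $\varphi_{rk}$ (stored once, relative to the identity stack, via Proposition~\ref{perm independent of stack}), and the entropy constraint $\varphi_0$ into a single augmented alphabet $\Phi$. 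The ``main obstacle'' you flagged---aligning $\cN(s,\varphi)$ with $|\HRLX(L)|$---is resolved precisely by building the tree over $m'$ and taking $\varphi_0$ to be discretized $H_0(m')$, then invoking $H_0(m')\le H_0(\bar m)+\tilde O(n^{1-\epsilon})\le |\HRLX(L)|+\tilde O(n^{1-\epsilon})$.
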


Theorem \ref{thm_exponential_FM} is a direct corollary of Theorem \ref{thm_exp_tradeoff_rk_L}, 
as it turns out that counting the number of occurrences 
of a given pattern $p \in \Sigma^\ell$ in $x$,  
amounts to $O(\ell)$ successive \textsc{Rank} queries on $L$ (see \cite{FM}). 

To see how Theorem \ref{thm_local_bwt} follows from Theorem \ref{thm_exp_tradeoff_rk_L}, 
consider the following data structure for locally-decoding a coordinate $x_i$ of $x$ in time $t$: 
Let $t' < t$ be a parameter to be determined shortly. 
Let $\cD_{rk}$ be the data structure supporting rank queries on $L$ in time $O(t')$. We divide
$x$ into $\lceil n/T\rceil $ blocks 
of size $T := O(t/t')$, and store, for each ending index $j$ of a block, the position \emph{in $L$} corresponding to $x_j$. 
In other words, we simply record ``shortcuts" of the BWT transform after every block of size $T$.
Given an index $i \in [n]$,
the data structure first computes the endpoint $j := \left\lceil \frac{i}{T} \right\rceil T$ 
of the block to which $i$ belongs, reads from memory 
the position of $x_{j}$ in $L$, and 
then simulates $(j-i) \leq T = O(t/t')$ sequential steps of the LF-mapping decoding algorithm from 
Section \ref{sec_decoding_BWT}, to decode $x_i$. By Lemma \ref{lem_lf_inverse}, each step requires $O(|\Sigma|)$ \textsc{Rank} queries on $L$, each of which can be done using $\cD_{rk}$ in 
$O(t')$ time, hence the overall running time is $O(T\cdot t') = O(t)$.
To balance the redundancy terms, observe that 
the overall space of our data structure (up to $O(n^{\eps})$ terms) is 
\begin{align}\label{eq_loc_bwt_ds}
s = |\HRLX(L)|+ \frac{n \lg n}{2^{t'}} + \frac{n \lg n}{T}. 
\end{align}
Thus, setting $t' = \Theta(\lg t)$, leads to overall redundancy $r = O\left(\frac{n\lg n \lg t}{t} \right) = 
\tilde{O}\left(\frac{n\lg t}{t} \right)$, as claimed in Theorem \ref{thm_local_bwt}.
Next, we provide a high-level overview of the proof of Theorem \ref{thm_exp_tradeoff_rk_L}. 


\subsection{Proof Overview of Theorem \ref{thm_exp_tradeoff_rk_L}}

Recall (Section \ref{sec_prelim_rlx}), that the $\HRLX$ compression of $L=\BWT(x)$ can be 
summarized as : 
		\[  |\HRLX(L)| = |\RLE(L^{runs})| + \lceil H_0(\MTF(L^{-runs})) \rceil.\]
Since $\HRLX$ compresses 
the two parts $L^{runs}$ and $L^{-runs}$ using two conceptually different encodings ($\RLE$ and $\MTF$, respectively), 
it makes sense to design a \textsc{Rank} data structure for each part separately (along with an efficient mapping for combining  
the two answers to compute the overall rank of a character in $L$).  
This modular approach simplifies the presentation and, 
more importantly, enables us to achieve a significantly 
better redundancy for Theorem \ref{thm_local_mtf} (i.e., $n/(\lg n/t)^t$ instead of $n/2^t$), but is slightly suboptimal in terms of 
space (by an $\Omega(|\HRLX(L)|)$ additive term). In the actual proof, we show how the two data structures below 
can be ``merged" to avoid this overhead. 

\paragraph{A Rank data structure over $\RLE(L^{runs})$.}
Our first goal is to design a compressed data structure $\cD_{\RLE}$ that reports, for each symbol 
$c\in \Sigma$ and index $i \in [n]$, the number of occurrences of $c$ in $L[1:i]$ that are contained 
\emph{in $L^{runs}$}, i.e., the number of consecutive $0$'s in $MTF(L)[1:i]$ that correspond to runs 
of $c$. Since $\HRLX$ represents this substring by ``contracting" each run into a singleton 
(denoting its length), solving this problem succinctly essentially entails a Predecessor 
search\footnote{For a set of keys $S \subset \cU$ with $|S| = \kappa$, \textsc{Predecessor}$(i,S)$ 
returns $\max\{x \in S \text{ } | \text{ } x \leq i\}$.}
on the universe $[n]$ with $\kap=\kap(L)$ ``keys", where $\kap$ denotes the number 
of runs in $L$. Alas, under the standard representation of this input, as a $\kap$-sparse string in $\{0,1\}^n$, 
Predecessor search clearly requires at least $\lg{n \choose \kap}$ bits of space \cite{Pat,PT}, which could be 
$\gg |\HRLX(L)|$ (for example, when all but a single $0$-run are of constant length and separation, 
which is an oblivious feature to the previous representation). 
To adhere to the $\RLE$ space benchmark, we use 
a more suitable alternative representation of $L^{runs}$.

To this end, suppose for simplicity of exposition,   
that $L$ consists entirely of runs (i.e., $L= L^{runs}$), and  
that the character $c \in \Sigma$ corresponding to each 0-run is known at query time 
(this will be handled in the integrated data structure in Section \ref{sec_rle}). 
For $i \in [\kappa]$, let $\ell_i \in [n]$ denote the length of the $i^{th}$ run,  
and let $L' = (\ell_1,\ell_2,\ldots ,\ell_\kap) \in [n]^{\kap}$ be the string that encodes the run lengths. Note that 
$\HRLX$ spends precisely $\sum_i \lg \ell_i$ bits to encode this part (ignoring prefix-coding issues).
	
To compute $rk_{L^{runs}}(c,i)$, 
we design an adaptive augmented aB-tree, that essentially 
implements a predecessor search over the new representation $L'$ of $L^{runs}$: 
We first construct a $B$-tree $\cT$ 
over the array $L' \in [n]^{\kap}$, and augment each intermediate node $v$ of the tree with the (vector-valued) 
function $\varphi_{RLE}(v) := (\varphi^c_{\ell}(v))_{c \in \Sigma} \; \in [n]^{|\Sigma|}$
, where $\varphi^c_{\ell}(v)$ counts the total 
sum $\sum_{j \in \cT_v} \ell_j$ of run-lengths in the subtree of $v$, corresponding to runs of $c$. Given an index $i\in [n]$ and character $c \in \Sigma$, 
the query algorithm iteratively examines the labels of all $B$ children of a node $v\in \cT$ starting from the root, and recurses 
to the rightmost child $u$ of $v$ for which $\sum_c\varphi^c_\ell(u) \leq i$ (i.e., to the subtree that contains 
the interval $\ell_j$ to which $i$ belongs), collecting the sum of $\varphi^c_\ell (u)$'s along the query path.

To ensure query time $O(t')$, we break up the array as in \cite{Pat} into sub-arrays each of size $B^{t'}$ (for $B=\Theta(1)$),
and build the aforementioned tree over each sub-array (this is possible since the augmented vector $\varphi_{RLE}$ 
is a (component-wise) composable function). To ensure the desired space bound for representing $\cT$, we further augment each node $v$ with a 
``zeroth-order entropy" constraint $\varphi_0(v)$, counting the sum of marginal 
empirical entropies  
$n^v_c \lg (n_c/n)$\footnote{For $c \in \Sigma$ and node $v$, $n^v_c$ denotes the frequency of $c$ in the sub-array rooted at $v$.} of the elements in the subtree $\cT_v$ (which can be done recursively due to 
additivity of $\varphi_0$ w.r.t $v$'s). 
A standard packing argument then ensures $\cN(\kappa, \varphi) \leq 2^{\varphi_0(r)} \lesssim 2^{H_0(\cT_r)}$ (where $\cT_r$ is the sub-array rooted at $r$), 
as desired. We then invoke Theorem \ref{thm_pat_ab_compress} to compress $\cT$ to $H_0(L') + O\left(\frac{n \lg n}{B^{t'}}\right)$ bits, 
yielding exponentially small redundancy (up to $n^{1-\eps}$ additive terms). 
This ensures that 
the total space (in bits) occupied by $\cD_{\RLE}$ is essentially  
\[ H_0(L') + O \left(\frac{n \lg n}{B^{t'}}\right) \leq |\RLE(L^{runs})| + O\left(\frac{n \lg n}{B^{t'}}\right). \] 

The actual proof is slightly more involved, since the merged data structure needs to 
handle characters from both $L^{runs}$ and $MTF(L^{-runs})$ simultaneously, hence it must efficiently 
distinguish between 0-runs corresponding to different symbols. Another issue is that  
Theorem \ref{thm_pat_ab_compress} of \cite{Pat} is only useful for truly sub-linear alphabet sizes, whereas  
$(L')_i\in [n]$, hence in the actual proof we must also split long runs into chunks of length $\leq n^\eps$. 
A simple application of the log-sum inequality ensures 
this truncation does not increase space by 
more than an $\tilde{O}(n^{1-\eps})$ additive term.

\paragraph{A Rank data structure over $\MTF(L^{-runs})$.}
The more challenging task is computing $rk_{L^{-runs}}(c,i)$, 
i.e., the frequency of $c$ in $L[1:i]$ contained 
in the substring $\MTF(L^{-runs})$, which is obtained by applying the MTF transform to $L$ and deleting all $0$-runs 
(see Figure \ref{fig_red_black_L}). 
Note that the mapping from $i \in L$ to its corresponding index $i' \in \MTF(L^{-runs})$ amounts to 
subtracting all runs before $i$. This operation can be performed using a single 
partial-sum query to our integrated data structure (in Section \ref{sec_rle}),    
which collects the sum of $\varphi^c_\ell(u)$'s \emph{over all} $c \in \Sigma$ along the query path.

As discussed in the introduction, the adaptive nature of the MTF encoding has the major drawback 
that decoding the $j^{th}$ symbol $\MTF(L^{-runs})_j$,  
let alone computing its rank, requires knowing the corresponding MTF stack state $S_{j-1} \in \cS_{|\Sigma|}$ (i.e., the 
precise order of recently occurring symbols), which itself depends on the \emph{entire} history $L^{-runs}_{<j}$. 
A straightforward solution is to store the MTF stack 
state after every block of length $t'$ (where $t'$ is the desired query time), much like the ``marking" solution for decoding BWT, yielding a 
linear search for the stack-state $S_j$ from the nearest block, and thus a linear time-space trade-off. 

To speed up the search for the local stack-state, we observe the following key property of the MTF transform: 
Let $\MTF(x) := (m_1, m_2,\ldots , m_n)$ be the MTF transform of $x\in \Sigma^n$ (see Figure \ref{fig_red_black_L} for illustration).     
Let $\cI=[i,j]$ be any sub-interval of $[n]$, and denote by $S_{i-1},S_j \in \cS_{|\Sigma|}$ the corresponding 
stack-states at the start and endpoints of $\cI$. Now, consider the \emph{permutation} $\pi_\cI := 
\mathbf{Id_{\Sigma}} \mapsto \hat{S_j}$, 
obtained by simulating the MTF decoder on $(m_i,\ldots, m_j)$ \emph{starting from the identity} state $\mathbf{Id_{\Sigma}}$, i.e., ``restarting" the MTF decoding algorithm but running it on the \emph{encoded} 
substring $(\MTF(x)_i,\ldots, \MTF(x)_j)$, arriving at some final state ($\hat{S_j}$) at the end of $\cI$   
(note that this process is well-defined). 
Then the true stack-state $S_j$ satisfies:  $S_j = \pi_\cI \circ S_{i-1}$. The crucial point is that 
$\pi_\cI$ is \emph{independent} of the (true) stack state $S_{i-1}$, i.e., it is a \emph{local} function of $\MTF(x)_\cI$ only. 
 
We show that this ``decomposition" 
property of the MTF transform (Proposition \ref{perm independent of stack}), facilitates a   
 \emph{binary search} for the local stack-state $S_{j-1}$ (rather than linear-searching) with very little space overhead, as follows: 
 At preprocessing time, we build an augmented $B$-tree 
 over the array $\MTF(x_1,\ldots,x_n)$, where each intermediate node $v$ is augmented with the \emph{permutation} $\pi_v \in \cS_{|\Sigma|}$ 
 corresponding to its subtree $\MTF(\cI_v)$, obtained by 
 ``\emph{restarting}" the MTF decoder to the identity state $\mathbf{Id_{\Sigma}}$, and  
 simulating the MTF decoder from start to end of $\cI_v$, 
 as described above. 
 Note that this procedure is well defined,  
 and that the aforementioned observation is crucially used here, as the definition of aB-trees requires each augmented 
 value of an intermediate node to be a \emph{local function} of its own subtree. 
 At query time, the query algorithm traverses the root-to-leaf$(j)$ path,  
 \emph{composing} the corresponding (possibly inverse) permutations between the stack-states along the path, 
 depending on whether it recurses to a right or left subtree. We show this process ensures that when the query algorithm reaches 
 the leaf $\MTF(x)_j$, 
it possesses the correct stack-state $S_{j-1}$, and hence can correctly decode $x_j$. 
 While this algorithm supports only ``local decoding" (\textsc{Dictionary}) queries, with an extra simple trick, 
 the above property in fact facilitates a similar aB-tree supporting 
\textsc{Rank} queries under the MTF encoding (see Section \ref{sec_MTF}). 


Once again, in order to impose the desired space bound ($\approx H_0(\MTF(L^{-runs}))$) and to enable 
arbitrary query time $t'$, 
we augment the nodes of the tree with an additional 
zeroth-order entropy constraint, and break up the array into sub-arrays of size $\Theta(B^{t'})$, this time for 
$B \approx \frac{\lg n}{t'}$. 
Compressing each tree using Theorem \ref{thm_pat_ab_compress}, and adding an auxiliary data structure to 
aggregate query answers across sub-arrays, completes this part and establishes Theorem \ref{thm_local_mtf}.

\subsection{Lower Bound Overview} 

We prove the following cell-probe lower bound for a somewhat stronger version of Problem \ref{problem_1}, 
which requires the data structure to 
efficiently decode \emph{both} 
forward and \emph{inverse} dispositions of the induced BWT permutation between 
$X$ and $L := \BWT(X)$ (we note that both the FM-Index and our data structure from 
Theorem \ref{thm_local_bwt} satisfy this natural 
requirement\footnote{I.e., for these data structures, 
we can achieve $q=O(t)$ by increasing the redundancy $r$ by a mere factor of $2$.}, 
and elaborate on it in Section \ref{sec_LB}): 

\begin{restatable}
	{thm}{thmLB} \label{thm_LB}
Let $X\in_R \{0,1\}^n$ and let $\Pi_X \in \cS_n$ be the induced BWT permutation from indices in 
$L := \BWT(X)$ to indices in $X$. 
Then, any data structure that computes  $\Pi_X(i)$ and $\Pi^{-1}_X(j)$ for every 
$i,j\in [n]$ in time $t, q$ respectively, such that $t \cdot q \leq \delta \lg n/\lg \lg n$ (for some constant $\delta > 0$), 
in the cell-probe model with word size $w=\Theta(\lg n)$, 
must use $n + \Omega\left(n/tq\right)$ bits of space in expectation.
\end{restatable}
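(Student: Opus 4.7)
My plan is to prove Theorem \ref{thm_LB} by combining a new \emph{entropy-polarization lemma} for the BWT permutation with a non-uniform adaptation of Golynski's cell-elimination argument \cite{Golynski}. The ``$n$'' part of the bound is immediate: $\Pi_X$ determines the LF-mapping via $\pi_{LF}(i) = \Pi_X^{-1}(\Pi_X(i) - 1)$, and $\pi_{LF}$ (together with the induced character frequencies) determines $X$ up to a global bit-flip, so $H(\Pi_X) = n - O(1)$ whenever $X \in_R \{0,1\}^n$, and Shannon alone forces $n - O(1)$ bits of space in expectation. The polarization lemma refines this by decomposing $\Pi_X$ along \emph{context buckets}: partition $[n]$ into $2^k$ buckets $B_c$ (with $k = \Theta(\lg \lg n)$) by the length-$k$ sorted prefix of each BWT row. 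A standard concentration argument on the empirical context frequencies of a random binary string shows that $|B_c| = (1 \pm o(1))\, n/2^k$ for every $c$ with high probability, while the analytic heart of the lemma is that, conditioned on the bucket partition, the within-bucket order induced by $\Pi_X$ is within $o(1)$ total-variation distance of a \emph{uniform} permutation of $B_c$ -- because inside a bucket the order is governed by deeper, near-independent suffixes of the corresponding rotations. Thus the $n - O(1)$ bits of entropy of $\Pi_X$ are concentrated in the $2^k$ intra-bucket permutations in a form that Golynski's argument can exploit.

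With polarization in hand, I would execute a variant of Golynski's cell-elimination argument bucket by bucket. Assume a data structure of $s = n + r$ bits that answers $\Pi_X$ in $t$ probes and $\Pi^{-1}_X$ in $q$ probes. Fix a typical bucket $B_c$ of size $m = (1 \pm o(1))\, n/2^k$. The $2m$ forward/inverse queries indexed by $B_c$ collectively issue at most $m(t + q)$ probes, so pigeonhole and averaging identify $\kappa = \Theta(m/tq)$ ``hot'' cells that together cover a constant fraction of these queries. Iteratively guessing the contents of these $\kappa$ cells -- each recorded in $w = O(\lg n)$ bits -- yields an encoder-decoder pair for the within-bucket restriction $\Pi_X|_{B_c}$ of total length at most (the fraction of $s$ allocated to $B_c$) $+\, O(\kappa \lg n)$. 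Matched against the uniform-permutation entropy $\lg(m!) \approx m \lg m$ guaranteed by the polarization lemma, summing over all $2^k$ buckets and taking expectation over $X$ yields $r = \Omega(n/tq)$, which is exactly the claimed lower bound.

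The principal difficulty lies in the polarization lemma itself: cyclic rotations of $X$ are strongly correlated (they share most of their characters), so conditioning on a shared length-$k$ prefix skews the distribution of the remaining portions of those rotations. One must couple this conditional distribution to fresh uniform binary strings and control the coupling error uniformly over all $2^k$ buckets, which is why $k$ is constrained to $\Theta(\lg \lg n)$. A secondary technical hurdle is the non-uniform adaptation of Golynski itself: his original argument counts permutations, whereas here one must track conditional Shannon entropy across cell eliminations and guarantee that guesses made to handle one bucket do not corrupt the local entropy of another. This is precisely where the symmetry assumption ($\Pi^{-1}_X$ is also answered in time $q$) is used: both forward and inverse probes contribute to the per-bucket probe density that bounds $\kappa$. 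The balance between the $\Omega(1/tq)$ bits saved per eliminated cell and the $O(\lg n)$ bits spent to record its guess is what forces the hypothesis $tq \leq \delta \lg n / \lg \lg n$ in the theorem.
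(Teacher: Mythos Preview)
Your high-level strategy---entropy polarization plus a non-uniform Golynski argument---matches the paper's, but your polarization lemma is false as stated and cannot be repaired along the lines you describe. You claim that within each context bucket $B_c$ (of size $m \approx n/2^k$ with $k = \Theta(\lg\lg n)$), the induced order of $\Pi_X$ is close in total variation to a uniform permutation of $B_c$. But that would force
\[
\sum_{c \in \{0,1\}^k} H\bigl(\Pi_X|_{B_c}\bigr) \;\approx\; \sum_c \lg(|B_c|!) \;\approx\; 2^k \cdot m\lg m \;=\; n\lg(n/2^k) \;=\; \Theta(n\lg n),
\]
whereas $H(\Pi_X) \le H(X) = n$. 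The within-bucket ordering is determined by the tails of $m \approx n/\poly\lg n$ cyclic shifts of the \emph{same} $n$-bit string $X$; these tails are far from independent and carry only $O(1)$ bits of conditional entropy each on average, not the $\lg m$ bits a uniform permutation would require. Your bucket sizes are simply too large for each element to carry $\lg m$ bits, so the subsequent ``$\lg(m!) \approx m\lg m$'' accounting in the Golynski step collapses.

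The paper's polarization lemma goes in the opposite direction: rather than partitioning all of $[n]$, it exhibits a \emph{small} set $S$ of only $|S| \approx n/\lg n$ indices carrying $(1-o(1))n$ of the entropy---hence $\approx \lg n$ bits per index, which is precisely what Golynski's argument needs when $w = \Theta(\lg n)$. The set $S$ is constructed by chopping $X$ into $n/((1+\eps)\lg n)$ \emph{disjoint} blocks of length $(1+\eps)\lg n$; a second-moment calculation shows almost all blocks are pairwise distinct w.h.p., and a symmetry argument (permuting whole blocks preserves both the uniform law of $X$ and the no-collision event) shows the lexicographic order among the distinct blocks is exactly uniform, contributing $\lg(|S|!) \ge (1-O(\eps))n$ bits. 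Disjointness of the blocks is what buys independence---something that does not hold for the overlapping cyclic shifts populating your context buckets. The non-uniform Golynski step is then run once on this single set $S$ (not bucket-by-bucket), with the tiny residual $H(\Pi_{\bar S}\mid S,\Pi_S) \le \eps n$ encoded separately; the hypothesis $tq \le \delta\lg n/\lg\lg n$ is exactly what makes this residual term dominated by the $\Omega(n/tq)$ saving from cell elimination.
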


We stress that Theorem \ref{thm_LB} is more general, as our proof can yield nontrivial  
lower bounds against general (non-product) distributions $\mu$ on $\Sigma^n$ with ``sufficient block-wise independence",  
though a lower bound against uniform strings is in some sense stronger, as it states that the above redundancy 
cannot be avoided even if $\Pi_X$ is stored in \emph{uncompressed} form 
(see also Section \ref{sec_LB}). 

Our proof of Theorem \ref{thm_LB} is based on a ``nonuniform" variation of 
the ``cell-elimination" technique of \cite{Golynski}, who used it to prove a lower bound of $r \geq \Omega(n\lg n/tq)$ 
on the space redundancy of any data structure for the \emph{succinct permutations} problem 
$\textsc{Perms}_n$. In this problem, the goal is to represent a random permutation $\Pi \in_R \cS_n$ 
succinctly using $ \lg n! + o(n \lg n)$ bits of space, supporting forward and inverse evaluation 
queries in query times $t,q$ respectively, as above. 
Alas, \cite{Golynski}'s compression argument crucially requires that 
\begin{align}\label{cond_gol}
t,q \leq O\left(\frac{H(\Pi)}{n\cdot \lg \lg n}\right).  
\end{align}
When $\Pi$ is a \emph{uniformly random} permutation, i.e., $H(\Pi) \approx n\lg n$, 
this condition implies that the lower bound holds 
for $t,q \leq O(\lg n/\lg\lg n)$. In contrast, the BWT permutation of $X$ can have \emph{at most 
$n\lg |\Sigma| = O(n)$} bits of entropy 
for constant-size alphabets (as $\Pi_X$ is determined by $X$ itself),  
hence condition \eqref{cond_gol} does not yield \emph{any} lower bound whatsoever for our problem.

To overcome 
this
obstacle, we prove an ``entropy polarization" lemma for BWT: 
It turns out that for a random string $X$, while an \emph{average} coordinate $\Pi_X(i)$   
indeed carries only $O(1)$ bits of entropy, the entropy distribution has huge variance.
In fact, we show that for any $\eps \geq \tilde{\Omega}(1/\lg n)$, 
there is a subset $\cI$ of only $(1 - \eps)\frac{n}{\lg n}$ coordinates in $[n]$,  
whose total entropy is $H(\Pi_{X}(\cI)) \geq (1-O(\eps))n$, 
i.e., this small set of coordinates has maximal entropy ($\approx \lg n$ bits each), and 
essentially determines the entire BWT permutation\footnote{Note that here we view $\Pi_X$ as a mapping from $X$ to $L = \BWT(X)$ 
and not the other way around, but this is just for the sake of simplicity of exposition and looking at $\Pi^{-1}_X$ is of course equivalent.}. 
This lemma (Lemma \ref{lem_ent_var}) is reminiscent of 
\emph{wringing lemmas} in information theory \cite{Wringing}, 
and may be a BWT property of independent interest in other applications.

The intuition behind the proof is simple: Consider dividing $X$ into 
$s:=\frac{n}{C\lg n}$ disjoint blocks of size $C\lg n$ each, and let $\cI := \{I_i,\ldots , I_s\} \subset [n]$
denote the set of 
first coordinates in each block respectively.
Since $X$ is random, each of the $s$ blocks is an \emph{independent}  
random ($C\lg n$)-bit string, hence for a large enough constant $C$, with overwhelming probability these substrings will be distinct,  
and in particular, their lexicographic order will be uniquely determined. 
Conditioned on this likely event, this lexicographic ordering remains random, hence the BWT  
locations of these indices alone must recover this random ordering, which is worth $\Omega(s \lg s) = \Omega(n)$ 
bits of information. However, the birthday paradox requires that $C>2$ to avoid collisions, in which case 
the above argument can only show that a small constant fraction ($< 0.5 n$) of the total entropy 
can be ``extracted" from this small set, 
while the remaining $n - o(n)$ coordinates possess most of 
the entropy. 
Unfortunately, this guarantee is too weak for our ``cell-elimination" argument, and would only yield 
a trivial $\Omega(n)$ space lower bound,  while we are seeking a lower bound on the additive redundancy beyond $n$. 

To bypass this, we observe that setting $C = (1+\eps)$, the number of ``colliding" blocks (i.e., non-distinct substrings 
of length $(1+\eps) \lg n$) is still only $\tilde{O}(n^{1-\eps}) \ll \eps n/\lg n$ with high probability. 
Moreover, we show that conditioned on 
this event $\cE$, the lexicographic ordering among the remaining \emph{distinct} $\approx (1-2\eps) \frac{n}{\lg n}$
blocks remains random. 
(For uniform $n$-bit strings,  conditioning on $\cE$ preserves exact 
uniformity of the ordering, by symmetry of $\cE$ w.r.t block-permutation, but more generally, 
we note that for any prior distribution, conditioning on $\cE$ does not ``distort" the 
original distribution by more than $\approx \sqrt{\lg(1/\Pr[\cE])}=o(1)$ in statistical distance,  
hence this argument can be generalized to nonuniform strings).  
Since, conditioned on $\cE$, 
the BWT mapping on $\cI$ 
determines the lexicographic ordering of the blocks, the data processing inequality (DPI) 
implies that the entropy of $\Pi_X(\cI)$ is at least 
$\approx (1-2\eps)\frac{n}{\lg n} \cdot \lg \frac{n}{\lg n} \geq (1-3\eps)n$, as claimed.   

Applying the ``entropy polarization" lemma with $\eps = O(\lg \lg n/\lg n)$, we then show how to adapt 
\cite{Golynski}'s cell-elimination argument to nonuniform permutations, 
deleting `unpopular' cells and replacing them with an efficient encoding of the \emph{partial} bijection $\Pi_X(\cI)$ 
between (forward and inverse) queries $\in \cI$ probing these cells. The polarization lemma then 
ensures that the remaining map of $\Pi_X$ on $\bar{\cI} = [n] \setminus\cI$ can be encoded directly using 
$H(\Pi_X(\bar{\cI}) | \;\cI, \Pi_X(\cI)) \leq O(\eps n) = O(n\lg\lg n/\lg n)$ bits, which will be dominated by the redundancy 
we obtain from the compression argument (as long as $tq \lesssim \lg n/\lg\lg n$),  thereby completing the proof.

\section{A Locally Decodable MTF Code and Rank Data Structure} 
\label{sec_MTF}

In this section, we prove the following theorem, which is a more formal version of Theorem \ref{thm_local_mtf}.

\begin{theorem} \label{thm_local_mtf_formal}
For any string $x\in \Sigma^n$ with $|\Sigma| = O(1)$, there is a succinct data structure that encodes $x$ using at most 
\[
H_0(\MTF(x)) +  n \Big{/} \left(\frac{\lg n}{\max(t, \lg \lg n)}\right)^t + n^{1 - \Omega(1)}
\]
bits of space, 
supporting \textsc{Rank} and \textsc{Dictionary} queries in time $O(t)$, in the word-RAM model with word size $w=\Theta(\lg n)$. 
\end{theorem}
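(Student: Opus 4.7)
\medskip

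\noindent\textbf{Proof Plan.} The plan is to realize the ``locally-decodable MTF'' via a single augmented aB-tree over the codeword $m := \MTF(x) \in \{0,\dots,|\Sigma|-1\}^n$, with per-node labels rich enough to enable both \textsc{Rank} and \textsc{Dictionary} queries in $O(t)$ probes. Set $B := \lfloor \lg n / \max(t, \lg\lg n)\rfloor$ and partition $m$ into blocks of length $s := B^t$, building a $B$-ary aB-tree $\cT$ of depth $t$ on each block; the aB-tree query time will be $O(t)$ per block, and an auxiliary data structure of size $\tilde{O}(n/s)$ will aggregate partial answers across blocks (this is analogous to \cite{Pat} and exactly the reduction used in our overview). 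The main ingredient will be the correct choice of augmented label per node, which must simultaneously (i) be \emph{composable} so the splitting trick applies, (ii) be a local function of the subtree's slice of $m$, and (iii) admit the packing argument that makes $\cN(s,\varphi) \le 2^{H_0}$.

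\medskip

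\noindent I augment each internal node $v$ with the triple $\varphi_v := (\pi_v,\, \mathbf{c}_v,\, h_v)$, where $\pi_v \in \cS_{|\Sigma|}$ is the permutation obtained by \emph{restarting} the MTF decoder from the identity state $\mathbf{Id}_\Sigma$ and simulating it on the slice $m_{\cI_v}$ of the codeword indexed by $v$'s subtree; $\mathbf{c}_v \in [s]^{|\Sigma|}$ is the vector of $\MTF$-symbol frequencies in $m_{\cI_v}$; and $h_v := \sum_{j} n^v_j \lg(n/n^v_j)$ is the (additive) zeroth-order entropy contribution of $m_{\cI_v}$. All three are composable: the frequency and entropy components add, and the key point (Proposition \ref{perm independent of stack} of the paper, as previewed in the overview) is that for adjacent intervals $\cI_v = \cI_u \sqcup \cI_w$ one has $\pi_v = \pi_w \circ \pi_u$, a local function of the children's labels and \emph{independent of the true MTF stack-state} at the boundary. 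This is what makes $\pi_v$ a legal aB-tree label at all.

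\medskip

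\noindent For a \textsc{Dictionary} query on index $j$, the algorithm walks root-to-leaf along the $j$-path, and maintains a running permutation $\Pi$ initialized to $\mathbf{Id}_\Sigma$: whenever it recurses into a child, it composes $\Pi$ with the labels $\pi_\cdot$ of the \emph{left siblings} it skipped over. By the composition property, at the leaf $\Pi$ equals the true stack-state $S_{j-1}$, and a single table lookup on $(m_j, \Pi)$ decodes $x_j$; the entire walk is $O(t)$ probes and $O(1)$ RAM ops per node by packing labels into words (using $B \approx \lg n/\max(t,\lg\lg n)$). For \textsc{Rank}$(c, i)$ I use the ``extra simple trick'' alluded to in the overview: augment $v$ additionally with, for each $c \in \Sigma$, the count of positions in $\cI_v$ whose \emph{decoded} character (under the restarted simulation) equals $c$; composability here requires conjugating counts by the accumulated $\Pi$ when summing across siblings, which is still a local operation on child labels given $\Pi$. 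Along the root-to-leaf walk, the algorithm accumulates the count of $c$-occurrences from all left subtrees using its current $\Pi$, giving $\mathrm{rk}_x(c,i)$ in $O(t)$ time.

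\medskip

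\noindent For the space bound I invoke Theorem \ref{thm_pat_ab_compress}: a standard packing argument shows $\cN(s,\varphi) \le 2^{h_r}$ for any root label $\varphi$ with entropy coordinate $h_r$, since once $h_r$ and $\mathbf{c}_r$ are fixed, the number of $\MTF$-codewords of that type is at most $2^{h_r}$. Compressing each block's aB-tree thus costs $h_r + 2$ bits, summing to $H_0(\MTF(x)) + O(n/s)$; the shared look-up tables (depending only on $s, B, |\Sigma|$ and the query algorithms) occupy $|\Sigma|^{O(B)} \cdot \mathrm{poly}(s) = n^{1-\Omega(1)}$ bits, and the inter-block aggregator contributes $\tilde O(n/s)$. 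With $s = B^t$ and $B = \Theta(\lg n/\max(t,\lg\lg n))$, the redundancy is exactly $n/\bigl(\lg n/\max(t,\lg\lg n)\bigr)^{t} + n^{1-\Omega(1)}$ as stated. The main obstacle I anticipate is the \textsc{Rank} augmentation: verifying that the per-character-decoded count vector really is composable under conjugation by $\Pi$, and that the resulting label alphabet still fits in $O(w)$ bits per node given the chosen $B$, so that $O(1)$-time-per-node probing is preserved; modulo this bookkeeping, everything else follows by direct plug-in of Theorem \ref{thm_pat_ab_compress}.
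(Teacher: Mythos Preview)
Your approach is essentially the paper's: aB-trees over blocks of $m=\MTF(x)$ augmented with (i) the restarted-MTF permutation $\pi_v$, (ii) a per-character decoded-count vector under the restart assumption, and (iii) an additive zeroth-order entropy tag, then compressed via Theorem~\ref{thm_pat_ab_compress}. The composability you flag as the main obstacle is exactly the paper's recursive formula $\varphi_{rk}(v,\sigma) = \sum_\beta \varphi_{rk}\bigl(v_\beta,\; \varphi_\pi(v_{\beta-1})\circ\cdots\circ\varphi_\pi(v_1)(\sigma)\bigr)$, which is a local function of the children's labels alone (not of the global running $\Pi$; your phrasing conflates the label recursion with the query-time accumulation).

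There is, however, a genuine parameter-setting gap. Your choice $B = \lfloor \lg n / \max(t,\lg\lg n)\rfloor$ is too large. The label alphabet satisfies $|\Phi| = |\Phi_\pi|\cdot|\Phi_{rk}|\cdot|\Phi_0| = s^{\Theta(1)}\cdot\mathrm{polylog}\,n$ (the rank-vector lives in $\{0,\dots,s\}^{|\Sigma|}$ and the entropy tag in a set of size $\Theta(s^2\lg n)$), so the look-up tables of Theorem~\ref{thm_pat_ab_compress} occupy $|\Phi|^{\Theta(B)} = s^{\Theta(B)} = 2^{\Theta(tB\lg B)}$ words---\emph{not} $|\Sigma|^{O(B)}\cdot\mathrm{poly}(s)$ as you write. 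For $t$ near $\lg\lg n$ your $B$ gives $tB\lg B = \Theta(\lg n\cdot\lg\lg n)$, making the tables superpolynomial and simultaneously violating the hypothesis $B = O(w/\lg(s+|\Phi|))$ of Theorem~\ref{thm_pat_ab_compress}. The paper instead sets $B\lg B = \Theta\bigl(\lg n/\max(t,\lg\lg n)\bigr)$, which forces $tB\lg B = O(\lg n)$ and hence tables of size $n^{O(\eps)}$; the stated redundancy $n/(\lg n/\max(t,\lg\lg n))^t$ is then recovered after replacing $t$ by $2t$, since $B^2\ge B\lg B$.

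A secondary slip: your entropy tag $h_v = \sum_j n^v_j \lg(n/n^v_j)$, with \emph{local} frequencies inside the logarithm, is not additive over subtrees (so it is not a valid aB-tree label and the packing bound $\cN(s,\varphi)\le 2^{h_r}$ does not sum to $H_0(m)$). You need $\sum_j n^v_j \lg(n/f_j)$ with \emph{global} frequencies $f_j$, discretized to a multiple of $1/s$ so that $|\Phi_0|=O(s^2\lg n)$ stays bounded.
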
	
	
\paragraph{Setup and Notation.}
Let the alphabet $\Sigma = \{c_1, c_2, \cdots, c_{|\Sigma|}\}$, where $c_1 < c_2 < \cdots < c_{|\Sigma|}$ according to the lexicographical ordering on $\Sigma$. Let $S = (a_1, a_2, \cdots, a_{|\Sigma|})$ denote the MTF stack with $a_1$ at the top and $a_{|\Sigma|}$ at the bottom. For $j \in [|\Sigma|]$, let $S[j]$ denote the character at position $j$ in $S$, starting from the top. Fix a string $x = (x_1, x_2, \cdots, x_n) \in \Sigma^n$. Let $m = \MTF(x) = (m_1, m_2, \cdots, m_n) \in \{0, 1, \cdots, |\Sigma|-1\}^n$ be the Move-to-Front (MTF) encoding of $x$, with the initial MTF stack $S_0 := \left(c_1, c_2, \cdots, c_{|\Sigma|}\right)$. 

Given a MTF stack $S = (a_1, a_2, \cdots, a_{|\Sigma|})$ and a permutation $\pi \in \cS_{|\Sigma|}$, let $S' = \pi \circ S$ be the stack such that $S'[\pi(j)] = S[j] = a_j$ for all $j \in [|\Sigma|]$. 
We also associate with $S$ the permutation $\pi(S)$ which converts the initial stack $S_0$ to $S$, i.e., $S = \pi(S) \circ S_0$.
In this sense, we say that $S_0$ corresponds to the identity permutation $\mathbf{Id_{|\Sigma|}}$ on $[|\Sigma|]$, as
$S_0[j] = c_j$ for all $j \in [|\Sigma|]$. 
For $i \in [n]$, let $S_i$ be the stack induced by simulating the MTF decoder on $m[1 : i]$, starting from $S_0$. Equivalently, $S_i$ is the stack induced by 
$\MTF(x[1:i])$, i.e., the stack just after encoding the first $i$ characters of $x$, starting from $S_0$. For $0 \leq i < j \leq n$, let $\pi_{i, j} \in \cS_{|\Sigma|}$ be the unique permutation induced by simulating the MTF decoder on $m[i+1:j]$, starting from $S_i$.

\subsection{Properties of MTF Encoding}

The following proposition shows that for any $0 \leq i < j \leq n$, the permutation $\pi_{i, j}$ is a \emph{local} function of $m[i+1:j]$. So, these permutations $\pi_{i,j}$ are valid augmented values for an aB-tree built over $m = \MTF(x)$, without reference to the true MTF stacks $S_i$ and $S_j$.

\begin{proposition} \label{perm independent of stack}
	Fix $0 \leq i < j \leq n$, and let $S_i, S_j$ and $\pi_{i, j} \in \cS_{|\Sigma|}$ be as defined above. Then $\pi_{i, j}$ is independent of $S_i$ and $S_j$, given $m[i+1:j]$. Hence, we can generate $\pi_{i,j}$ by simulating the MTF decoding algorithm on $m[i+1:j]$, starting from the identity stack $S_0$.
\end{proposition}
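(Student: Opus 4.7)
The plan is to prove this by directly unwinding one step of the MTF decoder and recognizing that the induced action on the stack is a \emph{position-level} operation that makes no reference to the symbols currently held in the stack.

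First, I would define, for each possible input rank $r \in \{0,1,\ldots,|\Sigma|-1\}$, a fixed permutation $\tau_r \in \cS_{|\Sigma|}$ that records how positions are shuffled when the decoder reads $r$: position $r+1$ moves to the top, positions $1,\ldots,r$ each shift down by one, and positions $r+2,\ldots,|\Sigma|$ are unchanged. Explicitly, $\tau_r(r+1)=1$, $\tau_r(k)=k+1$ for $k \leq r$, and $\tau_r(k)=k$ for $k > r+1$. The crucial observation is that $\tau_r$ is a function of $r$ alone; it is defined purely on the index set $[|\Sigma|]$ and never references the identity of any character.

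Next I would verify, directly from the definition of the MTF update rule and the notation $S' = \pi \circ S$ (namely $S'[\pi(k)] = S[k]$), that for every $k$, the stack after reading $m_k$ satisfies $S_k = \tau_{m_k}\circ S_{k-1}$. Iterating this identity from $k = i+1$ to $k = j$ gives
\[
S_j \;=\; \bigl(\tau_{m_j} \circ \tau_{m_{j-1}} \circ \cdots \circ \tau_{m_{i+1}}\bigr) \circ S_i,
\]
and by uniqueness of the permutation taking $S_i$ to $S_j$, we conclude
\[
\pi_{i,j} \;=\; \tau_{m_j} \circ \tau_{m_{j-1}} \circ \cdots \circ \tau_{m_{i+1}}.
\]
Because each factor $\tau_{m_k}$ is determined by $m_k$ alone, $\pi_{i,j}$ depends only on the substring $m[i+1{:}j]$ and is independent of $S_i$ (and hence also of $S_j$, which is determined by $S_i$ and $\pi_{i,j}$).

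For the second assertion, I would substitute $S_i = S_0$ (the identity stack, which corresponds to $\mathbf{Id}_{|\Sigma|}$) into the iteration above. Simulating the MTF decoder on $m[i+1{:}j]$ starting from $S_0$ produces the stack $S' = \pi_{i,j} \circ S_0$, from which $\pi_{i,j}$ can be read off coordinate-wise via $\pi_{i,j}(k) = $ (the position of $c_k$ in $S'$). I do not anticipate a real obstacle; the only subtlety is bookkeeping with the $\pi \circ S$ convention so that the composition order in the display above is correct (the leftmost factor is the \emph{last} update, not the first), and making sure the $0$-indexed MTF ranks line up with the $1$-indexed stack positions used in the definition of $\tau_r$.
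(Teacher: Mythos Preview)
Your proof is correct and essentially the same as the paper's. The paper proceeds by induction on $j-i$, with the base case $\pi_{i,i+1}$ given by exactly your formula for $\tau_{m_{i+1}}$, and the inductive step using $\pi_{i,j} = \pi_{j-1,j} \circ \pi_{i,j-1}$; you simply unroll this recursion into the explicit product $\tau_{m_j}\circ\cdots\circ\tau_{m_{i+1}}$, which is a presentational rather than substantive difference.
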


\begin{proof}
	We prove this proposition by induction on $j-i$. Consider the base case, when $j - i = 1$. Then by definition of a single MTF step, we have
    \begin{equation}
    	\pi_{i,i+1}(k) = \begin{cases}
        k + 1 &\text{ if } k \leq m_{i+1}\\
    	1 &\text{ if } k = m_{i+1} + 1\\
        k &\text{ if } k > m_{i+1} + 1
    	\end{cases}
    \end{equation}
    Clearly, $\pi_{i, i + 1}$ is independent of $S_i$ and $S_{i+1}$ given $m_{i+1}$. This proves the base case.
    
    Now, suppose the claim is true for all $i, j$ such that $j - i = k \in \N$, and let $i, j$ be such that $j - i = k + 1$. Then by the induction hypothesis, $\pi_{i, j - 1}$ is independent of $S_i$ and $S_{j - 1}$ given $m[i+1:j-1]$. Moreover, $\pi_{j-1, j}$ is independent of $S_{j-1}$ and $S_j$ given $m_j$. Due to the sequential nature of the MTF encoding, we clearly have
    \[
    	\pi_{i, j} = \pi_{j-1, j} \circ \pi_{i, j-1}
    \]
    As both the permutations $\pi_{j-1, j}$ and $\pi_{i, j-1}$ are independent of stacks $S_i, S_{j-1}, S_j$ given $m[i+1:j]$, the same must be true for $\pi_{i,j}$.
\end{proof}

The following expression captures the evolution of the MTF stack, for all $0 \leq i < j \leq n$:
\begin{equation} \label{eqn_stack_evolution_forward}
S_j = \pi_{i, j} \circ S_i
\end{equation}

We can also ``reverse'' the steps of the MTF encoding. For fixed $0 \leq i < j \leq n$, if we are given the final stack $S_j$ and the permutation $\pi_{i, j}$, we can recover the initial stack $S_i$ by inverting $\pi_{i, j}$:
\[
	S_i = \pi^{-1}_{i, j} \circ S_j \label{stack evolution equation backward}
\]

]

\subsection{Locally Decodable MTF Code} \label{sub_sec_local_decoding_MTF}
We first describe the construction of a single aB-tree $\cT$ over the entire MTF encoding $m = \MTF(x) \in \{0, 1, \cdots, |\Sigma|-1\}^n$, which supports ``local decoding'' (\textsc{Dictionary}) queries.
Let $B \geq 2$ be the branching factor. Each node $v$ will be augmented with a permutation $\varphi_\pi(v) \in \Phi_\pi = \cS_{|\Sigma|}$. For $i \in [n]$, the leaf node $v$ corresponding to $m_i$ is augmented with the permutation $\varphi_\pi(v) = \pi_{i-1, i}$. Let $v$ be an internal node with its children being $v_1, v_2, \cdots, v_B$ in order from left to right. Then $v$ is augmented with the composition of permutations of its children, i.e.,
\begin{equation} \label{eqn_defn_pi}
	\varphi_\pi(v) = \varphi_\pi(v_B) \circ \varphi_\pi(v_{B-1}) \circ \cdots \circ \varphi_\pi(v_1).
\end{equation}
It is easy to observe that a node $v$ whose subtree $\cT_v$ is built over the sub-array $m[i+1:j]$ is augmented with the value $\varphi_\pi(v) = \pi_{i, j}$. Now, Proposition \ref{perm independent of stack} ensures that this is a legitimate definition of an aB-tree, because the value of a leaf is a function of its array element, and the value of an internal node is a function of the values of its $B$ children.

The query algorithm maintains a MTF stack $S$, which is initialized to the identity stack $S_0$ at the beginning of the array. Let $i \in [n]$ be the query index. The algorithm traverses down the tree, updating $S$ at each level. It maintains the invariant that whenever it visits a node $v$ whose sub-tree encompasses $m[j+1:k]$, it updates $S$ to the true stack $S_j$ just before the beginning of $m[j+1:k]$.

We describe how to maintain this invariant recursively. The base case is the root (at depth $d = 0$) whose subtree contains the entire array $m$. So, the query algorithm initializes $S = S_0$, which corresponds to the true initial MTF stack. Now, let $v$ be a node at depth $d$ whose sub-tree $T_v$ encompasses $m[j+1:k]$. Suppose the query algorithm has visited $v$, and $S$ is the true MTF stack $S_j$.  By assumption, $j+1 \leq i \leq k$. Let $v_1, v_2, \cdots, v_B$ be the children of $v$ in order from left to right, and let $v_{\beta^*}$ be the child of $v$ whose sub-tree includes $i$. Then we update $S$ as follows:
\begin{equation} \label{eqn_mtf_stack_evolution}
	S \leftarrow \varphi_\pi(v_{\beta^* -1}) \circ \varphi_\pi(v_{\beta^*-2}) \circ \cdots \circ \varphi_\pi(v_1) \circ S.
\end{equation}
 The above procedure explains the update rule which maintains the invariant at a node at depth $d+1$, assuming the invariant was maintained at a node at depth $d$. Thus, the proof that the invariant is maintained follows by induction on $d$.
 
Eventually, the algorithm reaches the leaf node corresponding to $m_i$. At this point, the MTF stack $S$ is the true stack $S_{i-1}$. Hence, it reports $x_i = S[m_i]$. The running time is $t = O(\lg_B n)$.

For the sake of simplicity, we have stated the update rule \ref{eqn_mtf_stack_evolution} purely in terms of forward compositions of permutations $\pi_{i,j}$. In practice, if $\beta^* > B/2$, one can equivalently update $S$ by starting from $\varphi_v \circ S$ and composing the inverse permutations $\varphi_\pi^{-1}(v_\beta)$ for $\beta \geq  \beta^*$:
\[
	S \leftarrow \varphi^{-1}_\pi(v_{\beta^*}) \circ \varphi_\pi^{-1}(v_{\beta^*+1}) \circ \cdots \circ \varphi_\pi^{-1}(v_{\beta}) \circ \varphi_\pi(v) \circ S.
\]
 However, since all permutations $\varphi_\pi(v_\beta), \beta \in [B]$ are stored in a word, both update rules take $O(1)$ time, and so the query time remains unaltered. Henceforth, we will continue to state the update rules purely in terms of forward compositions.

\subsection{Extension 
to Rank Queries (over MTF)} \label{sub_sec_rank_MTF}
The aB-tree $\cT$ above only supports ``local decoding" (\textsc{Dictionary}) queries over $m = \MTF(x)$, while our application requires answering 
\textsc{Rank} queries. 
We now show how $\cT$ can indeed be extended, via an additional simple observation, 
to support \textsc{Rank} queries under the MTF encoding.

Let $v$ be a node in the aB-tree $\cT$, whose subtree $\cT_v$ is built over the sub-array $m[i+1:j]$. We would like to augment $v$ with a vector $\tilde{\varphi}_{rk}(v) = \left(\tilde{\varphi}_{rk}(v, c_\sigma)\right)_{\sigma \in [|\Sigma|]} \in \{0, 1, \cdots, n\}^{|\Sigma|}$, such that $\tilde{\varphi}_{rk}(v, c_\sigma)$ is the frequency of the character $c_\sigma \in \Sigma$ in $x[i+1:j]$. However, as $\cT$ is built over $m = \MTF(x)$, and two occurrences of the same character $c \in \Sigma$ can be assigned distinct symbols in the MTF encoding, these augmented values are not consistent with the definition of an aB-tree.

To resolve this difficulty, we again use the fact that the permutation $\pi_{i, j}$ depends only on the sub-array $m[i+1:j]$. Recall that $S_0 = \left(c_1, c_2, \cdots, c_{|\Sigma|}\right)$ corresponds to the identity permutation $\mathbf{Id_{|\Sigma|}}$. For a node $v$, let $\varphi_{rk}(v) := \left(\varphi_{rk}(v, \sigma)\right)_{\sigma \in [|\Sigma|]}$, where $\varphi_{rk}(v, \sigma)$ is the frequency of $c_\sigma$ in the sub-array rooted at $v$, \emph{assuming} the MTF stack at the beginning of this sub-array is $S_0$. For a leaf node $v$ at $i \in [n]$, we have $\varphi_{rk}(v, \sigma) = 1$ if $m_i = \sigma - 1$, and $0$ otherwise.

Now, let $v$ be an internal node with children $v_1, v_2, \cdots, v_B$. Fix a character $c_\sigma \in \Sigma$. In general, the $\MTF$ stack at the beginning of the sub-array rooted at $\cT_v$ will be different from the $\MTF$ stack at the beginning of the sub-array $\cT_{v_\beta}$ rooted at each child $v_\beta, \beta > 1$. So, in order to express $\varphi_{rk}(v, \sigma)$ in terms of the values of its children, we need to 
add the entry of the vector $\varphi_{rk}(v_\beta)$ which corresponds to $c_\sigma$, for each $\beta \in [B]$.
We do this using the permutations $\varphi_{\pi}(v_\beta), \beta \in [B]$. For $\beta \in [B]$, the true MTF stack at the beginning of the sub-array rooted at $v_\beta$, assuming the MTF stack at the beginning of the sub-array rooted at $v$ is $S_0$, is given by Equation \ref{eqn_mtf_stack_evolution}. So, we have 
\begin{equation} \label{eqn_defn_rank}
\varphi_{rk}(v, \sigma) = \sum_{\beta=1}^B \varphi_{rk}(v_\beta, \varphi_{\pi}(v_{\beta-1}) \circ \varphi_{\pi}(v_{\beta-2}) \circ \cdots \circ \varphi_{\pi}(v_1)(\sigma))
\end{equation}

Let $\Phi_{rk} = \{0, 1, \cdots, n\}^{|\Sigma|}$. We augment each node $v$ with $\varphi_{rk}(v) \in \Phi_{rk}$. As we also encode the permutation $\varphi_\pi(v)$, the value at each internal node is a function of the values of its children, and hence this is a legitimate aB-tree.

The query algorithm, given $(c_\sigma, i) \in \Sigma \times [n]$, initializes a rank counter $rk = 0$, and traverses the same root-to-leaf path as before. Fix  an internal node $v$, with children $v_1, v_2, \cdots, v_B$, in its path. Let $\beta^* \in [B]$ be such that the sub-array rooted at $v_{\beta^*}$ contains the index $i$. The algorithm updates $rk$ as follows:
\begin{equation} \label{eqn_rank_update_rule}
rk \leftarrow rk + \sum_{\beta=1}^{\beta^*-1} \varphi_{rk}(v_\beta, \varphi_{\pi}(v_{\beta-1}) \circ \varphi_{\pi}(v_{\beta-2}) \circ \cdots \circ \varphi_{\pi}(v_1)(\sigma))
\end{equation}
Then it recurses to $v_{\beta^*}$ and performs this step until it reaches the leaf and returns $rk_x(c_\sigma, i)$.

\subsection{Compressing the MTF aB-tree} \label{sub_sec_MTF_space}
We now describe how to compress the aB-tree $\cT$ defined above, using Theorem \ref{thm_pat_ab_compress}, to support \textsc{Rank} 
(and hence \textsc{Dictionary}) queries under the MTF (followed by arithmetic) encoding, with respect to the desired space bound 
$H_0(\MTF(x))$. Let $O(t)$ be the desired query time.
Choose $B \geq 2$ such that $B \lg B = \frac{\epsilon \lg n}{\max(t|\Sigma|, \lg \lg n)}$
for some small $\epsilon > 0$. Let $r = B^t$. We divide $m$ into $n / r$ sub-arrays $A_1, A_2, \cdots, A_{n/r}$ of size $r$ and build an aB-tree over each sub-array. We show how to support \textsc{Dictionary} and \textsc{Rank} queries within each sub-array in time $O(\lg_B r) = O(t)$.


For each $j \in [n/r]$, we store the true MTF stack at the beginning of the sub-array $A_j$, the frequency of each character $c \in \Sigma$ in the prefix $x[1 : (j-1)r]$, and its index in memory.

Given a \textsc{Dictionary} query with index $i \in [n]$, the query algorithm determines the sub-array $A_j$ ($j = \lceil i/r \rceil$) containing $i$, initializes $S$ to the MTF stack $S_{(j-1)r}$ just before $A_j$, and performs the query algorithm described in Section \ref{sub_sec_local_decoding_MTF} on the aB-tree over $A_j$, with query index $i - (j-1)r$.

Similarly, given a \textsc{Rank} query $(c_\sigma,i) \in \Sigma \times [n]$, the query algorithm determines the sub-array $A_j$ containing $i$, reads $r' := rk_x(c_\sigma, (j-1)r)$, the rank of $c_\sigma$ in the prefix $x[1 : (j-1)r]$, and builds the permutation $\pi^* = \pi_{0,(j-1)r}$ corresponding to the MTF stack $S_{(j-1)r}$. Then, it performs the query algorithm described in Section \ref{sub_sec_rank_MTF} on the aB-tree over $A_j$, with query $\left(c_{\pi^*(\sigma)}, i - (j - 1)r\right) \in \Sigma \times [r]$. Finally, it adds $r'$ to this answer and returns the sum.


For a MTF character $\sigma \in \{0, 1, \cdots, |\Sigma|-1\}$, let $f_\sigma$ be the frequency of $\sigma$ in $m$. Following \cite{Pat}, we define a measure of ``entropy per character''. For $\sigma \in \{0, 1, \cdots, |\Sigma|-1\}$, we encode each occurrence of $\sigma$ in $m$ using $\lg \frac{n}{f_\sigma}$ bits, rounded up to the nearest multiple of $1/r$. We impose a \emph{zeroth-order entropy constraint} by augmenting each node $v$ with an additional value $\varphi_0(v)$, which is the sum of the entropy (suitably discretized, as described above) of the symbols in its subtree. We have
\[
	H_0(m) = \sum_{\sigma = 0}^{|\Sigma|-1} f_\sigma \lg \frac{n}{f_\sigma} = \sum_{i = 1}^n \lg \frac{n}{f_{m_i}} = \sum_{j = 1}^{n/r} \sum_{i \in A_j} \lg \frac{n}{f_{m_i}} = \sum_{j = 1}^{n / r} H_0(A_j),
\]
where $H_0(A_j)$ is the sum of entropy of the symbols in $A_j$. Note that the assigned entropy $\lg \frac{n}{f_\sigma}$ of each occurrence of a character $\sigma$ is a function of its frequency in the \emph{entire} array $m$ (not in $A_j$).

Let $\Phi_0$ be the alphabet of these values $\phi_0(v)$. As the (discretized) entropy of each occurrence of a character can attain one of $O(r \lg n)$ values and the subtree of each node has at most $r$ leaves, we have $|\Phi_0| = O(r^2 \lg n)$.

Thus, for each node $v$, we encode the vector of values $\varphi(v) = (\varphi_\pi(v), \varphi_{rk}(v), \varphi_0(v))$. Now, for a given value of $\varphi = (\varphi_\pi, \varphi_{rk}, \varphi_0)$, the number of arrays $A$ of length $r$ with $H_0(A) = \varphi_0$ is at most $2^{\varphi_0}$ by a packing argument. So, we have $\cN(r, \varphi) \leq \cN(r, \varphi_0) \leq 2^{\varphi_0}$, and hence we can apply Theorem \ref{thm_pat_ab_compress} to store an aB-tree of size $r$, having value $\varphi = (\varphi_\pi, \varphi_{rk}, \varphi_0)$ at the root, using $\varphi_0 + 2$ bits. Summing this space bound over all $n/r$ sub-arrays $A_j$, we get that the space required to store the aB-trees is at most $\sum_{j = 1}^{n / r} \left(H_0(A_j) + 2\right) = H_0(m) + 2n/r$ bits.

The additional space required to store the true MTF stack and the rank of each character $c \in \Sigma$ at the beginning of each sub-array $A_j, j \in [n/r]$, is at most $\frac{n}{r} \left(|\Sigma| \lg n + |\Sigma| \lg |\Sigma|\right)$.

Now we analyze the space required for the look-up tables. We have the alphabet size $|\Phi| = |\Phi_{\pi}| \cdot |\Phi_{rk}| \cdot |\Phi_0| \leq O(|\Sigma|! \cdot (r+1)^{|\Sigma|} \cdot r^2 \lg n)$ with $r = B^t$. So the look-up tables occupy (in words)
\[
O\left(|\Phi|^{B+1} + B \cdot |\Phi|^B\right) = 2^{O\left(B |\Sigma| \lg |\Sigma| + t |\Sigma| \cdot B \lg B + B \lg \lg n\right)} = 2^{O(\epsilon \lg n)} = n^{O(\epsilon)},
\]
where the penultimate equality follows by considering the value of $B$ in two cases:
\begin{itemize}
	\item If $t |\Sigma| > \lg \lg n$, then $B \lg B = \frac{\eps \lg n}{t |\Sigma|}$. So, $t |\Sigma| \cdot B \lg B = \eps \lg n$, and $B \lg \lg n \leq B \cdot t |\Sigma| \leq \eps \lg n$.
	\item Otherwise, $B \lg B = \frac{\eps \lg n}{\lg \lg n}$. So, $B \lg \lg n \leq \eps \lg n$, and $t |\Sigma| \cdot B \lg B \leq \lg \lg n \cdot B \lg B  = \eps \lg n$.
\end{itemize}
This space usage is negligible for small enough constant $\epsilon > 0$. However, as $B \geq 2$, the minimum redundancy is (ignoring poly$\log(n)$ terms)
\[
	O\left(|\Phi|^3\right) = O_{|\Sigma|}\left(r^{3 (|\Sigma| + 2)}\right) = O_{|\Sigma|}\left(r^{3 |\Sigma| + 6}\right)
\]
So, the redundancy is $O\left(\frac{n}{r} + r^{3 |\Sigma| + 6}\right)$. We balance the terms to get that the redundancy is $O\left(\max\left\{\frac{n}{r}, n^{1 - 1/(3|\Sigma| + 7)}\right\}\right)$. We use the assumption that  $|\Sigma| = O(1)$, and adjust $t$ by a constant factor, to get that the overall space requirement is 
\[
	s = H_0(m) + n \Big{/} \left(\frac{\lg n}{\max(t, \lg \lg n)}\right)^t + n^{1 - \Omega(1)}.
\]
This concludes the proof of Theorem \ref{thm_local_mtf_formal}.

\section{Succinct Rank Data Structure over $\HRLX$} \label{sec_rle}

In this section, we prove Theorem \ref{thm_exp_tradeoff_rk_L}, which is restated below:
\rank*

\paragraph{Setup and Notation.}
Recall that $L = \BWT(x) \in \Sigma^n$. Let $m = \MTF(L)$. Then $m$ is a string of length $n$ over the MTF alphabet $\{\mathbf{0}, \mathbf{1}, \mathbf{2}, \cdots, \mathbf{|\Sigma|-1}\}$ (boldface symbols indicate MTF characters). Let $\bar{m}$ be the string obtained from $m$ by replacing each run of $0$'s with a single character which represents its length. Thus, $\bar{m}$ is a string of length $\bar{N} \leq n$ over the expanded alphabet
$\bar{\Sigma} := [\mathbf{|\Sigma|-1}] \cup [n]$,
where $[\mathbf{|\Sigma|-1}] :=  \{\mathbf{1}, \mathbf{2}, \cdots, \mathbf{|\Sigma|-1}\}$. The information-theoretic minimum space required to encode $\bar{m}$ using a \emph{zeroth order prefix-free code} is
\[
	H_0(\bar{m}) = \sum_{\sigma \in \bar{\Sigma}} f_\sigma \lg \frac{\bar{N}}{f_\sigma},
\]
where $f_\sigma$ is the frequency of $\sigma$ in $\bar{m}$, for all $\sigma \in \bar{\Sigma}$. Consider any code which converts $x$ to $m = \MTF(L)$ using BWT followed by $\MTF$ encoding, and then compresses $m$ using Run-length Encoding of $0$-runs followed by prefix-free coding over the expanded alphabet $\bar{\Sigma} = [\mathbf{|\Sigma|-1}] \cup [n]$. This code requires at least $H_0(\bar{m})$ bits of space. In particular, we have $|\HRLX(L)| \geq H_0(\bar{m})$ by definition of the $\HRLX$ encoding.

We will build an aB-tree over a slightly modified encoding of $\bar{m}$, which is quite similar to the one defined in Section \ref{sec_MTF} but is succinct with respect to $H_0(\bar{m})$ (and hence with respect to $|\HRLX(L)|$).

Let $\epsilon \in (0,1)$ be a small constant. We divide each run of $0$'s of length $\ell_j > n^\epsilon$ in $m$ into $\big{\lceil}\frac{\ell_j}{n^\epsilon}\big{\rceil}$ runs of length at most $n^\epsilon$ each. We then replace each run of $0$'s by a single character which represents its length. Thus, we get a new string $m'$ of length $N \leq n$ over the alphabet
$\Sigma' := [\mathbf{|\Sigma|-1}] \cup [n^\epsilon]$.
This is done to minimize the space required for the additional look-up tables accompanying the aB-trees which is defined later. The following lemma ensures that this step increases the space usage of the aB-trees by at most an $\tilde{O}\left(n^{1 - \eps}\right)$ \emph{additive} term.

\begin{lemma} \label{lem_rle_run_divide_entropy}
	Let $\bar{m}$ and $m'$ be as defined above. Then
	\[
	H_0\left(m'\right) \leq H_0\left(\bar{m}\right) + O\left(n^{1-\epsilon} \lg n \right).
	\]
\end{lemma}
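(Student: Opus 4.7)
The plan is to prove the inequality by a direct term-by-term comparison of the two Shannon entropies, controlled by two observations: splitting introduces few new characters, and convexity of $x\lg x$ keeps the per-symbol entropy change small for those symbols already present in $\bar m$.

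First I would establish the counting bound on the number of newly introduced symbols. Let $R_\ell$ denote the number of long runs (those of length $>n^\epsilon$) in $\bar m$; since each such run occupies more than $n^\epsilon$ positions in $m$ and the total length of all $0$-runs is at most $n$, we have $R_\ell\le n^{1-\epsilon}$. Writing $\delta_\sigma:=f'_\sigma-f_\sigma\ge 0$ for $\sigma\in[\mathbf{|\Sigma|-1}]\cup[n^\epsilon]$, the total number of chunk-characters produced by splitting long runs satisfies
\[
\sum_{\sigma\in[n^\epsilon]}\delta_\sigma \;=\;\sum_{\ell>n^\epsilon} f_\ell\,\bigl\lceil \ell/n^\epsilon\bigr\rceil\;\le\;\frac{n}{n^\epsilon}+R_\ell\;\le\;2n^{1-\epsilon},
\]
and consequently $N-\bar N = \sum_\sigma\delta_\sigma - R_\ell \le n^{1-\epsilon}$.

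Next I would decompose $H_0(m')-H_0(\bar m)$ according to the three sub-alphabets $[\mathbf{|\Sigma|-1}]$, $[n^\epsilon]$, and $[n]\setminus[n^\epsilon]$ of $\bar\Sigma$. The last family (long-run characters eliminated in $m'$) contributes $-\sum_{\ell>n^\epsilon}f_\ell\lg(\bar N/f_\ell)\le 0$ to the difference and is discarded. For each $\sigma$ in the remaining alphabets with $f_\sigma\ge 1$, I rewrite
\[
f'_\sigma\lg(N/f'_\sigma)-f_\sigma\lg(\bar N/f_\sigma)\;=\;f_\sigma\lg(N/\bar N)+\delta_\sigma\lg N-\bigl(f'_\sigma\lg f'_\sigma-f_\sigma\lg f_\sigma\bigr),
\]
and invoke the convexity of $x\lg x$ (whose derivative $\lg x + 1/\ln 2$ is at least $\lg f_\sigma$ on $[f_\sigma,f'_\sigma]$) to get $f'_\sigma\lg f'_\sigma-f_\sigma\lg f_\sigma\ge\delta_\sigma\lg f_\sigma$. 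This reduces the per-symbol contribution to at most $f_\sigma\lg(N/\bar N)+\delta_\sigma\lg(N/f_\sigma)\le f_\sigma\lg(N/\bar N)+\delta_\sigma\lg N$. For the (few) short-length symbols with $f_\sigma=0$ but $\delta_\sigma>0$, created for the first time by splitting, I use the direct bound $f'_\sigma\lg(N/f'_\sigma)\le\delta_\sigma\lg N$.

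Summing the per-symbol inequalities over $\sigma\in[\mathbf{|\Sigma|-1}]\cup[n^\epsilon]$ yields
\[
H_0(m')-H_0(\bar m)\;\le\;\bar N\lg(N/\bar N)+\lg N\cdot\!\sum_\sigma\delta_\sigma\;\le\; O(n^{1-\epsilon})+2n^{1-\epsilon}\lg n\;=\;O(n^{1-\epsilon}\lg n),
\]
using the elementary estimate $\bar N\lg(1+n^{1-\epsilon}/\bar N)\le n^{1-\epsilon}/\ln 2$. The only nontrivial step is the convexity bound on $f'_\sigma\lg f'_\sigma - f_\sigma\lg f_\sigma$ together with separate handling of the $f_\sigma=0$ edge case; everything else is counting and bookkeeping, once the sharp bound $\sum_\sigma\delta_\sigma\le 2n^{1-\epsilon}$ is in place.
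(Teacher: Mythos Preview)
Your proof is correct and follows essentially the same approach as the paper's: both establish the $O(n^{1-\epsilon})$ bound on the number of newly introduced symbols, discard the nonnegative long-run contributions to $H_0(\bar m)$, and handle the renormalization term $\bar N\lg(N/\bar N)=O(n^{1-\epsilon})$ identically. The only difference is packaging: where the paper applies the Log-Sum Inequality to aggregate the contribution of the new chunk characters, you use the equivalent tangent-line bound from convexity of $x\lg x$ together with the crude estimate $\lg(N/f_\sigma)\le\lg N$; both routes yield the same $O(n^{1-\epsilon}\lg n)$.
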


Intuitively, this lemma holds because the process of division of large runs introduces at most $n^{1 - \eps}$ additional symbols in $m'$ as compared to $\bar{m}$. Moreover, the \emph{relative} frequency of any character $\sigma \in \Sigma'$ only changes slightly, which allows us to bound the difference in the contribution of $\sigma$ to $H_0(\bar{m})$ and $H_0(m')$. We postpone the formal proof of this lemma to Section \ref{subsec_rle_run_divide_entropy_proof}.

\subsection{Succinct aB-tree over $m'$, and additional data structures}

Fix the branching factor $B \geq 2$ to be constant, and let $O(t')$ be the desired query time. Let $r = B^{t'}$. We divide $m'$ into $N/r$ sub-arrays $A'_1, A'_2, \cdots A'_{N/r}$ of length $r$, and build an aB-tree over each sub-array. We augment each node $v$ with a value $\varphi(v) = (\varphi_\pi(v), \varphi_{rk}(v), \varphi_0(v))$, where $\varphi_\pi(v) \in \cS_{|\Sigma|}$, $\varphi_{rk}(v) = \left(\varphi_{rk}(v, \sigma)\right)_{\sigma \in [\ell]} \in \{0, 1, \cdots, n^\eps \, r\}^{|\Sigma|}$, and $\varphi_0(v) \in [0, r \lg N]$. These augmented values have the same meaning as in Section \ref{sec_MTF}. We define these values formally below.

For each node $v$, we would like $\varphi_\pi(v) \in \cS_{|\Sigma|}$ to be the permutation induced by the MTF encoding on the sub-array of $m'$ (which corresponds to a contiguous sub-array of $\MTF(L)$) over which the subtree $\cT_v$ is built. Similarly, we would like $\varphi_{rk}(v, \sigma)$ to be the frequency of $c_\sigma$ in the sub-array rooted at $v$, assuming the MTF stack at the beginning of this sub-array is $S_0 = (c_1, c_2, \cdots, c_{|\Sigma|})$.

First, we define the augmented values at leaf nodes. Let $v$ be a leaf node corresponding to $m'_i \in \Sigma'$ for some $i \in [N]$. If $m'_i$ is a $\MTF$ symbol, i.e., $m'_i \in [\mathbf{|\Sigma|-1}]$, then we set $\varphi_\pi(v)$ and $\varphi_{rk}(v)$ exactly as defined in Section \ref{sec_MTF}. In particular, we define $\varphi_{rk}(v, \sigma^*) = 1$ for $\sigma^* = m'_i + 1$, and $\varphi_{rk}(v, \sigma) = 0$ for all $\sigma \neq \sigma^*$. If $m'_i$ corresponds to a run of $0$'s of length $\ell_j$ in $\MTF(L)$, then we define $\varphi_\pi(v) = \mathbf{Id}_{|\Sigma|}$ to be the identity permutation, $\varphi_{rk}(v, 1) = \ell_j$, and $\varphi_{rk}(v, \sigma) = 0$ for all $\sigma > 1$. Here, we use the fact that the $\MTF$ stack does not change within a run of $0$'s.

We now define the values $\varphi_\pi(v)$ and $\varphi_{rk}(v)$ at each internal node $v$ recursively in terms of the values at its $B$ children $v_1, v_2, \cdots, v_B$, as given by Equations \ref{eqn_defn_pi} and \ref{eqn_defn_rank} respectively.

Finally, we specify the entropy constraint $\varphi_0$. Recall that for $\sigma \in \Sigma' = [\mathbf{|\Sigma|-1}] \cup [n^\epsilon]$, $f_\sigma$ denotes the frequency of $\sigma$ in $m'$. For each $\sigma \in \Sigma'$, we encode each occurrence of $\sigma$ in $m'$ using $\lg \frac{N}{f_\sigma}$ bits, rounded up to the nearest multiple of $1/r$. We impose a \emph{zeroth-order entropy constraint} by augmenting each node $v$ with $\varphi_0(v)$, the sum of the entropy of the symbols in its subtree. By the same arguments as in Section \ref{sub_sec_MTF_space}, the space occupied by the aB-trees is at most $H_0(m') + 2N / r$.

Additionally, we store the following information, for each $j \in [N/r]$:
\begin{itemize}
	\item The true MTF stack $S_{(j-1)r}$ at the beginning of the sub-array $A'_j$.
	\item The frequency of each character $c \in \Sigma$ in the prefix $m'[1 : (j-1)r] = \left(A'_1, \cdots, A'_{j-1}\right)$.
	\item The index $i_j \in [n]$ of the character in $m$ corresponding to the first character $m'_{(j-1)r+1}$ of $A'_j$ (if $m'_{(j-1)r+1}$ represents a run in $m$, then we store the starting index of the run). 
	Let $T = \{i_j \in [n]\text{ } | \text{ } j \in [N/r]\}$ be the set of indices.
\end{itemize}
We also store the map $h : T \rightarrow [N/r]$, given by $h(i_j) = j$ for all $j \in [N/r]$. Finally, we build a predecessor data structure $D_{pred}$ over $T$. As there are at most $\frac{N}{r} \leq \frac{n}{r}$ keys from a universe of size $n$, there exists a data structure which can answer predecessor queries in time $O(t')$ using space $\frac{n}{r} \cdot r^{\Omega(1/t')} \cdot O(\lg n) = O\left(\frac{n \lg n}{B^{\Theta(t')}}\right)$ bits (for details, see \cite{PT}).

\subsection{Query algorithm} Let the query be $(c, i) \in \Sigma \times [n]$.
	\begin{itemize}
		\item Compute $i' = D_{pred}(i) \in T$, and index $j = h(i') \in [n/r]$ of the corresponding sub-array in $m'$. 
		\item Define and initialize the following variables:
		\begin{itemize}
			\item An MTF stack $S$, initialized to $S_{(j-1)r}$ (the true stack just before $A'_j$), as well as the corresponding permutation $\pi^* = \pi_{0, (j-1)r}$.
			\item A rank counter $rk$, initialized to the frequency of $c$ in the prefix $m'[1 : (j-1)r]$.
			\item A partial sum counter $PS$, initialized to $i'-1$. At any point, let $v$ be the last node visited by the query algorithm. Then $PS$ records the index in $m$ corresponding to the left-most node in the sub-array rooted at $v$ (for a run, we store its starting index).
		\end{itemize}
		\item Start from the root node of the aB-tree built over $A'_j$ and recursively perform the following for each node $v$ (with children $v_1, v_2, \cdots, v_B$) in the path (adaptively defined below), until a leaf node is reached:
		\begin{itemize}
			\item Let $\beta^* \in [B]$ be the largest index such that
			$PS + \sum\limits_{\beta = 1}^{\beta^* - 1} \sum\limits_{\sigma = 1}^{|\Sigma|} \varphi_{rk}(v_\beta, c_\sigma) \leq i$.
			\item Update $S$ and $rk$ as specified by \ref{eqn_mtf_stack_evolution} and \ref{eqn_rank_update_rule} respectively, with $\sigma$ replaced by $\pi^*(\sigma)$.
			\item Set $PS \leftarrow PS + \sum\limits_{\beta = 1}^{\beta^* - 1} \sum\limits_{\sigma = 1}^{|\Sigma|} \varphi_{rk}(v_\beta, c_\sigma)$.
			\item Recurse to $v_{\beta^*}$.
		\end{itemize}
		\item Let $m'_k$ be the character at the leaf node. If $m'_k$ represents a run of $0$'s, set $c' = S[1]$, the character at the top of the stack $S$. Otherwise, set $c' = S[m'_k + 1]$.
		\item If $c' = c$, set $rk \leftarrow rk + (i - PS)$.
		\item Return $rk$.  
	\end{itemize}
	Now we analyze the query time. The initial predecessor query and computation of sub-array index $j$ requires $O(t')$ time. Then, the algorithm spends $O(1)$ time per node in the aB-tree, which has depth $t'$. Hence, the overall query time is $O(t')$.

\subsection{Space Analysis}
Recall that we encoded an approximation of zeroth-order entropy constraint $\varphi_0$ as an augmented value in the aB-tree. Using Lemma \ref{lem_rle_run_divide_entropy} and arguments similar to those in Section \ref{sub_sec_MTF_space}, we have that the aB-trees occupy at most 
$H_0\left(\bar{m}\right) + \frac{2n}{r} + \tilde{O}(n^{1 - \eps})$ bits.
The additional data structures require $O\left(\frac{n \lg n \cdot |\Sigma|}{r}\right)$ bits of space.

Now we analyze the space required for the look-up tables. Let $\Phi = \Phi_\pi \times \Phi_{rk} \times \Phi_0$, where $\Phi_\pi$, $\Phi_{rk}$ and $\Phi_0$ are the alphabets over which the augmented values $\varphi_\pi$, $\varphi_{rk}$ and $\varphi_0$ are defined respectively. Then $|\Phi| = |\Phi_\pi| \cdot |\Phi_{rk}| \cdot |\Phi_0| \leq O\left(|\Sigma|! \cdot (n^\eps \cdot r)^{|\Sigma|} \cdot r^2 \lg n\right) = O_{|\Sigma|}\left(n^{\eps \cdot |\Sigma|} \cdot B^{t' (|\Sigma| + 2)} \cdot \lg n\right)$, as $r = B^{t'}$. So the look-up tables occupy (in words)
\[
O\left(|\Phi|^{B+1} + B \cdot |\Phi|^B\right) = O_{|\Sigma|}\left(n^{\eps \cdot |\Sigma| (B + 1)} \cdot B^{(B + 1) t' (|\Sigma| + 2)} \cdot \lg^{B+1} n\right) = n^{O(\epsilon + \delta)},
\]
as $B = \Theta(1)$ and $t' \leq \delta \lg n$ for a small constant $\delta > 0$. So, the look-up tables occupy negligible space for small enough $\epsilon, \delta > 0$. Thus, the overall space required (in bits) is at most
\[
	H_0(\bar{m}) + \frac{n \lg n \cdot |\Sigma|}{2^{\Omega(t')}} + n^{1 - \Omega(1)}.
\]
As $|\Sigma| = O(1)$ and $|\HRLX(L)| \geq H_0(\bar{m})$, we can adjust $t'$ by a constant factor to obtain Theorem \ref{thm_exp_tradeoff_rk_L}.
	
\subsection{Proof of Lemma \ref{lem_rle_run_divide_entropy}} \label{subsec_rle_run_divide_entropy_proof}

\begin{proof}
	
We assume $m$ contains at least one run of $0$'s of length exceeding $n^\eps$, since $H_0(m') = H_0(\bar{m})$ otherwise. Let $\kap$ and $\kap'$ be the number of characters representing $0$-runs in $\bar{m}$ and $m'$ respectively. Let $n'$ be the number of non-zero MTF characters in $m'$ (or $\bar{m}$). Then $N = n' + \kap'$, and $\bar{N} = n' + \kap$.

For $i \in [n]$, let $g_i$ be the number of runs of length $i$ in $\bar{m}$. For $i \in [n^\epsilon]$, let $\tilde{g}_i$ be the \emph{additional} number of runs of length $i$ introduced in $m'$ through this transformation. Let $\kappa_{sm}$ be the number of characters representing runs of length at most $n^\eps$ in $\bar{m}$.

The following facts are immediate:
	\begin{align}
	\sum_{i=1}^{n^\epsilon} g_i &= \kap_{sm} \leq \kap. \label{eqn_rle_1}\\
	\sum_{i = 1}^{n^\epsilon} \left(g_i + \tilde{g}_i\right) &= \kap' \leq \kap_{sm} + 2 n^{1 - \epsilon} \leq \kap + 2 n^{1 - \epsilon}. \label{eqn_rle_2}
	\end{align}
	We use these facts along with the Log-Sum Inequality to prove the lemma below.
	\begin{align*}
	&H_0\left(m'\right) - H_0\left(\bar{m}\right)\\
	&= \sum_{\sigma \in [\mathbf{|\Sigma|-1}]} f_\sigma \lg \frac{n' + \kap'}{f_\sigma} + \sum_{i = 1}^{n^\epsilon} \left(g_i + \tilde{g}_i\right) \lg \frac{n' + \kap'}{g_i + \tilde{g}_i} - \sum_{\sigma \in [\mathbf{|\Sigma|-1}]} f_\sigma \lg \frac{n' + \kap}{f_\sigma} - \sum_{i=1}^n g_i \lg \frac{n' + \kap}{g_i}\\
	&\leq \sum_{\sigma \in [\mathbf{|\Sigma|-1}]} f_\sigma \lg \frac{n' + \kap'}{n' + \kap} + \sum_{i = 1}^{n^\epsilon} g_i \lg \frac{n' + \kap'}{n' + \kap} + \sum_{i = 1}^{n^\epsilon} \tilde{g}_i \lg \frac{n' + \kap'}{g_i + \tilde{g}_i}\\
	&\leq (n' + \kap) \lg \frac{n' + \kap'}{n' + \kap} + \sum_{i = 1}^{n^\epsilon} \tilde{g}_i \lg \frac{ \sum_{i = 1}^{n^\epsilon} n' + \kap'}{\sum_{i = 1}^{n^\epsilon} \left(g_i + \tilde{g}_i\right)} \tag{Log-Sum Inequality}\\
	&\leq (n' + \kap) \lg \left(1 + \frac{2 n^{1-\epsilon}}{n' + \kap}\right) + (\kap' - \kap_{sm}) \lg \frac{n^\epsilon (n' + \kap')}{\kap'} \tag{$\sum_{i = 1}^{n^\epsilon} \tilde{g}_i = \kap' - \kap_{sm}$}\\
	&\leq 2n^{1-\epsilon} \left[\lg(e) + O(\lg n)\right] \tag{$\lg_2 (1 + x) \leq x \lg_2(e)$, $n' + \kap' \leq n$, $\kap' \geq 1$}\\
	&= O\left(n^{1-\epsilon} \lg n \right).
	\end{align*}
The last two inequalities above follow from \ref{eqn_rle_1} and \ref{eqn_rle_2}.
\end{proof}

\section{Reporting pattern occurrences} \label{sec_report_pattern_matching}

In this section, we prove the existence of a succinct data structure for reporting the positions of occurrences of a given pattern in a string. For $x \in \Sigma^n$ and a pattern $p \in \Sigma^\ell$, let $occ(p)$ be the number of occurrences of $p$ as a contiguous substring of $x$.

\begin{theorem} \label{thm_report_pattern_matching}
Fix a string $x \in \Sigma^n$. For any $t$, there is a succinct data structure that, given a pattern $p \in \Sigma^\ell$, reports the starting positions of the $occ(p)$ occurrences of $p$ in $x$ in time $O(t \cdot occ(p) + \ell \cdot \lg t)$, using at most 
\[
|\HRLX(\BWT(x))| + O\left(\frac{n \lg n \lg t}{t}\right) + n^{1 - \Omega(1)}
\]
bits of space, in the $w=\Theta(\lg n)$ word-RAM model.
\end{theorem}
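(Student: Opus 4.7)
The plan is to combine standard FM-index backward search with the compressed rank oracle $\cD_{rk}$ of Theorem~\ref{thm_exp_tradeoff_rk_L} and a sparse text-position sampling tuned to the desired time budget $t$. First, I would instantiate $\cD_{rk}$ with parameter $t' = \Theta(\lg t)$; then every \textsc{Rank} query on $L = \BWT(x)$ costs $O(\lg t)$ time, while the structure uses $|\HRLX(L)| + O(n \lg n / t) + n^{1 - \Omega(1)}$ bits of space.

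With this oracle in hand, the first phase is the FM-index backward search: scanning the pattern $p = (p_1, \ldots, p_\ell)$ from right to left, I maintain an interval $[sp_k, ep_k]$ of rows of the BWT matrix whose prefix equals $p[\ell - k + 1 : \ell]$. Each iteration updates this interval with $O(|\Sigma|) = O(1)$ \textsc{Rank} queries on $L$ via the LF property (Fact~\ref{lem_LF}). After $\ell$ iterations this yields the final interval $[sp, ep]$ with $ep - sp + 1 = occ(p)$, in total time $O(\ell \lg t)$.

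For the reporting phase, I would sample every $T$-th text position with $T := \lceil t / \lg t \rceil$, storing, for each sampled $j \in \{T, 2T, \ldots\}$, the row index of the BWT matrix corresponding to the suffix $x[j : n]$. To stay within the required redundancy, I would represent the sample set as an Elias--Fano-encoded sparse bit vector over $[n]$ with $n/T$ ones, supporting $O(1)$-time \textsc{Rank} (costing $O((n/T)\lg T) = O(n \lg^2 t / t)$ bits), together with an array $A$ of the $n/T$ corresponding text positions listed in row order (costing $(n/T)\lg n = O(n \lg n \lg t / t)$ bits). Given any row $i \in [sp, ep]$, I would repeatedly apply the LF map via Lemma~\ref{lem_lf_inverse} until reaching a sampled row $i'$ after some $k \leq T$ steps, and then output $A[\operatorname{rank}(i')] + k$. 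Each LF step uses $O(1)$ rank queries on $L$, hence $O(\lg t)$ time, so each occurrence is reported in $O(T \lg t) = O(t)$ time and the reporting phase runs in $O(t \cdot occ(p))$ total.

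Summing gives total time $O(\ell \lg t + t \cdot occ(p))$ and total space $|\HRLX(L)| + O(n \lg n \lg t / t) + n^{1 - \Omega(1)}$, as claimed. The one step that requires real care, and is essentially the only non-template ingredient beyond the FM-index recipe, is the succinct representation of the sampled set: a dense $n$-bit marking vector would inflate the additive redundancy to $\Theta(n)$ and destroy succinctness, so using an Elias--Fano (or equivalent compressed) bit vector supporting $O(1)$-time rank on an $(n/T)$-sparse subset of $[n]$ is essential to fit within the advertised $O(n \lg n \lg t / t)$ redundancy budget.
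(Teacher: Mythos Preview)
Your proposal is correct and follows essentially the same approach as the paper: instantiate the rank oracle $\cD_{rk}$ of Theorem~\ref{thm_exp_tradeoff_rk_L} with parameter $\Theta(\lg t)$, run FM backward search to obtain the suffix-array interval, then sample text positions at spacing $T=\Theta(t/\lg t)$ and locate each occurrence by LF-walking to a sampled row.

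The only noteworthy difference is the representation of the sampled-row bit vector. The paper uses \Pat's succinct rank structure over $\{0,1\}^n$ with an additional time parameter $\tilde{t}$ (eventually set to $\Theta(\lg t)$), which gives membership queries in $O(\tilde t)$ time and redundancy $n/(\lg n/\tilde t)^{\tilde t}$, safely absorbed into the $n^{1-\Omega(1)}$ term. You instead invoke Elias--Fano and claim $O(1)$-time \textsc{Rank} in $O((n/T)\lg T)$ bits; strictly speaking, plain Elias--Fano only gives $O(\lg(n/m))=O(\lg T)=O(\lg t)$-time predecessor/rank at that space, and augmenting it to $O(1)$ rank typically costs an extra $o(n)$ term that may exceed the budget for large $t$. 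This is not a real gap in your argument, though: the $O(\lg t)$ membership cost is already dominated by the $O(\lg t)$ cost of each LF step, so the per-occurrence time remains $O(T\lg t)=O(t)$, and alternatively you can simply substitute the paper's \Pat-style structure (which you already acknowledge as an ``equivalent compressed bit vector'') to match the stated bounds exactly.
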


For reporting queries, this is a quadratic improvement over the FM-Index \cite{FM}.

\begin{proof}
Let $t', \tilde{t} < t$ be parameters to be determined shortly. Let $D_{rk}$ be the data structure
given by Theorem \ref{thm_exp_tradeoff_rk_L} which supports \textsc{Rank} queries on $L$ in time $O(t')$. We divide the original string $x$ into $\lceil n/T\rceil $ blocks 
of size $T := O\left(\frac{t}{t' + \tilde{t}}\right)$. Let $S = \{(j-1)T + 1 \, | \, j \in [n/T]\}$ be the set of starting indices of blocks. Let $F_S$ be the set of indices in the first column $F$ of the BWT Matrix $\cM$ corresponding to indices in $S$. We store the map
$h : F_S \mapsto S$.
Moreover, we store a membership data structure on $[n]$, which given a query $i \in [n]$, answers \texttt{Yes} iff $i \in F_S$. This can be done using the data structure in \cite{Pat} which answers \textsc{Rank} queries over $\{0,1\}^n$\footnote{The bit-string $y$ is the indicator vector of $F_S$. For $i \in [n]$, we have $i \in F_S$ iff $rk_y(1, i) \neq rk_y(1, i-1)$.}
in time $O(\tilde{t})$ using space (in bits)
\[
\lg \binom{n}{n/T} + \frac{n}{\left(\lg n / \tilde{t}\right)^{\tilde{t}}} + \tilde{O}\left(n^{3/4}\right) \leq \frac{n}{T} \lg (eT) + \frac{n}{\left(\lg n / \tilde{t}\right)^{\tilde{t}}} + \tilde{O}\left(n^{3/4}\right). \tag{$\binom{n}{k} \leq \left(\frac{en}{k}\right)^k$}
\]
Our algorithm will follow the high-level approach to reporting pattern occurrences given in \cite{FM}, replacing each component data structure with the corresponding succinct data structure described above. We first use $D_{rk}$ to count the number of occurrences $occ(p)$ with $O(\ell \cdot |\Sigma|)$ \textsc{Rank} queries on $L$, which requires $O(\ell \, t')$ time. We observe that the algorithm for counting occurrences specified in \cite{FM} actually provides a contiguous set of rows $[R : R + occ(p) - 1] \subset [n]$ in the Burrows-Wheeler matrix $\cM$ which are prefixed by $p$. For each $i \in [R : R + occ(p) - 1]$, the algorithm starts from $i$ and performs $\alpha \leq T$ iterations of the LF-mapping algorithm\footnote{Note that the LF Mapping algorithm can equivalently be considered to be jumps in the first column $F$.} from 
Section \ref{sec_decoding_BWT}, until it reaches an index $i' \in F_S$ (which is verified using the membership data structure). Then it reports $h(i') + \alpha$.
 By Lemma \ref{lem_lf_inverse}, each step requires $O(|\Sigma|)$ \textsc{Rank} queries on $L$, each of which can be done using $\cD_{rk}$ in 
$O(t')$ time, hence the overall running time of the reporting phase is $O(T\cdot (t' + \tilde{t}) \cdot occ(p)) = O(t \cdot occ(p))$ by definition of $T$. We can assume $T \leq t = o(n)$, because otherwise we can decompress the entire string $x$ in time $O(n) = O(t)$. So $eT = o(n)$, and the total space required (in bits) is at most
\[
	|\HRLX(\BWT(x))| + \frac{n \lg n}{2^{t'}} + \frac{n}{(\lg n / \tilde{t})^{\tilde{t}}} + \frac{n}{T} \lg n + n^{1 - \Omega(1)}.
\]

In order to balance the redundancy terms in the above expression, we set $t' = \Theta(\lg T)$ . Using the fact that $\tilde{t} \leq t'$ if the first two redundancy terms are equal, we get $t = O(T \cdot (\tilde{t} + t')) = O(T \lg T)$, so $T = O(t / \lg t)$. Thus, we get a \emph{succinct} data structure for reporting the starting positions of the $occ(p)$ occurrences of a pattern $p \in \Sigma^\ell$ in $x$ in time $O(t \cdot occ(p) + \ell \cdot \lg t)$, using at most
\[
	|\HRLX(\BWT(x))| + O\left(\frac{n \lg n \lg t}{t}\right) + n^{1 - \Omega(1)}
\]
bits of space. This concludes the proof of Theorem \ref{thm_report_pattern_matching}.
\end{proof}

\section{Lower Bound for Symmetric Data Structures} \label{sec_LB}

In this section, we prove a cell-probe lower bound  
on the redundancy of any succinct data structure that locally decodes the BWT permutation $\Pi_x : \BWT(x) \mapsto x$, 
induced by a \emph{uniformly random} $n$-bit string $x\in\{0,1\}^n$.  
While it is somewhat unnatural to consider uniformly random (context-free) strings in BWT applications,  
we stress that our lower bound is more general and can yield nontrivial lower bounds for 
\emph{non-product} distributions $\mu$ on $\{0,1\}^n$ which 
satisfy the premise of our ``Entropy-Polarization" Lemma \ref{lem_ent_var} below, 
possibly with weaker parameters (we prove this lemma for the uniform distribution, but  
the proof can be generalized to distributions with ``sufficient block-wise independence").  

The lower bound we prove below applies to a somewhat stronger problem than Problem \ref{problem_1}, 
in which the data structure must decode \emph{both} forward and \emph{inverse} evaluations 
($\Pi_x(i), \Pi^{-1}_x(j)$) of the induced BWT permutation.  
The requirement that the data structure recovers $\Pi^{-1}_x(j)$, i.e., the 
position of $x_j$ in $L$, when decoding $x_j$, is very natural (and, when decoding the entire input $x$, is in fact without loss of generality). 
The ``symmetry" assumption, namely, the implicit assumption that any such data structure must also efficiently compute 
\emph{forward} queries $\Pi_x(i)$ 
mapping positions of $i \in L$ to their corresponding index $j \in X$, is less obvious, but is also a natural assumption 
given the sequential nature of the BWT decoding process (LF property, Lemma \ref{lem_LF}). 
Indeed, both the FM-index \cite{FM} and our data structure from Theorem \ref{thm_local_bwt} are essentially symmetric.\footnote{Indeed, we can 
achieve $q=O(t)$ by increasing the \emph{redundancy} $r$ by at most a factor of $2$, for storing an additional ``marking index" 
for the reverse permutation; see the first paragraph of Section \ref{sec_technical_overview}.}
We shall prove Theorem \ref{thm_LB}, restated below:
\thmLB*

When $q=\Theta(t)$, our result implies that obtaining an $r \ll n/t^2$ trade-off for Problem $1$ 
is generally impossible, hence Theorem \ref{thm_LB} provides an initial step in the lower 
bound study of Problem $1$.  

The data structure problem in Theorem \ref{thm_LB} is a variant of the \emph{succinct permutations} problem 
$\textsc{Perms}$ \cite{Golynski,MRRR}, 
in which the goal is to represent a random permutation $\Pi \in_R \cS_n$ succinctly using $ \lg n! + o(n \lg n)$ 
bits of space, supporting both forward and inverse evaluation queries ($\Pi(i)$ and $\Pi^{-1}(i)$), in query times $t,q$ respectively. 
Golynski \cite{Golynski}
proved a lower bound of $r \geq \Omega(n \lg n/tq)$ on the space redundancy of any such data structure, which 
applies whenever 
\begin{align}\label{cond_tq}
t,q \leq O\left(\frac{H(\Pi)}{n\cdot \lg \lg n}\right). 
\end{align}
When $\Pi\in_R \cS_n$ is a \emph{uniformly random} permutation, \eqref{cond_tq} implies that the bound holds 
for $t,q \leq O(\lg n/\lg\lg n)$. However, in the setting of Theorem \ref{thm_LB}, 
this result does not yield \emph{any} lower bound, since in our setting $H(\Pi_X)\leq n$ 
(as the BWT permutation of $X$ is determined by $X$ itself), hence \eqref{cond_tq} gives a trivial condition on the 
query times $t,q$. More precisely, the fact that $H(\Pi_X)\leq n$ implies 
that \emph{an average} query $i\in [n]$ only reveals $\frac{1}{n} \sum_{i=1}^n H\left(\Pi_X(i) \; | \Pi_X(i-1),\ldots , \Pi_X(1)\right) = O(1)$
bits of information on $X$, in sharp contrast to a \emph{uniformly random} permutation, where an average query reveals 
$\approx \lg n$ bits. This crucial issue completely dooms the cell-elimination argument in \cite{Golynski}, hence it fails to 
prove anything for Theorem \ref{thm_LB}.

In order to circumvent this problem, we first prove the following variant of Golynski's argument (Lemma 3.1 in \cite{Golynski}), 
which generalizes his lower bound 
on the \textsc{Perms} problem to arbitrary (i.e., nonuniform) random permutations $\Pi\sim \mu$, 
as long as there is a \emph{restricted subset} of queries $S \subseteq [n]$ with large enough entropy:

\begin{theorem}[Cell-Probe Lower Bound for Nonuniform \textsc{Perms}] 
\label{thm_LB_general}   
	Let $\Pi \sim \mu$ be a permutation chosen according to some distribution $\mu$ over $\cS_n$. Suppose that there exists a subset of coordinates $S = S(\Pi) \subseteq [n]$, $|S| = \gamma$, and $\alpha > 0, \eps \leq 1/2$ such that $H_\mu( \Pi_S | S, \Pi(S)) = \gamma\cdot \alpha \geq (1 - \eps) H_\mu(\Pi)$ bits. 
	Then any $0$-error succinct data structure for \textsc{Perms}$_n$ under $\mu$, in the cell-probe model with word size $w$,
	with respective query times $t, q$ satisfying
	\[
	t , q \leq \min\left\{2^{w/5}, \frac{1}{32}\cdot \frac{\alpha}{\lg w} \right\} \text{ and } tq \leq \delta \min \left(\frac{\alpha^2}{w \lg (en / \gamma)}, \frac{\alpha}{2 \eps \, w} \right)
	\]
	for some constant $\delta > 0$, must use $s \geq H_\mu(\Pi) + r$ bits of space in expectation, where
	\[
	r = \Omega\left(\frac{\alpha^2 \cdot \gamma}{w \cdot tq}\right)\; . 
	\]
\end{theorem}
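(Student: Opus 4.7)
The plan is to adapt Golynski's \emph{cell-elimination} argument for the succinct \textsc{Perms} problem to the nonuniform setting, exploiting the restricted high-entropy set $S$ to localize the encoding. The global strategy is to exhibit a re-encoding of $\Pi$ (conditioned on $S, \Pi(S)$) that is strictly cheaper than the trivial $s$-bit memory dump, and to balance the savings against the $\eps$-slack coming from the hypothesis $\gamma\alpha \geq (1-\eps) H_\mu(\Pi)$.

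First I would set up the combinatorics. For each cell address $a \in [s/w]$ and each $\Pi$, let $I_a(\Pi) \subseteq S$ (resp.\ $J_a(\Pi) \subseteq \Pi(S)$) be the set of forward (inverse) queries whose probe sequence touches cell $a$. Since each forward/inverse query probes at most $t$/$q$ cells, $\sum_a |I_a| \leq t\gamma$ and $\sum_a |J_a| \leq q\gamma$ pointwise in $\Pi$. A two-sided Markov argument over the $\sim s/w$ cells then produces a set $C$ of $k$ addresses (with $k$ to be tuned) such that simultaneously $\E_\mu[|I_C|] \leq O(ktw\gamma/s)$ and $\E_\mu[|J_C|] \leq O(kqw\gamma/s)$, where $I_C := \bigcup_{a \in C} I_a$ and likewise $J_C$.

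Next I would construct the alternative encoding of $\Pi_S$. Given $M_{-C}(\Pi)$ (the memory with the $k$ cells of $C$ erased, i.e.\ $s - kw$ bits), simulating each query and halting at its first probe into $C$ identifies $I_C, J_C$ and recovers $\Pi|_{S \setminus I_C}$. Moreover, \emph{reverse lookup} through inverse queries $j \notin J_C$ recovers $\Pi(i)$ for every $i \in S \cap I_C$ whose image lies outside $J_C$. The only residual ambiguity is the bijection on the \emph{cross-set} $K := S \cap I_C \cap \Pi^{-1}(J_C) \to \Pi(K) \subseteq J_C \cap \Pi(S)$, admitting $|K|! \leq (\min(|I_C|,|J_C|))!$ possibilities; both $K$ and $\Pi(K)$ are determined by $M_{-C}$, costing no extra bits. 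Combined with the hypothesis, this yields
\[
\gamma\alpha \;=\; H_\mu(\Pi_S \mid S, \Pi(S)) \;\leq\; (s - kw) + \E_\mu\!\left[\lg \bigl(\min(|I_C|,|J_C|)\bigr)!\right] .
\]

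Finally I would optimize $k$. Using $\min(|I_C|,|J_C|) \leq \sqrt{|I_C||J_C|}$ and Cauchy-Schwarz gives $\E[\sqrt{|I_C||J_C|}] \leq O(kw\gamma\sqrt{tq}/s)$, and a convexity estimate on $x \mapsto x\lg(en/x)$ turns this into a bijection cost of order $k w\gamma\sqrt{tq}\lg(en/\gamma)/s$. Choosing $k^{\star}$ to maximize the gap between the linear savings $kw$ and this cost yields net savings of $\Omega(\alpha^{2}\gamma/(w\cdot tq))$; the assumed upper bounds on $t, q$ are precisely what keep $1 \leq k^{\star} \leq s/w$ and keep the bijection estimate meaningful. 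Passing from the bound on $H_\mu(\Pi_S \mid S,\Pi(S)) = \gamma\alpha$ to one on $H_\mu(\Pi)$ costs an additive $H_\mu(\Pi) - \gamma\alpha \leq 2\eps\gamma\alpha$ (using $\eps \leq 1/2$), and the condition $tq \leq \delta\alpha/(2\eps w)$ ensures this is dominated by $r$, delivering $s \geq H_\mu(\Pi) + r$. I expect the main obstacle to be the \emph{coupled} choice of $C$ making both $|I_C|$ and $|J_C|$ simultaneously small (a naive separate averaging loses constant factors), together with bounding $\E[\lg\min(|I_C|,|J_C|)!]$ in the absence of any concentration hypothesis on $\mu$, which forces the use of Cauchy-Schwarz on the product $|I_C|\cdot|J_C|$ rather than any pointwise estimate.
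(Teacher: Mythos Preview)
Your proposal has a genuine gap in the bijection-cost estimate, and it is exactly the step where the paper's argument diverges from yours.

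You correctly observe that after erasing a batch $C$ of $k$ cells, the only residual uncertainty in $\Pi_S$ is the bijection on the cross-set $K = I_C \cap \Pi^{-1}(J_C)$, and that $K$ and $\Pi(K)$ are both recoverable from $M_{-C}$. But the cost of encoding that bijection is $\lg(|K|!) \approx |K|\lg|K|$, not $|K|\lg(en/\gamma)$. The concave function $x\mapsto x\lg(en/x)$ bounds $\lg\binom{n}{x}$, not $\lg(x!)$, so your ``convexity estimate'' is applied to the wrong quantity. To obtain the target redundancy $r = \Omega(\alpha^2\gamma/(wtq))$ you need $kw$ of that order, hence $k$ is polynomially large in $n$; then $\E[|K|] = \Theta(kw\sqrt{tq}/\alpha)$ is also polynomially large, so $\lg|K| = \Theta(\lg n)$ rather than $\lg(en/\gamma)$ (which is $\Theta(\lg\lg n)$ in the intended application with $\gamma \approx n/\lg n$). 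The resulting bijection cost $\E[|K|\lg|K|]$ then swamps the $kw$-bit savings, and the encoding argument yields no nontrivial lower bound.

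The paper avoids this by following Golynski's \emph{iterative} deletion with a \emph{protection} step: after deleting a single unpopular cell $d_k$ (touched by at most $\beta = O(\gamma t/h)$ forward and $\beta' = O(\gamma q/h)$ inverse queries from $S,\Pi(S)$), it protects from future deletion every remaining cell probed by any reciprocal query of a query that touches $d_k$. This guarantees that each query is associated with at most one deleted cell, so the global bijection on $K$ factors as a product of $z$ independent \emph{per-cell} bijections $F'_S(d_k) \leftrightarrow I'_{\Pi(S)}(d_k)$, each encodable in only $\beta\lg\beta'$ bits. The hypothesis $q \leq \alpha/(32\lg w)$ then forces $\beta' \leq w/(5\lg w)$, hence $\beta\lg\beta' \leq w/5$, so the total bijection cost $z\beta\lg\beta' \leq zw/5$ is a constant fraction of the $zw$-bit savings. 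Your batch deletion lacks this locality: without protection, a query can touch many cells of $C$, the per-cell partition does not factor the bijection, and the encoding cost blows up. That locality, not the averaging, is the crux of the argument.
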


Here, $ \Pi_S:= (\Pi(S_{i_1}),\Pi(S_{i_2}),\ldots, \Pi(S_{i_{s}}))$ 
denotes the projection of $\Pi$ to $S$, i.e., the 
\emph{ordered set} (vector) of evaluations of $\Pi$ on $S$, while $\Pi(S)$ denotes the image of $\Pi$ under $S$ (the unordered set), 
and $H_\mu(Z)$ is the Shannon entropy of $Z\sim \mu$.

\

Theorem \ref{thm_LB_general} implies that in order to prove a nontrivial bound in Theorem \ref{thm_LB}, it is enough 
to prove that the BWT-induced permutation $\Pi_X$ when applied on random $X$, has a (relatively) small subset of 
coordinates with near-maximal entropy (even though \emph{on average} it is constant). 
Indeed, we prove the following key lemma about the entropy distribution of the 
BWT permutation on random strings. Informally, it states that when $X$ is random, 
while the \emph{average} entropy of a coordinate (query) of $\Pi_X$ is indeed only $H_\mu(\Pi_X(i) | \Pi_X(<i)) = O(1)$,  
this random variable has a lot of \emph{variance}: A small subset ($\sim n/\lg n$) of coordinates have very high $(\sim \lg n)$ 
entropy, whereas the rest of the coordinates have $o(1)$ entropy on average.
This is the content of the next lemma:  

\begin{lemma}[Entropy Polarization of BWT] \label{lem_ent_var}
	Let $X \in_R \{0,1\}^n$, and $\mu$ be the distribution on $\cS_n$ induced by BWT on $X$. For any $\eps \geq \Omega(\lg \lg n / \lg n)$, with probability at least $1 - \tilde{O}\left(n^{- (1 - \eps/3)}\right)$, there exists a set $S \subset [n]$ of size $|S| = (1 - O(\eps)) n/ \lg n$, such that $H_\mu(\Pi_S | S, \Pi(S)) \geq n(1 - \eps)$.
\end{lemma}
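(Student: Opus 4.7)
The plan is to construct $S$ via a block-partitioning of $X$, and then lower bound $H_\mu(\Pi_S \mid S, \Pi(S))$ by an exchangeability argument. I would first set the block length $B := \lceil (1+\eps) \lg n \rceil$, $s := \lfloor n/B \rfloor$, and define $\cI := \{I_j := (j-1)B + 1 : j \in [s]\}$ together with the blocks $Y_j := X[I_j : I_j + B - 1]$. Since $X$ is uniform on $\zo^n$, the $Y_j$'s are i.i.d.\ uniform on $\zo^B$. I take
\[
    S := \{ I_j : Y_j \neq Y_{j'} \text{ for all } j' \neq j \},
\]
the starting positions of blocks whose value is unique among all $s$ blocks. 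For the size bound, let $Z := \sum_{j<j'} \mathbf{1}[Y_j = Y_{j'}]$ count colliding pairs; then $\E[Z] = \binom{s}{2} 2^{-B} = \tilde{O}(n^{1-\eps}/\lg^2 n)$, and since the covariance of $\mathbf{1}[Y_j = Y_{j'}]$ and $\mathbf{1}[Y_k = Y_l]$ vanishes whenever $\{j,j'\} \neq \{k,l\}$ (by i.i.d.\ structure of $(Y_j)$), one has $\Var(Z) \leq \E[Z]$. Chebyshev's inequality then yields $\Pr[Z > \eps s] \leq \tilde{O}(n^{-(1+\eps)}) \leq \tilde{O}(n^{-(1-\eps/3)})$, and on the complementary event at most $2\eps s$ blocks are bad, so $|S| \geq (1 - O(\eps))\,n/\lg n$.

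For the entropy step, observe that since the blocks indexed by $S$ have pairwise distinct $B$-prefixes, the lexicographic order among the $|S|$ cyclic shifts of $X$ starting at positions in $S$ coincides with the lex order of the blocks $\{Y_j : I_j \in S\}$ themselves. Therefore, given $S$ and $\Pi(S)$, the bijection $\Pi_S$ is equivalent to specifying a permutation $\pi^* \in \cS_{|S|}$ that records the relative lex rank of the good blocks. I would then argue by exchangeability that $H_\mu(\pi^* \mid S) = \E_\mu[\lg(|S|!)]$: conditional on $S = s_0$, the joint distribution of the good-block values $(Y_j)_{I_j \in s_0}$ is invariant under permutations of good positions (the event $\{S = s_0\}$ is symmetric in those positions by i.i.d.\ structure), and these values are a.s.\ pairwise distinct by definition; the relative-rank permutation of an exchangeable tuple of distinct values is uniform over $\cS_{|s_0|}$.

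Finally, by the chain rule,
\[
    H_\mu(\Pi_S \mid S, \Pi(S)) = H_\mu(\pi^* \mid S, \Pi(S)) \geq H_\mu(\pi^* \mid S) - H_\mu(\Pi(S) \mid S).
\]
Since $\Pi(S) \subseteq [n]$ has size $|S| \leq s \ll n/2$, one has $H_\mu(\Pi(S) \mid S) \leq \lg \binom{n}{s} = O(s \lg \lg n) = O(n \lg\lg n/\lg n)$, which is $O(\eps n)$ under the hypothesis $\eps \geq \Omega(\lg \lg n/\lg n)$. Combined with $\E_\mu[\lg(|S|!)] \geq |S| \lg |S| - O(|S|) \geq n(1 - O(\eps))$ on the high-probability event, we obtain $H_\mu(\Pi_S \mid S, \Pi(S)) \geq n(1 - O(\eps))$; rescaling $\eps$ by a constant absorbs the hidden constant and yields the stated bound.

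The main obstacle is the exchangeability identity $H_\mu(\pi^* \mid S) = \E_\mu[\lg(|S|!)]$. Although the i.i.d.\ structure of $(Y_j)$ makes this natural, one must carefully isolate the fact that the symmetry invoked is purely over permutations of good positions (the bad blocks are strongly correlated through the specific collision structure they form), and verify that distinctness of good-block values---preserved under the exchange---implies uniformity of the relative-rank permutation. The Chebyshev step on the collision count is also slightly delicate since the indicators $\mathbf{1}[Y_j = Y_{j'}]$ are not fully independent, but the vanishing-covariance identities above make the second-moment bound clean.
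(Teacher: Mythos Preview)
Your proposal is correct and follows essentially the same route as the paper: partition $X$ into blocks of length $(1+\eps)\lg n$, bound collisions by Chebyshev on the pairwise-indicator sum (using pairwise independence to get $\Var(Z)\le \E[Z]$), restrict to the unique blocks, and show the relative lex-rank permutation of the good blocks is uniform conditional on which blocks are good (the paper does this via an explicit block-swapping bijection in its Claim~\ref{clm_L_high_entropy}, you via exchangeability of the good-block values---the same argument). The chain-rule reductions and the $O(s\lg\lg n)$ bound on the auxiliary entropy terms also match.

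One small correction: since $\Pi_X : L \to X$, the lemma's $S$ lives in the \emph{domain} $L$; your $S = \{I_j : Y_j \text{ unique}\}$ consists of block-start positions in $X$, which is the paper's $\Pi(S)$, not $S$. As written, $\Pi_S$ does not carry the meaning you intend. The fix is immediate---either take $S$ to be the corresponding BWT row indices $\{J_i : i\in T\}$ as the paper does, or invoke the symmetry $H(\Pi_S \mid S,\Pi(S)) = H(\Pi^{-1}_{\Pi(S)} \mid \Pi(S),S)$ (the conditional entropy of a bijection given both endpoint sets is symmetric in those sets).
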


We first prove Theorem \ref{thm_LB}, assuming Theorem \ref{thm_LB_general} and Lemma \ref{lem_ent_var}. We set $\eps = O(\lg \lg n / \lg n)$. Let $S \subset [n]$ be the set of size $\gamma = |S| = (1 - O(\eps)) n / \lg n$ obtained from Lemma \ref{lem_ent_var}. We invoke Theorem \ref{thm_LB_general} with the set $S$, $\alpha = (1 - O(\eps)) \lg n$, and $\mu$ being the distribution on $\cS_n$ induced by BWT on $X \in_R \{0,1\}^n$. As BWT is an invertible transformation on $\{0,1\}^n$ and $X$ is a uniformly random bit-string, $\Pi_X$ must be a uniformly random permutation over a subset of $\cS_n$ of size $2^n$. So, $H_\mu(\Pi) = n$. 

Let $\cD$ be any $0$-error data structure that computes 
$\Pi_X(i)$ and $\Pi^{-1}_X(j)$ for every $i,j\in [n]$ in time $t, q$ such that $t \, q \leq \delta \lg n/\lg \lg n$ for small enough $\delta > 0$. Then it is easy to see that with probability at least $1 - \tilde{O}\left(n^{-(1 - \eps)}\right)$, the conditions of Theorem \ref{thm_LB_general} are satisfied, and we get that $\cD$ must use
\[
s \geq \left(1 - \tilde{O}\left(n^{-(1 - \eps/3)}\right)\right) (H_\mu(\Pi) + r) = n + r - \tilde{O}\left(n^{\eps/3}\right) = n + r - o(r)
\] bits of space in expectation, where
\[
	r \geq \Omega\left(\frac{\alpha^2 \cdot \gamma}{w \cdot tq}\right) = \Omega\left(\frac{n}{tq}\right) \,.
\]
This concludes the proof of Theorem \ref{thm_LB}. Now we prove the Entropy Polarization Lemma \ref{lem_ent_var}, and then prove Theorem \ref{thm_LB_general}.

\subsection{Proof of Lemma \ref{lem_ent_var}} 

Let $X\in_R \{0,1\}^n$, and $L = \BWT(X)$. Let $\Pi_X : L \rightarrow X$ denote the BWT-permutation between 
indices of $X$ (i.e., $[n]$) and $L$. For convenience, we henceforth think of $\Pi_X$ as the 
permutation between $X$ and the \emph{first column} $F$ of the BWT matrix $\cM$ of $X$, i.e., $\Pi_X : F \rightarrow X$.
Note that these permutations are equivalent up to a rotation. We also denote $\Pi_X$ by $\Pi$, dropping the subscript $X$.

Recall that for any subset of coordinates $S\subseteq [n]$, $\Pi(S)$ denotes the image (unordered set 
of coordinates) of $S$ in $F$ under $\Pi$, and $\Pi_S$ denotes the \emph{ordered} set
(i.e., the projection of $\Pi$ to $S$).

Let $\ell = (1 + \eps) \lg n$, $\tilde{s} :=[\lfloor n/\ell \rfloor]$. For $i \in [\tilde{s}]$, let $Y_i = X[(i-1)\ell + 1 : i \ell]$ be the $i$th block of $X$ of length $\ell$. Let $J_i$ be the index of the row in the BWT matrix $\cM$ which starts with $Y_i$; note that $F[J_i]$ is the first character in $Y_i$. Let $\tilde{S} = \{J_i \, | \, i \in [\tilde{s}]\}$. So $|\tilde{S}| = \tilde{s} = n / ((1 + \eps) \lg n)$.

We first show the existence of a large subset of blocks $Y_i$ which are pairwise distinct, with high probability. The BWT ordering among the rows $J_i$ corresponding to the cyclic shifts starting with these blocks is consistent with the (unique) lexicographical ordering among the blocks themselves. We then show that the lexicographical ordering among these blocks is uniform. As this ordering is determined by the BWT permutation restricted to the corresponding rows $J_i$, the permutation restricted to these rows $J_i$ itself must have high entropy.

For $1 \leq i < j \leq \tilde{s}$, we say that there is a ``collision'' between blocks $Y_i$ and $Y_j$ if $Y_i = Y_j$, and define $Z_{i,j} \in \zo$ to be the indicator of this event. Let $Z = \sum_{1 \leq i < j \leq \tilde{s}} Z_{ij}$ be the number of collisions that occur among the disjoint blocks of length $\ell$.

\begin{claim} \label{clm-condition-good-event} 
	Let $\cE$ be the event that the number of collisions $Z \leq n^{1 - \eps} / \lg^2 n$. Then $\Pr[\cE] \geq 1 - \tilde{O}\left(n^{- (1 - \eps)} \right)$.
\end{claim}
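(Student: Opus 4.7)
The plan is a straightforward second-moment argument. First I would note that since $X$ is uniform on $\{0,1\}^n$ and the blocks $Y_1,\ldots,Y_{\tilde s}$ occupy disjoint coordinates of $X$, they are mutually independent and each $Y_i$ is uniformly distributed on $\{0,1\}^\ell$. Hence for every pair $i<j$ we have $\Pr[Y_i=Y_j] = 2^{-\ell} = n^{-(1+\eps)}$, so by linearity
\[
\E[Z] \;=\; \binom{\tilde s}{2}\cdot 2^{-\ell} \;\leq\; \frac{\tilde s^{\,2}}{2}\cdot n^{-(1+\eps)} \;=\; \frac{n^{1-\eps}}{2(1+\eps)^2\,\lg^2 n},
\]
which is smaller than the target threshold $n^{1-\eps}/\lg^2 n$ by a constant factor.

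Next I would bound $\Var[Z]$. The collision indicators $Z_{ij}$ are pairwise uncorrelated: when $\{i,j\}\cap\{i',j'\}=\emptyset$ the indicators depend on disjoint blocks and are plainly independent, while for pairs sharing exactly one index, say $(i,j)$ and $(i,j')$ with $j\neq j'$, we have $\E[Z_{ij}Z_{ij'}] = \Pr[Y_i=Y_j=Y_{j'}] = 2^{-2\ell}$ by independence and uniformity of the three blocks, which equals $\E[Z_{ij}]\,\E[Z_{ij'}]$. Therefore
\[
\Var[Z] \;=\; \sum_{i<j}\Var[Z_{ij}] \;\leq\; \sum_{i<j}\E[Z_{ij}] \;=\; \E[Z] \;=\; O\!\left(\tfrac{n^{1-\eps}}{\lg^2 n}\right).
\]

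Finally I would apply Chebyshev's inequality with deviation $\lambda := n^{1-\eps}/\lg^2 n - \E[Z] = \Theta\!\left(n^{1-\eps}/\lg^2 n\right)$ to obtain
\[
\Pr\!\left[Z > \tfrac{n^{1-\eps}}{\lg^2 n}\right] \;\leq\; \frac{\Var[Z]}{\lambda^2} \;=\; O\!\left(\tfrac{\lg^2 n}{n^{1-\eps}}\right) \;=\; \tilde O\!\left(n^{-(1-\eps)}\right),
\]
as claimed. There is no real obstacle in the proof: the event $\cE$ is essentially a birthday-type statement for $\tilde s \approx n/\lg n$ disjoint uniform blocks of length slightly exceeding $\lg n$, and the only (mild) observation needed beyond first moments is that pairwise collision indicators of uniform random strings have vanishing covariance, making second-moment concentration tight.
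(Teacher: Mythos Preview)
Your proof is correct and follows essentially the same approach as the paper: compute $\E[Z]$ via linearity, observe that the collision indicators $Z_{ij}$ are pairwise uncorrelated so that $\Var[Z]\le \E[Z]$, and apply Chebyshev. The only cosmetic differences are that you justify the pairwise-uncorrelatedness more explicitly (treating the shared-index case), and you apply Chebyshev with deviation $\lambda = n^{1-\eps}/\lg^2 n - \E[Z]$ rather than with deviation $\E[Z]$ and a matching lower bound on the mean; both variants yield the same $\tilde O(n^{-(1-\eps)})$ tail.
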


\begin{proof}
	Fix $1 \leq i < j \leq \tilde{s}$. Then $Y_i$ and $Y_j$ are disjoint substrings of $X$ of length $\ell = (1 + \eps) \lg n$. As $X$ is a uniformly random string of length $n$, we have that $Y_i$ and $Y_j$ are independent and uniformly random strings of length $\ell$. Using this fact along with linearity of expectation, we have
	\[
	\E[Z] = \sum_{1 \leq i < j \leq \tilde{s}} \E[Z_{ij}] = \sum_{1 \leq i < j \leq \tilde{s}} \Pr[Y_i = Y_j] = \binom{\tilde{s}}{2} 2^{- \ell} = \binom{\tilde{s}}{2} \frac{1}{n^{1 + \eps}}.
	\]
	As $(1 - \eps) \frac{n}{\lg n} \leq \tilde{s} \leq \frac{n}{\lg n}$, we have
	\[
	\frac{n^{1 - \eps}}{4 \lg^2 n} \leq \E[Z] \leq \frac{n^{1 - \eps}}{2 \lg^2 n}.
	\]
	Fix $1 \leq i < j \leq \tilde{s}$. As $Z_{ij}$ is an indicator random variable, we have
	$\Var[Z_{ij}] \leq \Pr[Y_i = Y_j]$.
	It is easy to see that the random variables $Z_{ij}$ are pairwise independent. So, we have
	\[
	\Var[Z] = \sum_{1 \leq i < j \leq \tilde{s}} \Var[Z_{ij}] \leq \sum_{1 \leq i < j \leq \tilde{s}} \Pr[Y_i = Y_j] = \E[Z].
	\]
	Finally, we use Chebyshev's inequality and the bounds on $\E[Z]$ to conclude
	\[
	\Pr\left[Z > \frac{n^{1 - \eps}}{\lg^2 n}\right] \leq \Pr[|Z - \E[Z]| > \E[Z]] \leq \frac{\Var[Z]}{\E[Z]^2} \leq \frac{1}{\E[Z]} \leq \frac{4 \lg^2 n}{n^{1 - \eps}}.
	\]
\end{proof}

For the remainder of this proof, we assume that the event $\cE$ holds. Let $T = \{i \in [\tilde{s}] \, | \, Y_i \neq Y_j \, \forall \, j \neq i\}$ be the set of indices of blocks $Y_i$ which do not collide with any other block. Henceforth, we shall restrict our attention to the (unique) lexicographical order among blocks $Y_i, i \in [T]$. Let $s := |T|$, and $S := \{J_i \in [n]\, |\, i \in T\} \subset \tilde{S}$ be the corresponding set of indices in $F$. As $\cE$ holds, we have
\[
|S| = |T| = s \geq \tilde{s} - 2Z \geq \frac{n}{(1 + \eps) \lg n} - \frac{2 n^{1 - \eps}}{\lg^2 n} \geq \frac{n}{(1 + 2 \eps) \lg n} \geq \frac{n (1 - 2 \eps)}{\lg n}.
\]
Now, given $\cE$ and $T$, we define an ordering among the blocks $Y_i, i \in T$. For $j \in [s]$, we define $L_j = i^* \in [s]$ if $Y_{i^*}$ is the $j$th smallest string (lexicographically) among $Y_i, i \in T$. Finally, let $\cL = (L_1, \cdots, L_s)$. 
By abuse of notation, we can identify $\cL$ with a permutation on $s$ elements.

Note that, given $\cE$ and $T$, this order specified by $\cL$ is consistent with the relative ordering of the corresponding cyclic shifts of $X$ starting with $Y_i, i \in T$, in the BWT Matrix $\cM$. More precisely, we have $L_j = i^*$ if $J_{i^*}$ is the $j$th smallest row index among $S = \{J_i \, | \, i \in T\}$. Note that the definition of $\cL$ 
does not depend on $S$.

We will use the following fact about conditional entropy multiple times:
\begin{fact} \label{fact_entropy}
	Let $A, B$ be random variables in the same probability space. Then $H(A | B) \geq H(A) - H(B)$.
\end{fact}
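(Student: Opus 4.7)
The plan is to derive this bound directly from the chain rule for Shannon entropy combined with the non-negativity of conditional entropy for discrete random variables. First, I would apply the chain rule in both orders to expand the joint entropy of $(A,B)$, obtaining the two identities $H(A,B) = H(A) + H(B \mid A)$ and $H(A,B) = H(B) + H(A \mid B)$. Equating these gives the symmetric identity $H(A \mid B) = H(A) + H(B \mid A) - H(B)$, which is the only algebraic manipulation needed.

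The second step is to note that in the discrete setting, $H(B \mid A) = \sum_{a} \Pr[A=a]\, H(B \mid A=a) \geq 0$, since each $H(B \mid A=a)$ is a Shannon entropy of a discrete distribution and hence non-negative. Substituting this into the identity above immediately yields $H(A \mid B) \geq H(A) - H(B)$, as claimed. There is essentially no obstacle here: the entire argument is a one-line consequence of the chain rule, and the only thing to verify is that all random variables in the paper (BWT permutations, subsets of $[n]$, bit-strings) take values in finite sets, which ensures we are in the discrete regime where $H(B \mid A) \geq 0$ holds without qualification. I would state the fact and give the two-line derivation inline, without further comment.
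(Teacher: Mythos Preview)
Your proof is correct and essentially identical to the paper's: both use the chain rule $H(A,B) = H(B) + H(A\mid B)$ together with the non-negativity of conditional entropy (the paper packages this as $H(A) \leq H(A,B)$, which is the same statement as your $H(B\mid A) \geq 0$).
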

\begin{proof}
	The proof is immediate using the chain rule:
	\[
	H(A) \leq H(A, B) = H(B) + H(A | B).
	\]
\end{proof}

The following claim allows us to remove the conditioning on the set $\Pi(S)$ at the cost of a negligible loss in entropy.
\begin{claim} \label{clm_lb_1}
	Let $S, \Pi(S)$ and $\Pi_S$ be as defined above. Then $H_\mu(\Pi_S | S, \Pi(S)) \geq H_\mu(\Pi_S | S) - O(s \cdot \lg (n/s))$.
\end{claim}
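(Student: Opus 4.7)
The plan is to reduce the claim to a direct application of Fact~\ref{fact_entropy} in its conditional form, combined with a counting bound on the number of possible unordered images. First, I would note that Fact~\ref{fact_entropy} extends verbatim to the conditional setting: for any random variables $A,B,C$ in the same probability space, $H(A \mid B, C) \geq H(A \mid C) - H(B \mid C)$, via the chain rule $H(A \mid C) \leq H(A, B \mid C) = H(B \mid C) + H(A \mid B, C)$. Applying this with $A = \Pi_S$, $B = \Pi(S)$, and $C = S$ yields
\[
H_\mu(\Pi_S \mid S, \Pi(S)) \;\geq\; H_\mu(\Pi_S \mid S) \;-\; H_\mu(\Pi(S) \mid S).
\]

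The task therefore reduces to upper-bounding $H_\mu(\Pi(S) \mid S)$. Since conditioning cannot increase entropy, $H_\mu(\Pi(S) \mid S) \leq H_\mu(\Pi(S))$. The random variable $\Pi(S)$ is an \emph{unordered} subset of $[n]$ of size $|S| = s$, so its support has size at most $\binom{n}{s}$, and hence
\[
H_\mu(\Pi(S)) \;\leq\; \lg \binom{n}{s} \;\leq\; s \lg \frac{en}{s} \;=\; O\!\left(s \lg (n/s)\right),
\]
using the standard bound $\binom{n}{s} \leq (en/s)^s$. Combining this with the inequality above gives exactly the claimed bound.

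I do not anticipate any real obstacle here; the statement is essentially bookkeeping, with the only subtlety being to pass from $\Pi_S$ (the ordered tuple of images, which carries the permutation information within $S$) to $\Pi(S)$ (the unordered image), and to observe that it is precisely the ``forgetting of the ordering'' that costs at most $O(s \lg(n/s))$ bits. This loss will turn out to be negligible compared to $H_\mu(\Pi_S \mid S) \approx n$ in the parameter regime $s = \Theta(n/\lg n)$ used in Lemma~\ref{lem_ent_var}, since there $s \lg(n/s) = O(s \lg \lg n) = O(n \lg \lg n / \lg n)$, well within the $\eps n$ slack of the polarization statement.
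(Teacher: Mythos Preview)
Your proposal is correct and takes essentially the same approach as the paper: apply Fact~\ref{fact_entropy} (in conditional form) to peel off $H_\mu(\Pi(S))$, then bound the latter by counting the support of $\Pi(S)$. The only minor difference is that the paper exploits the extra structure that $\Pi(S)$ is actually an $s$-subset of the $\tilde{s}$ block-start positions $\{(i-1)\ell+1 : i\in[\tilde{s}]\}$, yielding the tighter $\lg\binom{\tilde{s}}{s}$ rather than your $\lg\binom{n}{s}$; both are $O(s\lg(n/s))$, so this does not affect the claim.
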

\begin{proof}
	We use Fact \ref{fact_entropy} with $A = \Pi_S | S$ and $B = \Pi(S)$ to get that $H_\mu(\Pi_S | S, \Pi(S)) \geq H_\mu(\Pi_S | S) - H_\mu(\Pi(S))$. Now, as $\Pi(S)$ is a $s$-size subset of the set $\{(i-1)\ell + 1 \, | \, i \in [\tilde{s}]\}$, we have that
	\[
	H_\mu(\Pi(S)) \leq \lg \binom{\tilde{s}}{s} = O\left(s \lg \frac{\bar{s}}{s}\right) \leq O\left(s \lg \frac{n}{s}\right) .
	\]
\end{proof}

We will invoke Claim 2 later with $s = \Theta(n / \lg n)$. Note that for any $\eps \geq \Omega(\lg \lg n / \lg n)$, we have that $s \lg (n/s) \leq O(s \lg \lg n) \leq \eps n$, so the loss in entropy is negligible.

\begin{claim} \label{clm_lb_2}
	Let $S, \cL, \Pi_S, \cE$ and $T$ be as defined above. Then $H_\mu(\Pi_S | S) \geq H(\cL | S, \cE, T)$.
\end{claim}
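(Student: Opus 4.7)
The plan is to show that the permutation $\cL$ is a deterministic function of the pair $(S, \Pi_S)$ once we condition on the event $\cE$ and the set $T$, and then to combine this observation with two standard entropy inequalities.

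First I will verify the key combinatorial identity linking $\Pi_S$ and $\cL$. By construction, $J_i$ is the index of the row in the sorted BWT matrix $\cM$ that begins with the block $Y_i = X[(i-1)\ell+1:i\ell]$; equivalently, that row is the cyclic shift of $X$ starting at position $(i-1)\ell+1$. Consequently $F[J_i] = X[(i-1)\ell+1]$, and by the very definition of the BWT-induced permutation $\Pi : F \to X$ we have $\Pi(J_i) = (i-1)\ell+1$. In particular the block index $i$ can be recovered from $\Pi(J_i)$ via $i = (\Pi(J_i)-1)/\ell + 1$. Since $\Pi_S$ records the values of $\Pi$ on the elements of $S$ listed in their canonical (say, increasing) order, knowing the pair $(S, \Pi_S)$ reveals the full bijection $J_i \leftrightarrow i$ between $S$ and $T$. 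By the reformulation of $\cL$ observed right after Claim~\ref{clm-condition-good-event}, namely $L_j = i^{*}$ iff $J_{i^{*}}$ is the $j$-th smallest element of $S$, this bijection together with the sorted order on $S$ determines $\cL$ outright; hence $H(\cL \mid S, \Pi_S, \cE, T) = 0$.

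Next I will apply the chain rule of Shannon entropy. Expanding $H(\cL, \Pi_S \mid S, \cE, T)$ in both orders gives
\[
H(\Pi_S \mid S, \cE, T) + H(\cL \mid S, \Pi_S, \cE, T) \;=\; H(\cL \mid S, \cE, T) + H(\Pi_S \mid S, \cL, \cE, T),
\]
and substituting the vanishing of the second term on the left together with the nonnegativity of the second term on the right yields $H_\mu(\Pi_S \mid S, \cE, T) \geq H(\cL \mid S, \cE, T)$. To finish, observe that conditioning on additional random variables never increases conditional entropy, so $H_\mu(\Pi_S \mid S) \geq H_\mu(\Pi_S \mid S, \cE, T)$; chaining the two inequalities proves the claim.

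The only step that requires real care is the first: verifying that the values $\Pi(J_i)$ uniquely encode the block indices $i$. This is a direct consequence of the combinatorial fact that the rows of $\cM$ are cyclic shifts of $X$ and $F$ is the first column, and once it is established, the remainder reduces to routine manipulations of conditional Shannon entropy.
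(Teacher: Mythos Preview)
Your proof is correct and follows essentially the same approach as the paper's. Both arguments hinge on the observation that $\Pi(J_i) = (i-1)\ell + 1$, so the block index $i$ can be recovered from $\Pi_S$ and $S$, which in turn determines the ordering $\cL$; the remaining entropy manipulations (your explicit chain-rule expansion versus the paper's one-line ``determines'' argument) are equivalent. A cosmetic difference is that the paper also notes $\Pi_S$ determines $T$ itself (not just $\cL$ given $T$), but since you condition on $T$ throughout, this does not affect the validity of your argument.
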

\begin{proof}
	It is enough to show that given $\cE$ and the set $S$, the permutation $\Pi_S$ (restricted to $S$) determines $T$ and the ordering $\cL$ among the blocks $Y_i, i \in T$. But this is true because, for any $j \in S$, the index $\Pi_S(j)$ specifies the position of $F[j]$ in $X$. In particular (as $j \in S$), it specifies the block index $i$ such that $F[j]$ is the first character of $Y_i$, i.e., $j = J_i$. Then $T$ is the set of these block indices. As this mapping is specified for all indices $j \in S$, the set $T$ and the ordering $\cL$ on $T$ are clearly determined by $\Pi_S$, given $S$. So, $H_\mu(\Pi_S | S) \geq H_\mu(\Pi_S | S, \cE) \geq H(\cL | S, \cE, T)$.
\end{proof}

As described earlier, given $\cE$ and $T$, $\cL$ is well-defined without reference to $S$. We now combine the previous claims to reduce the problem of lower-bounding $H_\mu(\Pi_S | S, \Pi(S))$ to that of lower-bounding $H(\cL | \cE, T)$, while losing negligible entropy.

\begin{claim} \label{clm_lb_reduce_pi_to_L}
	Let $\cE, T, \cL, S, \Pi(S)$ and $\Pi_S$ be as defined above. Then $H_\mu(\Pi_S | S, \Pi(S)) \geq H(\cL | \cE, T) - O(s \cdot \lg \lg n)$.
\end{claim}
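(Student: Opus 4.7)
The plan is to chain together the two preceding claims and then remove the conditioning on $S$ on the right-hand side at negligible entropy cost. Concretely, applying Claim \ref{clm_lb_1} followed by Claim \ref{clm_lb_2} immediately yields
\[
H_\mu(\Pi_S \mid S, \Pi(S)) \;\geq\; H_\mu(\Pi_S \mid S) - O(s\lg(n/s)) \;\geq\; H(\cL \mid S, \cE, T) - O(s\lg(n/s)).
\]
So the only remaining task is to show that $H(\cL \mid S, \cE, T) \geq H(\cL \mid \cE, T) - O(s\lg\lg n)$, i.e., that dropping the additional conditioning on $S$ costs only $O(s \lg \lg n)$ bits of entropy. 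This is the step where care is needed.

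To justify this, I would apply the conditional analogue of Fact~\ref{fact_entropy}: for any random variables $A, B$ and any conditioning event/variable, $H(A \mid B, \cE, T) \geq H(A \mid \cE, T) - H(B \mid \cE, T)$, which follows from the chain-rule identity $H(\cL \mid \cE, T) \leq H(\cL, S \mid \cE, T) = H(S \mid \cE, T) + H(\cL \mid S, \cE, T)$. Thus it suffices to upper bound $H(S \mid \cE, T)$. Since $S = \{J_i : i \in T\}$ is an (unordered) subset of $[n]$ of size $s$, we trivially have $H(S \mid \cE, T) \leq \lg \binom{n}{s} = O(s \lg(n/s))$.

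Finally, since the construction gives $s = \Theta(n / \lg n)$, we have $\lg(n/s) = O(\lg \lg n)$, so both loss terms combine to $O(s \lg \lg n)$, establishing the claim. The proof is essentially a bookkeeping step — the main obstacle, if any, is simply ensuring that the conditional entropy manipulation $H(\cL \mid S, \cE, T) \geq H(\cL \mid \cE, T) - H(S \mid \cE, T)$ is stated carefully, since the event $\cE$ and the variable $T$ must be held fixed throughout.
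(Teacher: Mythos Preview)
Your proof is correct and follows essentially the same approach as the paper: chain Claims~\ref{clm_lb_1} and~\ref{clm_lb_2}, then apply (the conditional form of) Fact~\ref{fact_entropy} to drop the conditioning on $S$ at a cost of $H(S \mid \cE, T) \leq \lg\binom{n}{s} = O(s \lg\lg n)$. The paper's version is nearly identical, differing only in that it bounds $H_\mu(S)$ by appealing to the same argument used for $\Pi(S)$ rather than the direct $\lg\binom{n}{s}$ bound you give; both yield $O(s\lg\lg n)$.
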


\begin{proof}
	From Claims \ref{clm_lb_1}, \ref{clm_lb_2}, Fact \ref{fact_entropy} and the fact that $|\Pi(S)| = |S| = s = \Theta(n / \lg n)$, we have
	\begin{align*}
	H_\mu(\Pi_S | S, \Pi(S)) &\geq H_\mu(\Pi_S | S) - O\left(s \cdot \lg \frac{n}{s}\right)\\
	&\geq H(\cL | S, \cE, T) - O(s \cdot \lg \lg n)\\
	&\geq H(\cL | \cE, T) - O(s \cdot \lg \lg n)
	\end{align*}
	For the last inequality, we apply Fact \ref{fact_entropy} with $A = \cL | \cE,T$ and $B = S$, and upper bound $H_\mu(S)$ by $O(s \lg \lg n)$ using the same argument that was used to bound $H_\mu(\Pi(S))$.
\end{proof}

\begin{claim} \label{clm_L_high_entropy} 
	Let $\cE, \cL, T$ be as defined above. Then $H(\cL | \cE, T) = \lg (s!) \geq s \lg n - O(s \lg \lg n)$.
\end{claim}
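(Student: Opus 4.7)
The plan is to show that, conditioned on $\cE$ and $T$, the permutation $\cL$ is uniform on $\cS_s$, so $H(\cL \mid \cE, T) = \lg(s!)$, and then apply Stirling's approximation together with the lower bound on $s$.

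First, I will fix an arbitrary $T_0 \sse [\tilde s]$ with $|T_0| = s$ such that $\Pr[T = T_0, \cE] > 0$, and study the conditional distribution of the block-tuple $(Y_i)_{i \in T_0}$ under $\{T = T_0, \cE\}$. The blocks $(Y_i)_{i \in [\tilde s]}$ are i.i.d.\ uniform on $\zo^\ell$ because the underlying bits of $X$ are i.i.d.\ and the blocks are disjoint. The event $\{T = T_0\}$ says exactly that the blocks indexed by $T_0$ are pairwise distinct and do not equal any block indexed by $[\tilde s] \setminus T_0$, while the event $\cE$ is a function of the total collision count, which is fully determined by the collision pattern inside $[\tilde s]\setminus T_0$ once $T = T_0$ is fixed. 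Consequently, both conditioning events are invariant under any permutation of the values $(Y_i)_{i \in T_0}$, so the conditional distribution of $(Y_i)_{i \in T_0}$ is exchangeable (and supported on tuples of distinct strings).

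Next, I will use exchangeability to conclude that $\cL$ is uniform on $\cS_s$ conditioned on $\{T = T_0, \cE\}$. Indeed, $\cL$ is by definition the unique permutation that sorts the (distinct) blocks $(Y_i)_{i \in T_0}$ lexicographically; for any $\pi \in \cS_s$, the event $\{\cL = \pi\}$ corresponds via relabeling to the same underlying event up to permuting the identities of the blocks in $T_0$, so it has the same conditional probability as $\{\cL = \pi'\}$ for any other $\pi' \in \cS_s$. Therefore $H(\cL \mid T = T_0, \cE) = \lg(s!)$ for every valid $T_0$, and averaging over $T_0$ gives $H(\cL \mid T, \cE) = \lg(s!)$.

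Finally, to convert $\lg(s!)$ into the bound stated in the claim, I will invoke Stirling's approximation $\lg(s!) \geq s\lg s - s\lg e$ together with the lower bound $s \geq (1 - 2\eps) n / \lg n$ established just before Claim~\ref{clm_lb_reduce_pi_to_L}. This yields $\lg s \geq \lg n - \lg\lg n - O(1)$, whence $s \lg s \geq s \lg n - O(s \lg\lg n)$, and the additive $-s \lg e$ term is absorbed into the same $O(s \lg\lg n)$. I do not anticipate any real obstacle here; the only subtle point is the symmetry argument in step two, where it is essential that both $\{T = T_0\}$ and $\cE$ depend on $(Y_i)_{i \in T_0}$ only through the \emph{set} of values they take (in fact only through their pairwise (in)equalities with each other and with the outside blocks), which is what makes the conditional law of $(Y_i)_{i \in T_0}$ exchangeable.
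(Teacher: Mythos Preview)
Your proposal is correct and follows essentially the same approach as the paper. Both arguments exploit the block-permutation symmetry: the paper constructs an explicit bijection $x \mapsto x^\tau$ between strings inducing different orderings $\cL_1, \cL_2$ (with $\tau$ fixing $[\tilde s]\setminus T$), while you phrase the same symmetry as exchangeability of the conditional law of $(Y_i)_{i\in T_0}$; these are two presentations of the identical idea, and the Stirling step is the same.
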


\begin{proof}
	For $X \in \zo^n$ and permutation $\tau \in \cS_{\tilde{s}}$, define $X^\tau \in \zo^n$ to be the string obtained by shifting the entire $i$th substring $Y_i$ of $X$ to the $\tau(i)$th block, for all $i \in [\tilde{s}]$. Formally,  $X^\tau[(\tau(i)-1)\ell + 1: \tau(i) \ell] = X[(i-1) \ell + 1 : i \ell] = Y_i$. We observe that $X$ satisfies $\cE$ if and only if $X^\tau$ satisfies $\cE$. This is because $\cE$ is determined purely by inequalities among pairs of substrings $Y_i$, $Y_j$, which are kept intact by $\tau$ as it permutes entire blocks.
	
	Now, the event $\cE$ is clearly determined by $X$. So, for any fixed string $x \in \{0,1\}^n$, we have that $\Pr[\cE | X = x]$ is either $0$ or $1$. This implies that $\Pr[X = x | \cE] = \frac{\Pr[X = x]}{\Pr[\cE]}$ if $x$ satisfies $\cE$, and $\Pr[X = x | \cE] = 0$ otherwise.
	
	Let $x$ be a string which satisfies $\cE$. Let $\tau \in \cS_{\tilde{s}}$ be any permutation. Then
	\[
	\Pr[X = x \,|\, \cE] = \frac{\Pr[X = x]}{\Pr[\cE]} = 
	\frac{\Pr[X = x^\tau]}{\Pr[\cE]} = \Pr[X = x^\tau\, |\, \cE].
	\]
	The second equality follows from the fact that $X$ is uniform and the last equality follows because $x$ satisfies $\cE$ if and only if $x^\tau$ satisfies $\cE$. Similarly, for any permutation $\tau \in \cS_{\tilde{s}}$ and any string $x$ that does not satisfy $\cE$, we have
	$\Pr[X = x | \cE] = 0 = \Pr[X = x^\tau | \cE]$. This shows that the conditional distribution of $X$, given $\cE$, is still uniform over a subset of $\{0, 1\}^n$ of size $2^n (1 - o(1))$. 
	
	We show that, given $\cE$ and any fixed subset $T \subset [\tilde{s}]$ of size $s \geq n(1 - 2\eps) / \lg n$ such that the $\ell$-length substrings indexed by $T$ are all distinct, all $s!$ orderings $\cL$ on $T$ are equally likely. As $\cL$ is defined with respect to the lexicographical ordering among the blocks $Y_i, i \in T$, we only consider permutations $\tau \in \cS_{\tilde{s}}$ which fix indices outside $T$, i.e., $\tau(i) = i$ for all $i \in [\tilde{s}] \setminus T$. Fix any two distinct orderings $\cL_1, \cL_2 \in \cS_s$. For $i \in \{1,2\}$, let $\Lambda_i^T \subset \{0,1\}^n$ be the set of strings $x$ that satisfy event $\cE$ with respect to the set $T$ (i.e., the set of substrings $x[(i-1)\ell +1 : i \ell])$ are distinct for all $i \in T$), and give rise to the ordering $\cL_i$ on $T$. We show that $|\Lambda_1^T| = |\Lambda_2^T|$ by exhibiting a bijection $f : \Lambda_1^T \rightarrow \Lambda_2^T$.
	
	Let $\tau \in \cS_{\tilde{s}}$ be the unique permutation which converts $\cL_1$ to $\cL_2$, and fixes indices outside $T$. For $x \in \Lambda_1^T$, define $f(x) = x^\tau$. Then $f(x)$ satisfies $\cE$ as $x$ satisfies $\cE$. Moreover, as $x$ induces the order $\cL_1$ on $T$, $f(x)$ must induce the order $\cL_2$ on $T$. Hence, $f(x) \in \Lambda_2^T$. Clearly, this map is one-one. We observe that the inverse map $f^{-1} : \Lambda_2^T \rightarrow \Lambda_1^T$ is given by $f^{-1}(x) = x^{\tau^{-1}}$. So, $f$ is a bijection, and hence $|\Lambda_1^T| = |\Lambda_2^T|$. But any string in $\Lambda_1^T \cup \Lambda_2^T$ is equally probable under the distribution of $X$ conditioned on $\cE$. So, the probability of the induced ordering $\cL$ on $T$ being $\cL_1$ or $\cL_2$ is equal, i.e., $\Pr[\cL = \cL_1\, |\, \cE, T] = \Pr[\cL = \cL_2\, |\, \cE, T]$. As this is true for 
	all pairs of orderings on $T$, we have that $\cL$ is uniform on $s!$ orderings, given $\cE, T$. As $s = \Theta(n / \lg n)$, we have
	\[
	H(\cL | \cE, T) = \lg (s!) \geq s \lg s - O(s) \geq s \lg n - O(s \lg \lg n).
	\]
\end{proof}

We conclude the proof of Lemma \ref{lem_ent_var} by combining Claims \ref{clm_lb_reduce_pi_to_L} and \ref{clm_L_high_entropy}, and replacing $\eps$ by $\eps/3$:
\[
H_\mu(\Pi_S | S, \Pi(S)) \geq H(\cL | \cE, T) - O(s \cdot \lg \lg n) \geq s \lg n - O(s \cdot \lg \lg n) \geq n(1 - 2 \eps) - O(s \cdot \lg \lg n) \geq n(1 - 3 \eps).
\]

\subsection{Proof of Theorem \ref{thm_LB_general}}

\begin{proof}

	Let $\Pi \sim_\mu \cS_n$ and let $S=S(\Pi) \subseteq [n]$ be the subset of $|S|= \gamma$ 
	queries satisfying the premise of the theorem. 
	Let $D$ be a data structure for $\textsc{Perms}_n$ under $\mu$ which 
	correctly computes the (forward and inverse) answers 
	with query times $t,q$ respectively (satisfying the bounds in the theorem statement), 
	and 
	$s = h + r'$ words of space, where 
	$h = H_\mu(\Pi) / w$ and $r' = r/w$.
	
	The idea is to adapt Golynski's argument \cite{Golynski} to the restricted set of ``high entropy" queries 
	$(S,\Pi(S))$ and use $D$ in order to efficiently encode the answers of $\Pi_S$,  
	while the remaining answers ($\Pi_{\overline{S}}$) can be encoded in a standard way using 
	$\sim H_\mu(\Pi_{\bar{S}}| S, \Pi_S)$ extra bits (as this conditional entropy is very small by the premise).
	
	To this end, suppose Alice is given $\Pi \sim \mu$ and $S=S(\Pi)$. 
	Alice will first use $D_S$ to send $\Pi_S$, along with the sets $S, \Pi(S) \subset [n]$, to Bob. 
	The subsequent encoding argument proceeds in stages, where in each stage we delete one ``unpopular cell'' of $D_S$, i.e., a cell 
	which is accessed by \emph{few} forward and inverse queries in $S,\Pi(S)$, and protect some other cells from deletion 
	in future stages. The number of stages $z$ is set so that at the end, the number of remaining unprotected cells (i.e., the number 
	of cells which were not protected nor deleted in any stage) is at least $h/2$. 
	
	Following \cite{Golynski}, for any cell $l \in [h + r']$ we let $F_S(l)$ and $I_{\Pi(S)}(l)
	$ denote the subset of forward and inverse queries (in $S,\Pi(S)$ respectively) which probe $l$. 
	Fix a stage $k$. Let $C^S_k$ denote the \emph{remaining} cells at the end of stage $k$ 
	(i.e., the set of undeleted unprotected cells) probed by any query $i \in S \cup \Pi(S)$. 
	So $|C_0| = h + r'$ and $|C^S_z| \geq \frac{h}{2}$. 
	For a query $i\in S$, let $R_k(i)$ denote the set of remaining cells at stage $k$ probed by $i$.
	The average number of forward queries in $S$ that probe a particular cell is
	\[
	\frac{1}{|C^S_k|} \sum_{l \in C^S_k} |F_S(l)| = \frac{1}{|C^S_k|} \sum_{i \in S} |R_k(i)| \leq \frac{|S|\,t}{|C^S_k|} = \frac{\gamma \, t}{|C^S_k|}.
	\]
	As such, there are at most $|C^S_k|/3$ cells which are probed by at least $3 \frac{\gamma \, t}{|C^S_k|}$ forward queries in $S$. 
	By an analogous argument, there are at most $|C^S_k|/3$ cells which are probed by at least $3 \frac{\gamma \, q}{|C^S_k|}$ 
	inverse queries in $\Pi(S)$. Hence, we can find a cell $d_k \in C^S_k$ which is probed by at most $\beta
	:= 3 \frac{\gamma \, t}{|C^S_k|}$ 
	forward queries in $S$ and $\beta
	' := 3 \frac{\gamma \, q}{|C^S_k|}$ inverse queries in $\Pi(S)$. As $|C^S_k| \geq h/2$, we have that 
	$\beta
	\leq 6 \frac{\gamma \, t}{h}$ and $\beta' \leq 6 \frac{\gamma \, q}{h}$. We delete (the contents of) this cell $d_k$. Let 
	\[
	F_S'(d_k) := F_S(d_k) \cap \Pi^{-1}(I_{\Pi(S)}(d_k))
	\]
	denote the set of forward queries $i \in S$ such that both $i$ and the reciprocal $j := \Pi(i) \in \Pi(S)$ probe the deleted cell $d_k$ 
	(note that $F_S'(d_k)$ is well defined even when $D_S$ is adaptive, given the input $\Pi$).  
	Define $I_{\Pi(S)}'(d_k)$ analogously. We say that the queries in $F'_S(d_k)$ and $I'_{\Pi(S)}(d_k)$ are \emph{non-recoverable}.
	Note that, by definition, $\Pi$ defines a bijection between 
	$F'_S(d_k)$ and $I'_{\Pi(S)}$, i.e., 
	\begin{align}\label{prop_iv_remaining_cell}
	\Pi(F_S'(d_k)) = I_{\Pi(S)}'(d_k).
	\end{align}
	In order to ``preserve'' the information in $d_k$, Alice will send Bob an array $A_{d_k}$ of $\beta$ 
	entries, each of size $\lg \beta'$ bits, encoding the bijection 
	$\Pi_{F_S'(d_k)} : F_S'(d_k) \longleftrightarrow I_{\Pi(S)}'(d_k)$ 
	so that Bob can recover the answer to a query pair $(i, j)$ in the event that both $i$ and $j=\Pi(i)$ access 
	the deleted cell $d_k$.  Let $P(d_k) := \bigcup_{i\in \Pi(F_S(d_k)) \cup \Pi^{-1}(I_{\Pi(S)}(d_k))} R_k(i) \setminus \{d_k\}$ 
	denote the union of all remaining cells probed by ``reciprocal" queries to queries that probe $d_k$, 
	excluding $d_k$ itself.  
	To ensure the aforementioned  bijection 
	can be encoded ``locally" (i.e., using only $\beta\lg \beta'$ bits instead of $\beta \lg n$), 
	we \emph{protect} the cells in $P(d_k)$ from any future deletion. This is crucial, as it guarantees that 
	any query is associated with at most \emph{one} deleted cell. 
	
	The number of protected cells is at most $|P(d_k)| \leq t|I_{\Pi(S)}(d_k)| + q|F_S(d_k)| \leq t\beta' + q\beta \leq O(\gamma \, tq/h)$, 
	since $F_S(d_k) \cup I_{\Pi(S)}(d_k)$ contains at most $|F_S(d_k)| \leq \beta$ forward queries in $S$ and $|I_{\Pi(S)}(d_k)| \leq \beta'$ inverse queries in $\Pi(S)$. 
	This implies by direct calculation (\cite{Golynski}) that the number of stages $z$ that can be performed 
	while satisfying $|C^S_z| \geq h/2$, is 
	$$z = \Theta(h/q\beta) =  \Theta(h/t\beta'), $$ 
	since $q\beta= t\beta'$.
	Note that by \eqref{prop_iv_remaining_cell}, sending the bijection array $A_{d_k}$ can be done using 
	(at most) $\beta\lg \beta'$ bits, by storing, for each query in $F'_S(d_k)$, the corresponding index 
	from $I'_{\Pi(S)}(d_k)$ (where the sets are sorted in lexicographical order), since every query is associated 
	with at most one deleted cell. 
	Let $\cA$ denote the concatenation of all the arrays $\{A_{d_k}\}_{k\in[z]}$ of deleted cells in the $z$ stages 
	(occupying at most $z\beta\lg \beta'$ bits), 
	$\cL$ denote the locations of all deleted cells (which can be sent using at most 
	$\lg {h+r' \choose z} \leq z\lg(O(h)/z) = z (\lg (\beta q) + O(1))$ bits, assuming $r' = O(h)$), and $\cR$ denote the contents of remaining cells, 
	which occupy $$h+r'-z$$ words. 
	Alice sends Bob $M := (\cA,\cR,\cL)$, along with the explicit sets $S,\Pi(S)$. Alice also sends Bob the answers to the forward queries outside $S$, conditioned on $S$ and the answers $\Pi_S$ on $S$, using at most $H_\mu(\Pi_{\bar{S}} | S, \Pi_S) + 1$ bits using standard Huffman coding. Let this message be $\cP$.
	
	Assume w.l.o.g that $t \leq q$, hence $\beta\leq \beta'$.
	Recalling that $\beta' \leq 6 \frac{\gamma \, q}{h}$, the premises of the lemma 
	\[ q \leq \min\left\{2^{w/5}, \frac{1}{32}\cdot \frac{\alpha}{\lg w} \right\}, \text{ and \;\;}  h \geq \frac{1}{w} H_\mu(\Pi_S | S, \Pi(S)) = \frac{\gamma \cdot \alpha}{w} \]    
	imply that $\beta' \leq \frac{6 \, q\,\gamma \,w}{\gamma \,\alpha} \leq \frac{6\,\alpha\, w}{32\,\alpha\, \lg w} < \frac{w}{5 \lg w}$, hence 
	the total cost of sending $\cA$ is at most 
	$$z\beta\lg \beta' \leq z\beta'\lg \beta' \leq z \frac{w}{5 \lg w} \cdot \lg w = zw/5$$ bits. 
	Since $\lg q \leq w/5$ by assumption, the cost of sending $\cL$ is at most 
	$z(\lg (\beta q) + O(1)) \leq zw/5 + z\lg w + O(z) = zw/5 +o(zw)$ bits.  
	Sending the sets $S,\Pi(S)$ 
	can be done using at most $2|S| \lg (en/|S|) = 2 \gamma \lg(en/\gamma)$ bits. Now, as $tq \leq \frac{\delta \cdot \alpha^2}{w \lg (en / \gamma)}$ for small enough $\delta > 0$, we have
	\[
	zw = \Theta\left(\frac{hw}{q \beta}\right) \geq \Omega\left(\frac{h^2 \,w}{\gamma \,t \, q}\right) \geq \Omega\left(\frac{\gamma \, \alpha^2}{w \, t \, q}\right) \geq 10 \gamma \lg (en/\gamma).
	\]
	So, Alice can send $S, \Pi(S)$ using at most $zw/5$ bits. Finally, the message $\cP$ requires (up to $1$ bit)
	\[
	H_\mu(\Pi_{\bar{S}} | S, \Pi_S) \leq \eps H_\mu(\Pi) \leq 2 \eps \alpha \gamma  \leq \frac{\delta \, \gamma \, \alpha^2}{w \, tq} \leq \frac{zw}{5}
	\]
	bits in expectation with respect to $\mu$. Thus, together with the cost of $\cR$ and $\cP$, Alice's total expected message size is at most 
	$(h+r'-z)w + 4zw/5$ bits, or $h+r'-z + 4z/5 = h + r' - z/5$ words (up to $o(z)$ terms). But the minimum expected length of Alice's message must be $h$ words. 
	This implies that the redundancy \emph{in bits} must be at least \[ r = r' w \geq \Omega(zw)\] 
	in expectation, which, on substituting $z = \Theta(h/q\beta)$, yields the claimed lower bound on $r$, 
	assuming Bob can recover $\Pi$ from Alice's message, which we argue below. 
	
	To decode $\Pi_S := \{\Pi(i)\}_{i\in S}$ given $S$ and Alice's message $M$, Bob proceeds as follows: 
	He first fills an empty memory of $h+r'$ words with the contents of $\cR$ he receives from Alice, leaving 
	all deleted cells empty. Let $A,A'$ denote the forward (resp. inverse) query algorithms of $D_S$ (note that 
	$A,A'$ are defined for \emph{all} queries in $[n]$). Bob simulates the forward query algorithm $A(i)$ on all forward 
	queries in $S$ and the inverse query algorithm $A'(j)$ on all inverse queries $j\in \Pi(S)$. If a query $i \in S$ \emph{fails} (i.e., probes some deleted cell) but $A'$ on some inverse query $j \in \Pi(S)$ does not fail and returns $A'(j) = i$, he can infer the answer $\Pi(i) = j$. Similarly, he can infer the answer to an inverse query which fails, if its corresponding forward query does not fail. So, we focus on the non-recoverable queries in $S$ and $\Pi(S)$.
	
	If a query $i\in S$ \emph{fails}, he finds the \emph{first} deleted 
	cell $d$ probed by $A(i)$, and lexicographically lists \emph{all} the queries in $F'_S(d)$ for which $d$ is the 
	\emph{first} deleted cell. 
	Similarly,
	he lexicographically lists the set of all inverse queries in $I'_{\Pi(S)}(d)$ 
	whose first deleted cell is $d$. 
	He then uses the array $A_{d}$, which stores the bijection between the non-recoverable forward queries in 
	$F'_S(d)$ and the non-recoverable inverse queries in $I'_{\Pi(S)}(d)$, to answer these non-recoverable forward queries. In an analogous manner, he answers the non-recoverable inverse queries.
	Note that we crucially use the fact that each query accesses at most one deleted cell.
	
	Finally, Bob uses $\cP$ along with the answers to queries in $S$ and $\Pi(S)$ to answer all forward queries in $\bar{S}$. At this point, since he knows the answers to all forward queries, he can recover $\Pi$, and answer all inverse queries in $\bar{\Pi}(S)$ as well.
\end{proof}

\bibliographystyle{plain}
\bibliography{localbwt_online}

\begin{appendix}

	
	
	\section{Proof of Fact \ref{lem_LF} (LF Mapping Property)} \label{sec_LF}
	We prove the first equality in Fact \ref{lem_LF}. The intuition is that the order among distinct occurrences of a character $c$ in both $F$ and $L$ are decided by the character's \emph{right-context} in $x$, and so they must be the same.
	
	\begin{proof}
	Fix $c \in \Sigma$. Consider any two distinct occurrences of $c$ in $x$, at positions $k_1, k_2 \in [n]$ respectively. For $\alpha \in \{1,2\}$, let $i_\alpha, j_\alpha \in [n]$ be indices such that the BWT maps $x[k_\alpha]$ to position $i_\alpha$ in $L$ and position $j_\alpha$ in $F$, i.e., $x[k_\alpha] = L[i_\alpha] = F[j_\alpha] = c$. Then it suffices to prove that $i_1 < i_2$ if and only if $j_1 < j_2$, because the ordering among all occurrences of $c$ is determined by the relative ordering among all pairs of occurrences of $c$.
	
	For $\alpha \in \{1,2\}$, let $y_\alpha = x[k_\alpha + 1, k_\alpha + 2, \cdots, n, 1, \cdots, k_\alpha - 1]$ denote the \emph{right-context} of $x[k_\alpha]$ in $x$, which corresponds to the cyclic shift of $x$ by $\alpha$ positions (and then excluding $x[k_\alpha]$). For $\beta \in [n]$, let $\cM[\beta]$ denote row $\beta$ of the BWT matrix $\cM$. Then it is easy to see that $\cM[i_\alpha] = (y_\alpha, c)$ and $\cM[j_\alpha] = (c, y_\alpha)$, for $\alpha \in \{1,2\}$. We write $z_1 \prec z_2$ below to mean that $z_1$ is smaller than $z_2$ according to the lexicographical order on $\Sigma$.
	
	Assume $i_1 < i_2$. From this assumption and the fact that the rows of $\cM$ are sorted lexicographically, we have $(y_1, c) \prec (y_2, c)$. But this implies that $y_1 \prec y_2$ lexicographically, as both strings are of equal length and end with $c$. So, we have 
	\[
		\cM[j_1] = (c, y_1) \prec (c, y_2) = \cM[j_2].
	\]
	We conclude that $j_1 < j_2$ if $i_1 < i_2$. Clearly, the converse also holds. Thus, the relative ordering between any two occurrences of $c$ is the same in $F$ and $L$. This concludes the proof of Fact \ref{lem_LF}.
	\end{proof}

	\section{The $\HRLX$ Benchmark and Comparison to Other Compressors} \label{sec_app_RLX}
	A theoretical justification of the $\HRLX$ space benchmark was first given by \cite{Manzini99,FM}, where it was proved that 
	$\HRLX(\BWT(x))$ approaches the infinite-order empirical entropy of $x$
	(even under the weaker version of~\cite{FM} where the final arithmetic coding stage (3) is excluded), namely, that for \emph{any} 
	constant $k\geq 0$, \[ |\HRLX(\BWT(x))| \leq 5\cdot H_k(x) +O(\lg n).\]
	Several other compression methods were subsequently proposed  for compressing $\BWT(x)$ 
	(e.g., \cite{DC00,FGM_wavelet,FGMS_Booster}), some of which achieving better (essentially optimal) worst-case 
	theoretical guarantees with respect to $H_k(x)$, albeit at the price of an $\Omega(n/\lg n)$ or even $\Omega(n)$ 
	additive factor, which becomes the dominant term in the interesting regime of compressed text indexing \cite{KM99,FT}.  
	The same caveat holds for other entropy-coding methods such as LZ77, LZ78 and PPMC~\cite{LZ78, LZ78, Moffat90}, 
	confirming experimental results which demonstrated the superiority of BWT-based text compression \cite{Effros02,Kaplan,Manzini99}.  
	Indeed, Kaplan et. al \cite{Kaplan,Kaplan_lower_bound} observed that, despite failing to converge \emph{exactly} to $H_k$, 
	distance-based and MTF compression of BWT such as $\HRLX$ tend to outperform other entropy coding methods (especially in the 
	presence of long-term correlations as in English text).  
	$\HRLX$ is the basis of the \texttt{bzip2} program \cite{bzip2}. 
	For further elaboration 
	we refer the reader to \cite{Manzini99,FGMS_Booster}.

\end{appendix}

\end{document}